\renewcommand\tableofcontents{%
    \@starttoc{toc}%
}
\newlist{thmlist}{enumerate}{1}
\setlist[thmlist]{label=(\roman{thmlisti}), ref=\thetheorem(\roman{thmlisti})}
\definecolor{larrypurple}{RGB}{144, 3, 252}
\newtheorem{theorem}{Theorem}
\Crefname{theorem}{Theorem}{Theorems}
\Crefname{lemma}{Lemma}{Lemmas}
\let\save@mathaccent\mathaccent
\newcommand*\if@single[3]{%
  \setbox0\hbox{${\mathaccent"0362{#1}}^H$}%
  \setbox2\hbox{${\mathaccent"0362{\kern0pt#1}}^H$}%
  \ifdim\ht0=\ht2 #3\else #2\fi
  }
\newcommand*\rel@kern[1]{\kern#1\dimexpr\macc@kerna}
\newcommand*\widebar[1]{\@ifnextchar^{{\wide@bar{#1}{0}}}{\wide@bar{#1}{1}}}
\newcommand*\wide@bar[2]{\if@single{#1}{\wide@bar@{#1}{#2}{1}}{\wide@bar@{#1}{#2}{2}}}
\newcommand*\wide@bar@[3]{%
  \begingroup
  \def\mathaccent##1##2{%
%Enable nesting of accents:
    \let\mathaccent\save@mathaccent
%If there's more than a single symbol, use the first character instead (see below):
    \if#32 \let\macc@nucleus\first@char \fi
%Determine the italic correction:
    \setbox\z@\hbox{$\macc@style{\macc@nucleus}_{}$}%
    \setbox\tw@\hbox{$\macc@style{\macc@nucleus}{}_{}$}%
    \dimen@\wd\tw@
    \advance\dimen@-\wd\z@
%Now \dimen@ is the italic correction of the symbol.
    \divide\dimen@ 3
    \@tempdima\wd\tw@
    \advance\@tempdima-\scriptspace
%Now \@tempdima is the width of the symbol.
    \divide\@tempdima 10
    \advance\dimen@-\@tempdima
%Now \dimen@ = (italic correction / 3) - (Breite / 10)
    \ifdim\dimen@>\z@ \dimen@0pt\fi
%The bar will be shortened in the case \dimen@<0 !
    \rel@kern{0.6}\kern-\dimen@
    \if#31
      \overline{\rel@kern{-0.6}\kern\dimen@\macc@nucleus\rel@kern{0.4}\kern\dimen@}%
      \advance\dimen@0.4\dimexpr\macc@kerna
%Place the combined final kern (-\dimen@) if it is >0 or if a superscript follows:
      \let\final@kern#2%
      \ifdim\dimen@<\z@ \let\final@kern1\fi
      \if\final@kern1 \kern-\dimen@\fi
    \else
      \overline{\rel@kern{-0.6}\kern\dimen@#1}%
    \fi
  }%
  \macc@depth\@ne
  \let\math@bgroup\@empty \let\math@egroup\macc@set@skewchar
  \mathsurround\z@ \frozen@everymath{\mathgroup\macc@group\relax}%
  \macc@set@skewchar\relax
  \let\mathaccentV\macc@nested@a
%The following initialises \macc@kerna and calls \mathaccent:
  \if#31
    \macc@nested@a\relax111{#1}%
  \else
%If the argument consists of more than one symbol, and if the first token is
%a letter, use that letter for the computations:
    \def\gobble@till@marker##1\endmarker{}%
    \futurelet\first@char\gobble@till@marker#1\endmarker
    \ifcat\noexpand\first@char A\else
      \def\first@char{}%
    \fi
    \macc@nested@a\relax111{\first@char}%
  \fi
  \endgroup
}
\def\indep{\perp\!\!\!\perp}
\def\E{\mathbb{E}}
\def\Var{\text{Var}}
\def\one{\mathbbm{1}}
\def\d{\,d}
\def\calR{\mathcal{R}}
\def\za2{z_{\alpha/2}}
\renewcommand{\leq}{\leqslant}
\renewcommand{\geq}{\geqslant}
\def\opt{\text{opt}}
\def\uia{u_{i, \mathrm{A}}}
\def\uit{u_{i, \mathrm{T}}}
\def\uiw{u_{i, \mathrm{W}}}
\def\Uia{U_{i, \mathrm{A}}}
\def\Uit{U_{i, \mathrm{T}}}
\def\Uiw{U_{i, \mathrm{W}}}
\def\Uix{U_{i, \mathrm{X}}}
\def\Uiy{U_{i, \mathrm{Y}}}
\def\ua{u_{\mathrm{A}}}
\def\ut{u_{\mathrm{T}}}
\def\uw{u_{\mathrm{W}}}
\def\Ua{U_{\mathrm{A}}}
\def\Ut{U_{\mathrm{T}}}
\def\Uw{U_{\mathrm{W}}}
\def\Uy{U_{\mathrm{Y}}}
\def\phia{\phi_{\mathrm{A}}}
\def\phit{\phi_{\mathrm{T}}}
\def\phiw{\phi_{\mathrm{W}}}
\def\phix{\phi_{\mathrm{X}}}
\def\phiy{\phi_{\mathrm{Y}}}
\def\phiu{\phi_{\mathrm{U}}}
\newcolumntype{C}[1]{>{\centering\arraybackslash}p{#1}}
\begin{document}

\large{

\title{\bf Estimating Optimal Dynamic Treatment Regimes Using Irregularly Observed Data: A Target Trial Emulation and Bayesian Joint Modeling Approach}
  \author{
      Larry Dong$^{1, 2}$, Eleanor Pullenayegum$^{1, 2}$, Rodolphe Thi{\'e}baut$^{3, 4}$ and Olli Saarela$^{1}$
  }
  \date{
      \small
      $^1$Dalla Lana School of Public Health, University of Toronto, Toronto, Canada\\
      $^2$Child Health Evaluation Services, The Hospital for Sick Children, Toronto, Canada\\
      $^3$Department of Public Health, University of Bordeaux, Inserm U1219 Bordeaux Population Health Research Center, Inria Statistics in Systems Biology and Translational Medicine (SISTM), Bordeaux, France\\
      $^4$Department of Medical information, CHU Bordeaux, Bordeaux, France
    }
  \maketitle
}

\begin{abstract}
    An optimal dynamic treatment regime (DTR) is a sequence of decision rules aimed at providing the best course of treatments individualized to patients. While conventional DTR estimation uses longitudinal data, such data can also be irregular, where patient-level variables can affect visit times, treatment assignments and outcomes. In this work, we first extend the target trial framework -- a paradigm to estimate statistical estimands specified under hypothetical randomized trials using observational data -- to the DTR context; this extension allows treatment regimes to be defined with intervenable visit times. We propose an adapted version of G-computation marginalizing over random effects for rewards that encapsulate a treatment strategy's value. To estimate components of the G-computation formula, we then articulate a Bayesian joint model to handle correlated random effects between the outcome, visit and treatment processes. We show via simulation studies that, in the estimation of regime rewards, failure to account for the observational treatment and visit processes produces bias which can be removed through joint modeling. We also apply our proposed method on data from INSPIRE 2 and 3 studies to estimate optimal injection cycles of Interleukin 7 to treat HIV-infected individuals.
\end{abstract}

\noindent%
{\bf Keywords:} Adaptive treatment strategies, informative visit times, G-computation, correlated random effects %Monte Carlo integration % Dynamic treatment regimes, irregularly observed data, target trial emulation, Bayesian joint modeling, G-computation

\newpage

\tableofcontents

\newpage

\section{Introduction}

A dynamic treatment regime (DTR), also known as an adaptive treatment strategy, is a sequence of deterministic decision rules which, at every stage, assigns a treatment given an individual's available history \citep{murphy2003optimal, moodie2007demystifying}. In particular, interest lies in estimating the optimal DTR, which maximizes a quantity that can comprise of long-term and short-term outcomes \citep{murphy2003optimal}. In the context of precision medicine, these regimes allow decision-makers to tailor treatments with respect to a patient's evolving health status and response to their previous treatments. %For instance, when considering chronic diseases or conditions with progressive stages, a ``one-size-fits-all'' approach often yields sub-optimal patient-centric outcomes.

Sequential Multiple Assignment Randomized Trials (SMART) are study designs in which participants are randomized at every stage of a treatment sequence \citep{lavori2000design, lei2012smart}. SMARTs are ideal for developing DTRs but such trials are often impractical in practice due to logistical, ethical, and financial constraints \citep{wallace2016smart}. As a result, researchers often rely on observational data to infer causal estimands and optimal DTRs \citep{lavori2000design, lei2012smart}. G-computation and marginal structural models (MSMs) have been used to estimate causal quantities with observational data, accounting for time-varying confounders \citep{hernan2020causal, robins2000marginal}. However, observational longitudinal data can also be subject to irregular observation times, which can introduce bias if not properly accounted for \citep{zeger1986longitudinal, pullenayegum2016longitudinal}. Methods to account for irregularly observed data are generally built on generalized estimating equations (GEE) or joint modeling \citep{pullenayegum2016longitudinal, gasparini2020mixed}. In the former, bias induced by informative visit times can be corrected using inverse intensity weights (IIW). In the Bayesian paradigm, but not limited to it, methods for accounting for irregularly observed data have relied on shared or correlated random effects by jointly modeling the outcome and visit processes \citep{farewell2017ignorability, ryu2007longitudinal}.

Bayesian methods have increasingly been applied to the estimation of DTRs, offering flexible frameworks to account for time-varying confounders and the complexities of longitudinal data. These approaches include two-step quasi-Bayesian methods, Bayesian extensions of MSMs, and techniques that utilize inverse probability treatment weighting (IPTW) within a Bayesian framework \citep{hoshino2008bayesian, keil2018bayesian, saarela2015bayesian}. Recent advances have focused on estimating optimal DTRs, mainly using Bayesian methods, tailored for irregularly observed data \citep{coulombe2023estimating, guan2020bayesian, hua2022personalized, oganisian2022bayesian}. However, despite such recent methodological developments, there is limited work on using Bayesian joint models with correlated random effects with irregularly observed data to estimate optimal DTRs with intervenable visit times. One way to address this methodological problem is to build on the framework of the target trial, a hypothetical randomized trial whose purpose is to define causal quantities of interest such that they can be estimated using observational data \citep{hernan2011great, hernan2016using}. The DTR estimation problem has usually been formulated in the potential outcomes framework. Alternatively, the DTR estimation problem based on observational, possibly irregular, longitudinal data could be framed with respect to an ideal SMART by extending the target trial emulation framework. However, it appears that this has not been fully explored in the literature. In our work, we address these gaps by introducing a new yet parsimonious target trial notation and we show how optimal DTRs can be estimated using Bayesian joint modeling when sub-processes affect each other.

\section{Methods}
\label{section:methods}

\subsection{Notation}
\label{subsection:notation}

Akin to \cite{yiu2022causal}, we begin by simultaneously using the target trial paradigm from \cite{hernan2016using} and the mathematical notation from \cite{dawid2010identifying}'s decision-theoretic framework. The target trial concept provides a formal causal framework that allowing explicit definition of each component of the trial under which we wish to define our causal estimand of interest \citep{hernan2016using, hernan2022target}. We then extend these frameworks by incorporating visit times information into both the target trial concept and the decision-theoretic mathematical notation.

% While conventional notation to define DTRs have been developed for sequential treatments, but requires extending to handle visit time information \citep{dawid2010identifying, yiu2022causal}. 

We denote outcomes $Y_{ij}$, covariates $X_{ij}$ with $W_{ij} = (Y_{ij}, X_{ij})$, treatments $A_{ij}$ and visit times $T_{ij}$ where $j \in \mathbb{N}$ indexes the observation number of an individual $i \in \mathbb{N}$. We denote $U_i = (\Uiy, \Uix, \Uia, \Uit)$ to be individual-specific random effects corresponding to each process and $\zeta_i$ to be the right-censoring time. Hereon, unless we talk about estimation procedures, we omit that subscript $i$ for notational simplicity, with all defined variables understood to be specific to any individual $i \in \mathbb{N}$. The overbar notation, e.g. $\overline{t}_{j} = (t_{1}, \dots, t_{j})$, denotes a set of accrued variables and the underline notation, e.g. $\underline{t}_j = (t_j, \dots, t_{K+1})$, represents accrued future variables; without loss of generality, we assume that $\overline{t}_0 = \emptyset$ and the same for other variables. DTRs consist of decision functions $d_j$ where the corresponding treatment $d_j(h_j) = a_j \in \Psi_j(h_j)$ is assigned according to the history $h_j = (\overline{t}_j, \overline{w}_{j}, \overline{a}_{j-1})$ among possible options $\Psi_j(h_j)$, also referred as the feasible set \citep{tsiatis2019dynamic}. Furthermore, we define the optimal midstream DTR $\underline{d}_k^{\opt} = \big(d^{\opt}_{k}, \dots, d_{K}^{\opt}\big)$ maximizing a reward conditional on past history $H_{k} = h_k$. Optimal stage $j$ decisions $d_j^{\opt}$ are defined recursively assuming that future decisions $d_{j+1}^\opt, \dots, d_K^\opt$ are optimal. % denoted by $\calR_{k}(h_k, \underline{t}_{k+1}, \underline{d}_k, \phi_\calR)$, which is conditional on history $H_{ik}=h_k$ accrued from $k$ visits, future visit times $\underline{t}_{k+1} = (t_{k+1}, t_{K+1})$

\subsection{Target Trial Definition}
\label{subsection:target-trial-definition}

We define $\calR_{k}(h_k, \underline{t}_{k+1}, \underline{d}_k, \phi_\calR)$ as the reward which, conditional on past history $H_{k} = h_k$, is a function of expected outcomes from stages $k+1$ to $K+1$; we use $\phi_\calR$ to represent reward parameters which we will expand in Section~\ref{subsection:relate-e-o}:
\begin{align*}
    % \calR^\mathrm{M}(\overline{t}_{K+1}, \overline{d}_{K}, \phi_\calR) &= \sum_{j=2}^{K+1} \gamma_j \E_\calE\left[Y_{ij} \mid \overline{T}_{ij} = \overline{t}_j, \overline{A}_{i(j-1)} = \overline{d}_{j-1}, \phi_\calR \right]\\
    \calR_{k}(h_k, \underline{t}_{k+1}, \underline{d}_{k}, \phi_\calR) &= \sum_{j=k+1}^{K+1} \gamma_j \E_\calE\left[Y_{j} \mid \overline{T}_{j} = \overline{t}_j, \overline{A}_{j-1} = \overline{d}_{j-1}, \overline{W}_{k}=\overline{w}_{k}, \phi_\calR \right] \,,
\end{align*}
for any $i \in \mathbb{N}$. The conditional statement $A_j = d_j$ means that the random variable $A_j$ is assigned the treatment value of $d_j(h_j)$ and $\overline{A}_{j-1} = \overline{d}_{j-1}$ means that the first $k-1$ treatments $\overline{a}_{k-1} \subset h_k$ are conditioned on and future treatments $a_{k}, \dots, a_j$ are assigned according to the decision rules $d_{k}, \dots, d_j$. Notice that $\calR_{k}(h_k, \underline{t}_{k+1}, \underline{d}_{k}, \phi_\calR)$ is marginal over $W_{k+1}, \dots, W_{K+1}$ but are conditional on $\overline{W}_{k} = \overline{w}_k \subset h_k$. With the reward defined under $\calE$ given $\phi_\calR$, $d_j^\opt$ can be defined recursively through backwards induction where we are primarily interested in the optimal decision function under true parameters $\phi_{\calR 0}$, i.e. $d_j^{\opt}(h_j, \phi_{\calR 0})$:
\begin{align*}
    d_j^{\opt}(h_j, \phi_{\calR}) &= \argmax_{a_j \in \Psi_j(h_j)} \, \calR_{j}\Big(h_j, \underline{t}_{j+1}, \big(a_j, \underline{d}_{j+1}^{\opt}\big), \phi_{\calR} \Big) \quad \text{for } j = k, \dots, K - 1 \, ,
\end{align*}
with $d_K^{\opt}(h_K, \phi_{\calR}) = \argmax_{a_K \in \Psi_K(h_K)} \, \calR_{K}\Big(h_K, t_{K+1}, \overline{a}_K, \phi_{\calR} \Big)$ being the terminal case. We expand on the recursive nature of $d_j^\opt$'s in Appendix~\ref{appendix-1:optimal-decision-functions}.

\begin{figure}[!ht]
    \centering
    \begin{minipage}{.5\textwidth}
        \begin{center}
            \begin{tikzpicture}[scale=0.6, every node/.style={scale=0.6}, >=stealth, shorten >=1pt, main node/.style={circle,draw,font=\large\bfseries}]
                \node[main node] at (-4, 0) (W1) {$W_{1}$};
                \node[main node] at (-3, -3) (A1) {$A_{1}$};
                \node[main node] at (-4.5, -5) (T1) {$T_{1}$};
                \node[main node] at (1, 0) (W2) {$W_{2}$};
                \node[main node] at (1.5, -3) (A2) {$A_{2}$};
                \node[main node] at (0, -5) (T2) {$T_{2}$};
                \node[main node] at (5, 0) (W3) {$W_{3}$};
                \node[main node] at (4, -5) (T3) {$T_{3}$};
                \node[main node] at (0.5, 2) (bW) {$\Uw$};
                \node[main node] at (-0.5, -8) (bA) {$\Ua$};
                \node[main node] at (1.5, -8) (bT) {$\Ut$};
                \node[main node] at (-6, -3) (b) {$U^*$};
        
                \draw[->]
                    (W1) edge (W2)
                         edge[out=25, in=155] (W3)
                    (A1) edge (W2)
                         edge (A2)
                         edge[out=330, in=250] (W3)
                         edge (T2)
                         edge (T2)
                    (W2) edge (W3)
                    (A2) edge (W3)
                         edge (T3);
                \draw[->, teal]
                    (T1) edge (W1)
                         edge (A1)
                         edge[out=30, in=240] (W2)
                         edge (A2)
                         edge (T2)
                         edge[out=347, in=270, looseness=2.2] (W3)
                         edge[out=353, in=220] (T3)
                    (T2) edge[out=30, in=260] (W3)
                         edge (A2)
                         edge (W2)
                         edge (T3)
                    (T3) edge (W3);
                \draw[->, orange]
                    (bW) edge[out=180, in=70] (W1)
                         edge (W2)
                         edge[out=360, in=110] (W3);
                \draw[->, dashed, orange]
                    (b) edge[out=280, in=170] (bA)
                        edge[out=270, in=220] (bT)
                        edge[out=90, in=170, looseness=1.2] (bW);
        
                \draw[dashed]
                    (-1.5, -7) -- (-1.5, 1.5)
                    (3, -7) -- (3, 1.5);

                \node[text width=3cm] at (4.5, -10) {\Large $\calE$ graph};
            \end{tikzpicture}
        \end{center}
    \end{minipage}
    \vrule
    \hspace{0.5cm}
    \begin{minipage}{.45\textwidth}
        \begin{center}
            \begin{tikzpicture}[scale=0.6, every node/.style={scale=0.6}, >=stealth, shorten >=1pt, main node/.style={circle,draw,font=\large\bfseries}]
                \node[main node] at (-4, 0) (W1) {$W_{1}$};
                \node[main node] at (-3, -3) (A1) {$A_{1}$};
                \node[main node] at (-4.5, -5) (T1) {$T_{1}$};
                \node[main node] at (1, 0) (W2) {$W_{2}$};
                \node[main node] at (1.5, -3) (A2) {$A_{2}$};
                \node[main node] at (0, -5) (T2) {$T_{2}$};
                \node[main node] at (5, 0) (W3) {$W_{3}$};
                \node[main node] at (4, -5) (T3) {$T_{3}$};
                \node[main node] at (0.5, 2) (bW) {$\Uw$};
                \node[main node] at (-0.5, -8) (bA) {$\Ua$};
                \node[main node] at (1.5, -8) (bT) {$\Ut$};
                \node[main node] at (-6, -3) (b) {$U^*$};
                \draw[->]
                    (W1) edge (A1)
                         edge (W2)
                         edge (A2)
                         edge[out=25, in=155] (W3)
                         edge (T2)
                         edge[out=320, in=160] (T3)
                    (A1) edge (W2)
                         edge (A2)
                         edge (T2)
                         edge[out=330, in=250] (W3)
                         edge (T3)
                    (W2) edge (W3)
                         edge (A2)
                         edge (T3)
                    (A2) edge (W3)
                         edge (T3);
                \draw[->, teal]
                    (T1) edge (W1)
                         edge (A1)
                         edge[out=30, in=240] (W2)
                         edge (A2)
                         edge (T2)
                         edge[out=347, in=270, looseness=2.2] (W3)
                         edge[out=353, in=220] (T3)
                    (T2) edge[out=30, in=260] (W3)
                         edge (A2)
                         edge (W2)
                         edge (T3)
                    (T3) edge (W3);
                \draw[->, orange]
                    (bW) edge[out=180, in=70] (W1)
                         edge (W2)
                         edge[out=360, in=110] (W3)
                    (bA) edge[out=130, in=280] (A1)
                         edge[out=50, in=270] (A2)
                    (bT) edge (T1)
                         edge (T2)
                         edge (T3);
                \draw[->, dashed, orange]
                    (b) edge[out=280, in=170] (bA)
                        edge[out=270, in=220] (bT)
                        edge[out=90, in=170, looseness=1.2] (bW);
                    
                \draw[dashed]
                    (-1.5, -7) -- (-1.5, 1.5)
                    (3, -7) -- (3, 1.5);

                \node[text width=3cm] at (4.5, -10) {\Large $\calO$ graph};
            \end{tikzpicture}
        \end{center}
        \end{minipage}
    \caption{Left: DAG for the observational ($\calO$) data-generating process where three visits occur before censoring time. Right: DAG representing the target trial model ($\calE$) of a two-stage regime in which treatments and observation times are no longer driven by past covariates, outcomes and latent random effects. Both DAGs encode the assumption that the random effects $U = (\Uw, \Ua, \Ut)$ are independent conditional on $U^*$, a latent factor that affects them.}
    \label{fig:dag-project-1}
\end{figure}
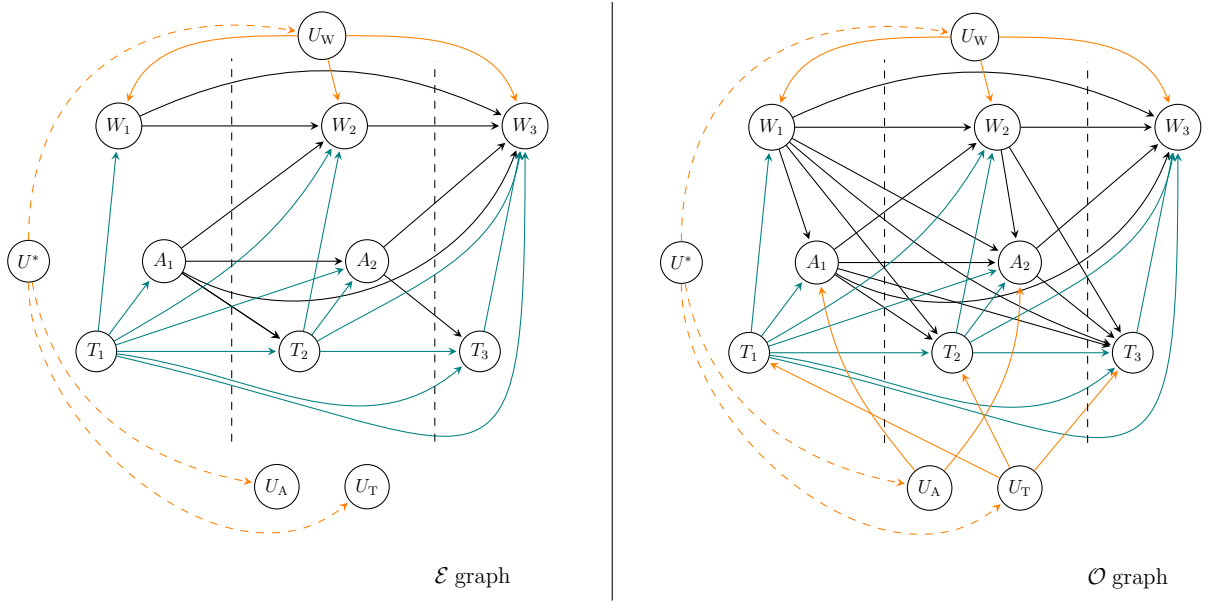

% Without loss of generality and for notational simplicity, we present our work hereon using $W_j, A_j$ as continuous variables with all integrals and causal assumptions written with respect to corresponding densities. Integrals and such densities can be respectively interpreted as general Riemann-Stieltjes integrals and probability distributions.

% Considering $\calE$ as SMART, the trial we wish to emulate to estimate optimal DTRs, we posit that treatments $A_{j+1}$ and observation times $T_{j+1}$ do not depend on individual-specific information under $\calE$ \cite{lavori2000design}. This randomization procedure can be generalized to depend on accrued observable individual-level characteristics, but, for simplicity, we consider $\calE$ as a completely randomized SMART hereon.
We model $\calE$ as a SMART trial for estimating optimal DTRs, assuming that treatments $A_{j+1}$ and observation times $T_{j+1}$ are randomized independently of individual-specific information \cite{lavori2000design}. Although randomization could incorporate individual characteristics and even be combined with deterministic decision rules, we focus here on a completely randomized SMART as our $\calE$ for simplicity:
\begin{align*}
    \overline{W}_{j+1}, U \indep_\calE T_{j+1} \mid \overline{T}_{j}, \overline{A}_{j}, \phit^* \,\,\, \text{and} \,\,\, \overline{W}_{j}, U \indep_\calE A_{j} \mid \overline{T}_{j}, \overline{A}_{j-1}, \phia^*\, , \quad \text{for } j = 1, \dots, K\,,
\end{align*}
where $\phit^*, \phia^*$ are parameters respectively driving the target trial visit and treatment assignment processes. Equivalently, these conditions can be specified in $\calE-$densities:
\begin{align*}
    f_\calE(a_{j+1}, t_{j+1} \mid  \overline{w}_{j+1}, \overline{a}_{j}, \overline{t}_{j}, u, \phia^*, \phit^*) = f_\calE(a_{j+1}, t_{j+1} \mid \overline{a}_{j}, \overline{t}_{j}, \phia^*, \phit^*) \, ,
\end{align*}
for all $j = 1, \dots, K - 1$ and $f_\calE(t_{K+1} \mid \overline{a}_{K}, \overline{w}_{K}, \overline{t}_{K}, u, \phit^*)$ = $f_\calE(t_{K+1} \mid \overline{a}_{K}, \overline{t}_{K}, \phit^*)$. Under these trial specifications, $\calR_{k}(h_k, \underline{t}_{k+1}, \underline{d}_k, \phi_\calR)$ can be written as follows:
\begin{align*}
    \calR_{k}(h_k, \underline{t}_{k+1}, \underline{d}_k, \phi_\calR)
    &= \bigintsss_{\uw} \left\{ \sum_{j=k+1}^{K+1} \gamma_j \E_\calE\left[Y_{j} \mid \overline{T}_{j} = \overline{t}_j, \overline{A}_{j-1} = \overline{d}_{j-1}, \overline{W}_k=\overline{w}_{k}, \uw, \phiw \right] \right\}\\
    &\qquad \quad \times f_\calE(\uw \mid \overline{t}_{k}, \overline{w}_k, \overline{d}_{k-1}, \phi_\calR) \d \uw \\
    &= \bigintsss_{\uw} \bigintsss_{w_{j+1}} \dots \bigintsss_{w_{K+1}} \sum_{j=k+1}^{K+1} \gamma_j y_j\\
    &\quad \times \frac{\left\{\prod_{j=k+1}^{K+1} f_\calE(w_j \mid \overline{t}_{j}, \overline{d}_{j-1}, \overline{w}_{j-1}, \uw, \phiw)\right\} f_\calE(\uw \mid \phiu)}{\bigintss_{\uw} \left\{\prod_{j=1}^{k} f_\calE(w_j \mid \overline{t}_{j}, \overline{d}_{j-1}, \overline{w}_{j-1}, \uw, \phiw) \right\} f_\calE(\uw \mid \phiu) d \uw} \\
    &\quad \times d w_{K+1} \dots d w_{k+1} d \uw \,.
\end{align*}
The full derivation is available in the Appendix~\ref{appendix-1:definition-rewards}. While we consider $W_j$ as continuous variables, all integrals can be generalized by defining them as Riemann-Stieltjes integrals and cumulative distribution functions. By factorizing the rewards as above, we find that optimal rewards and regimes only depend on two sets of parameters, hence $\phi_\calR = (\phiw, \phiu)$. The dotted arrows in Figure~\ref{fig:dag-project-1} linking random effects indicates their possible dependence. The absence of correlation between the random effects, reflecting unobserved individual-level variation, would render certain model components ignorable for outcomes. This would allow the likelihood to factorize into separate components and allowing inference to be carried out without modeling the treatment and visit processes \citep{farewell2017ignorability}. In particular, in our work, we take interest in the scenario where this is not the case.

\subsection{Relating the Target Trial and the Observational Setting}
\label{subsection:relate-e-o}

While data under $\calE$ are unavailable, we can still infer on estimands defined under $\calE$ using $\calO$ data through several causal assumptions outlined in~\ref{project-1-assm:1}--\ref{project-1-assm:4}.
\begin{enumerate}[label=\textbf{A1}]
    \item \label{project-1-assm:1} \textbf{Stability of random effects}: the prior distribution of random effects $\Uw$ given $\phiu$ is the same under $\calE$ and $\calO$:
    \begin{align*}
        f_\calE (\uw \mid \phiu) = f_\calO(\uw \mid \phiu) \,,
    \end{align*}
    for any $\phiu$.
\end{enumerate}
\begin{enumerate}[label=\textbf{A2}]
    \item \label{project-1-assm:2} \textbf{Extended stability of observables}:
    \begin{align*}
        f_\calE(w_{j} \mid \overline{t}_{j}, \overline{a}_{j-1}, \overline{w}_{j-1}, \uw, \phiw) = f_\calO(w_{j} \mid \overline{t}_{j}, \overline{a}_{j-1}, \overline{w}_{j-1}, \uw, \phiw) \, ,
    \end{align*}
    for $j = 1, \dots, K+1$, for all $i \in \mathbb{N}$, for any $\phiw$.
\end{enumerate}
\begin{enumerate}[label=\textbf{A3}]
    \item \label{project-1-assm:3} \textbf{Impact of random effects on longitudinal observations}: Under $\calO$, distributions of observables only depend on their corresponding random effects and parameters:
    \begin{align*}
        T_{j} \indep_{\calO} \Uw, \Ua &\mid \overline{A}_{j-1}, \overline{W}_{j-1}, \overline{T}_{j-1}, \Ut, \phit\\
        W_{j} \indep_{\calE, \calO} \Ut, \Ua &\mid \overline{T}_{j}, \overline{A}_{j-1}, \overline{W}_{j-1}, \Uw, \phiw\\
        A_{j} \indep_{\calO} \Ut, \Uw &\mid \overline{W}_{j-1}, \overline{T}_{j-1}, \overline{A}_{j-1}, \Ua, \phia \, ,
    \end{align*}
    for $j = 1, \dots, K$. Recall that, from Section~\ref{subsection:target-trial-definition}, the distributions of $A_j$ and $T_j$'s are defined as desired under $\calE$. These assumptions complement the definition of the target trial from the previous subsection.
\end{enumerate}
\begin{enumerate}[label=\textbf{A4}]
    \item \label{project-1-assm:4}
    \textbf{Positivity}: conditional distributions under $\calE$ of $A_{j}$ and $T_{j}$, given all past variables that could be observed, are absolutely continuous with respect to their $\calO$ analogues:
    \begin{align*}
        f_\calE(a_j \mid \overline{t}_{j}, \overline{a}_{j-1}, \phia^*) > 0 &\Rightarrow f_\calO \big(a_j \mid \overline{w}_{j}, \overline{t}_{j}, \overline{a}_{j-1}, \ua, \phia) > 0 \,, \hspace{1.05cm} \text{for all $j = 1, \dots, K$}\, ;\\
        f_\calE(t_j \mid \overline{a}_{j-1}, \overline{t}_{j-1}, \phit^*) > 0 &\Rightarrow f_\calO \big(t_j \mid \overline{a}_{j-1}, \overline{w}_{j-1}, \overline{t}_{j-1}, \uit, \phit) > 0 \,, \quad \text{for all $j = 1, \dots, K + 1$}\,,
    \end{align*}
    for all possible values of $\ut, \ua, \phit, \phia, \phit^*, \phia^*$ and for all $i \in \mathbb{N}$.
\end{enumerate}
Because stability assumptions~\ref{project-1-assm:1} and~\ref{project-1-assm:2} posit equivalencies in conditional distributions under $\calE$ and $\calO$, $\phiw$ parametrizes the conditional distribution of $W_{ij}$ under both $\calE$ and $\calO$ in~\ref{project-1-assm:3}. Moreover, these stability assumptions imply that the target trial and observational study have the same target population, i.e. same eligibility, inclusion and exclusion criteria. We provide an extended version of G-computation that builds upon \cite{dawid2010identifying} by tailoring their proposed causal assumptions to infer on our rewards as per Figure~\ref{fig:dag-project-1}.
\begin{theorem}[Extended G-Computation]\label{theorem:extended-g-computation}
    The conditional reward $\calR_{k}(h_k, \underline{t}_{k+1}, \underline{d}_k, \phi_\calR)$ can be expressed as an integral with respect to probability measures under $\calO$ under \ref{project-1-assm:1}--\ref{project-1-assm:4}:
    \begin{align*}
        \calR_{k}(h_k, \underline{t}_{k+1}, \underline{d}_k, \phi_\calR) &= \bigintsss_{\uw} \bigintsss_{w_{k+1}} \dots \bigintsss_{w_{K+1}} \sum_{j=k+1}^{K+1} \gamma_j y_j\\
        &\quad \times \frac{\left\{\prod_{j=1}^{K+1} f_\calO(w_j \mid \overline{t}_{j}, \overline{d}_{j-1}, \overline{w}_{j-1}, \uiw, \phiw)\right\} f_\calO(\uw \mid \phiu)}{\bigintss_{\uw} \left\{\prod_{j=1}^{k} f_\calO(w_j \mid \overline{t}_{j}, \overline{d}_{j-1}, \overline{w}_{j-1}, \uw, \phiw) \right\} f_\calO(\uw \mid \phiu) d\uw}\\
        &\quad \times \d w_{K+1} \dots \d w_{j+1} \d \uw \, .
    \end{align*}
\end{theorem}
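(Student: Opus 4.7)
The plan is to begin from the $\calE$-form expression for $\calR_k(h_k, \underline{t}_{k+1}, \underline{d}_k, \phi_\calR)$ derived in Section~\ref{subsection:target-trial-definition}, and then convert each $\calE$-density in the integrand into its $\calO$-counterpart, term by term, using the four stability and positivity assumptions.

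First, I would rewrite the posterior of $\uw$ via Bayes' rule. Under the target trial randomization in $\calE$, past visit and treatment variables are independent of $U$, so that $f_\calE(\uw \mid \overline{t}_k, \overline{w}_k, \overline{d}_{k-1}, \phi_\calR)$ reduces to $f_\calE(\uw \mid \overline{w}_k, \phi_\calR)$, which equals
$$\frac{\prod_{j=1}^{k} f_\calE(w_j \mid \overline{t}_j, \overline{d}_{j-1}, \overline{w}_{j-1}, \uw, \phiw) \, f_\calE(\uw \mid \phiu)}{\int \prod_{j=1}^{k} f_\calE(w_j \mid \overline{t}_j, \overline{d}_{j-1}, \overline{w}_{j-1}, \uw, \phiw) \, f_\calE(\uw \mid \phiu) \, d\uw}.$$
Multiplying this posterior by the chain-rule factorization $\prod_{j=k+1}^{K+1} f_\calE(w_j \mid \cdots, \uw, \phiw)$ of the conditional density of future observables given $\uw$ and the past collapses these factors into the single product $\prod_{j=1}^{K+1}$ appearing in the theorem's numerator, with the denominator inherited from the posterior.

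Second, I would invoke \ref{project-1-assm:2} to replace each $\calE$-density $f_\calE(w_j \mid \cdots, \uw, \phiw)$ by $f_\calO(w_j \mid \cdots, \uw, \phiw)$ for $j=1, \dots, K+1$, and \ref{project-1-assm:1} to replace $f_\calE(\uw \mid \phiu)$ by $f_\calO(\uw \mid \phiu)$. Assumption \ref{project-1-assm:3} enters implicitly to ensure that the $\calO$-density of $W_j$ indeed only needs to condition on $\uw$ among the random effects rather than the full $U$, making the substitution coherent on the $\calO$ side. Assumption \ref{project-1-assm:4} guarantees that the denominator is strictly positive, so the ratio is well-defined; the outermost integrals over $w_{k+1}, \dots, w_{K+1}, \uw$ are formally unchanged by the substitution.

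The main difficulty, as I see it, is the bookkeeping around conditioning sets: the $\calE$-densities naturally condition on $T$ and $A$ because those are randomized in $\calE$, whereas under $\calO$ those variables are correlated with the random effects, and \ref{project-1-assm:3} is precisely what lets us still condition on them without integrating over $\ut$ or $\ua$. Once this correspondence is clarified, the density-by-density substitution is mechanical and the theorem follows.
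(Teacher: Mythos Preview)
Your approach is essentially the same as the paper's: start from the $\calE$-form of the reward already derived in Section~\ref{subsection:target-trial-definition} (and Appendix~\ref{appendix-1:definition-rewards}), then perform a term-by-term substitution of $\calE$-densities by $\calO$-densities via \ref{project-1-assm:1} and \ref{project-1-assm:2}. The paper in fact frames this even more tersely as noting that the ratios $f_\calO/f_\calE$ are identically one, so the substitution is immediate.

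One point to sharpen: your description of the role of \ref{project-1-assm:4} is not quite right. The denominator $\int \prod_{j=1}^k f(w_j \mid \cdots) f(\uw \mid \phiu)\, d\uw$ is a marginal likelihood of observed data and is automatically positive; positivity is not needed there. Where \ref{project-1-assm:4} actually enters is in guaranteeing that the $\calO$-conditional densities $f_\calO(w_j \mid \overline{t}_j, \overline{d}_{j-1}, \overline{w}_{j-1}, \uw, \phiw)$ are well-defined: we are conditioning on treatment and visit-time histories $(\overline{t}_j, \overline{d}_{j-1})$ that arise under $\calE$, and \ref{project-1-assm:4} ensures these same histories have positive density under $\calO$, so that conditioning on them is legitimate. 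The paper makes this explicit by factoring the joint $\calE$- and $\calO$-densities of $(\overline{t}_j, \overline{a}_{j-1}, \overline{w}_{j-1})$ and showing that the required absolute continuity reduces precisely to \ref{project-1-assm:4}. Otherwise your outline is correct.
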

\begin{proof}
    The proof is available in the Appendix~\ref{appendix-1:g-computation-proof}.
\end{proof}

\subsection{Bayesian Estimation: Parameters, Rewards and Optimal Decisions}
\label{subsection:estimation}

% Using $\Phi = \big(\phi, u\big) = \big(\phi, \{u_i\}_{i=1}^n\big)$ to denote the set of all unobserved variables which include data-generating parameters and random effects, w
We let $\pi_0(\phi)$ denote the prior on $\phi$ and $\pi_n(\phi)$ to denote the posterior distribution of $\phi$ given the data of $n$ observed individuals under $\calO$. We are primarily interested in $\pi_n(\phi_\calR) = \int_{\phit, \phia} \pi_n(d \phi)$, the marginal posterior distribution of $\phi_\calR$, because they parametrize our rewards. Following \cite{saarela2015predictive}, the proof that the reward evaluated at $\phi_{\calR 0}$ can be expressed as a limiting case of an integral with respect to the posterior $\pi_n(\phi_\calR)$ is available in Appendix~\ref{appendix-1:de-finetti}.

While the characterization of regimes, $\calE, \calO$ processes and causal assumptions have been made entirely non-parametrically so far, model specifications are required to carry out posterior sampling. A pivotal point of this work stems from the requirement of $\pi_n(\phi_\calR)$ needing a joint model for fitting due to the presence of correlated random effects $U = (\Uw, \Ua, \Ut)$ through its prior $f_\calO(u \mid \phiu)$. In other words, no components can be taken out of the integral to solely infer on the posterior $\pi_n(\phi_\calR) \stackrel{\phi_\calR}{\propto} \pi_n(\phi)$ which, in product form, is:
\begin{align*}
    \pi_n(\phi) &\propto \bigintsss_{u_1, \dots, u_n} \Bigg\{\prod_{i=1}^n \Bigg\{ \prod_{j=1}^{m_i} f_\calO\big(w_{ij} \mid \overline{t}_{ij}, \overline{a}_{i(j-1)}, \overline{w}_{i(j-1)}, \uiw, \phiw\big)\\
    &\hspace{2cm} \times f_\calO\big(t_{ij} \mid \overline{a}_{i(j-1)}, 
    \overline{w}_{i(j-1)}, \overline{t}_{i(j-1)}, \uit, \phit\big)\\
    &\hspace{2cm} \times \one_{j > 1} \bigg[f_\calO\big(a_{i(j-1)} \mid \overline{w}_{i(j-1)}, \overline{t}_{i(j-1)}, \overline{a}_{i(j-2)}, \uia, \phia\big)\bigg] \Bigg\}\\
    &\hspace{2cm} \times \left(1 - \bigintsss_{t_{im_i}}^{\zeta_i} f_\calO(s \mid \overline{w}_{im_i}, \overline{t}_{im_i}, \overline{a}_{i(m_i - 1)}, \uit, \phit) \d s\right)f_\calO(u_i \mid \phiu) \Bigg\}\pi_0(\phi) \, .
\end{align*}
The visit process contribution to likelihood formula above assumes an absence of competing risks, such as death, which is presumed to be a sufficiently rare outcome to be disregarded; this assumption is reasonable in the context of our data analysis in Section~\ref{section:data-analysis}.

To estimate the optimal reward given $\phi_{\calR}$, we simulate each treatment arm in a forward fashion. At each decision point $j \geq k$, the treatment arm yielding the maximum reward conditional the new simulated history $h_j$ serves as the optimal reward for the specific $\phi_{\calR}$ value. To estimate the true conditional expectations $\E_\calE\left[Y_{j} \mid \overline{T}_{j} = \overline{t}_j, \overline{A}_{j-1} = \overline{d}_{j-1}, \overline{W}_{k}=\overline{w}_{k}, \phi_\calR \right]$ in rewards, we carry out Monte Carlo integration for each posterior draw $\phi_{\calR}^{(b)}$. With random effects $\uw$ influenced by $h_j$ which is being conditioned on, a Metropolis algorithm is needed to sample from $f_\calE(\uw \mid \overline{w}_k, \overline{t}_k, \overline{d}_{k-1}, \phiu)$, which is the same distribution under $\calO$ by Bayes' Theorem and stability assumptions~\ref{project-1-assm:1}--\ref{project-1-assm:2}. We provide details about the computational procedure in Appendix~\ref{appendix-1:monte-carlo-approximation}. Given sufficiently large amounts of posterior draws via MCMC and Monte Carlo samples, the Monte Carlo error of posterior estimates should be negligible, i.e. converging to 0; see Appendix~\ref{appendix-1:mc-error-reward} for more details.

\section{Simulation Study}

In our simulation study, we demonstrate that joint modeling of all model components ensures unbiased reward estimates, enabling more accurate identification and  estimation of optimal treatment regimes.

\subsection{Setup}
\label{subsection:setup}

We posit a data-generating process under $\calO$ to estimate $\calE$ quantities of interest, here rewards and optimal regimes. We begin by generating each individual's unmeasured confounders $U_i = \log(\Uit), \Uiw, \Uia \mid \Sigma_0 \sim \operatorname{MVN}\big(0, \Sigma_0\big)$ where $\Sigma_0$ specifies the degree of correlation between model-specific random effects. We define $\Sigma_0$ here to be a $3 \times 3$ matrix with diagonal entries to be 0.36 and other correlation values to be 0.216. We simulate individual visits $T_{i1}, \dots$ until the censoring time $\zeta_i \sim \operatorname{Unif}(3.5, 4)$ is attained; each simulated individual has the observed data $\big\{(t_{ij}, a_{ij}, w_{ij})\big\}_{i=1}^{m_i} \cup \left\{t_{i(m_i+1)}, w_{i(m_i + 1)}, m_i\right\}$ under $\calO$. We define model-specific histories $H_{ij, \mathrm{A}} = \big[1, Y_{ij}, X_{ij}, A_{i(j-1)}\big]$, $H_{ij, \mathrm{Y}} = \big[1, T_{ij}, X_{ij}, A_{i(j-1)}, A_{i(j-1)} X_{ij}, A_{i(j-1)} T_{ij}\big]$ and $H_{ij, \mathrm{X}} = H_{ij, \mathrm{T}} = \big[X_{i(j-1)}, A_{i(j-1)}\big]$. Under $\calO$, at every time point $T_{ij}$, we simulate a corresponding ``mark'' $(W_{ij}, A_{ij})$ given historic data according to the equations below and the following data-generating parameters: $\phi_{\mathrm{T}0} = (-0.3, 0.5)$, $\lambda_0 = 0.2$, $\alpha_0 = 3.5$, $\phi_{\mathrm{Y}0} = (-0.3, 0.3, 0.3, 0.55, -0.5, -0.5)$, $\phi_{\mathrm{A}0} = (0.2, -0.2, 0.2)$,
$\phi_{\mathrm{X}0} = (0.4, 0.4)$ and $\tau_{\mathrm{X}0}^2 = 0.3$.
\begin{align*}
    A_{ij} \mid \overline{W}_{ij}, \overline{T}_{ij}, \overline{A}_{i(j-1)}, \Uia, \phia &\sim \operatorname{Bern}\left(\Phi^{-1}\left(H_{ij, \mathrm{A}}^\top \phia + \Uia\right)\right)\\
    X_{ij} \mid \overline{T}_{ij}, \overline{A}_{i(j-1)}, \overline{W}_{i(j-1)}, \phix &\sim \mathcal{N}\left(H_{ij, \mathrm{X}}^\top\phix\scalebox{1.2}{,}\, \tau_{\mathrm{X}}^2\right)\\
    Y_{ij} \mid 
    \overline{X}_{ij}, \overline{T}_{ij}, \overline{A}_{i(j-1)}, \Uiy, \phiy &\sim \operatorname{Bern}\left(\Phi^{-1}\left(H_{ij, \mathrm{Y}}^\top\phiy + \Uiy\right)\right)\\
    T_{ij} - T_{i(j-1)} \mid \overline{W}_{i(j-1)}, \overline{A}_{i(j-1)}, \overline{T}_{i(j-1)}, \Uit, \phit &\sim \operatorname{Weibull}\left(\lambda_0 \log(\Uit) \exp \left\{H_{ij, \mathrm{T}}^\top \phit \right\} \scalebox{1.2}{,}\, \alpha\right)
\end{align*}
We provide the data-generating mechanism in Appendix~\ref{appendix-1:simulation-study}. Conditional on the posterior draw $\phi^{(b)}$, a posterior distribution of regime rewards can be obtained via Theorem~\ref{theorem:extended-g-computation}. Using Monte Carlo integration for each $\phi^{(b)}$ as per Appendix~\ref{appendix-1:monte-carlo-approximation}, we can estimate conditional rewards using an arbitrary large Monte Carlo sample size, i.e. $10\,000\,000$ here. Under $\calO$, we simulate ``training" sets of $n=100$ and $n=300$ participants whose data are used for posterior inference of $\phi_\calR$. For each training set (or simulation replication), we carry out the Bayesian estimation procedure under four model specifications: $(Y, A, T)$, $(Y, A)$, $(Y, T)$ and $(Y)$. We refer to the latter three as ``partial models'' as at least one of the treatment and visit processes are not modeled. If included, they are all correctly specified; for instance, the outcome, time-varying covariate and treatment assignment processes are correctly specified in $(Y, A)$ but the visit process is not modeled at all.

Posterior draws of these parameters under each of the four models are used to estimate 1) the scalar-valued $\calR_{k}(h_k, \underline{t}_{k+1}, \underline{d}_k^\opt, \phi_{\calR 0})$ and 2) the individualized treatments at visit times $t_{i2} = 2$ and $t_{i3} = 3$ for a ``test'' set of 100 participants not utilized in the posterior inference procedure for $\phi_\calR$. Here, we define our feasible sets to be $\{0, 1\}$ for all stages under $\calE$. We carry out simulations for two sets of history profiles consisting of one observed visit: for $i = 1$, $\overline{x}_{11} = x_{11} = -0.35$, $\overline{y}_{11} = y_{11} = 1$ and, for $i = 2$, $\overline{x}_{21} = x_{21} = 0.35$, $\overline{y}_{21} = y_{21} = 0$. We refer to these two patient histories as $h_{11} = (x_{11}, y_{11})$ and $h_{21} = (x_{21}, y_{21})$ respectively. For patient 1 with history $h_{11}$, $(d_2^\opt, d_3^\opt) = (1, d_3^\opt)$ with a true optimal reward value of 1.132 and, for patient 2 with history $h_{21}$, $(d_2^\opt, d_3^\opt) = (0, d_3^\opt)$ with a true optimal reward value of 0.952. The third treatment is left ``to be decided'' with respect to the decision function $d_3^\opt$ according to their evolving status after their second visit:
\begin{align*}
    \text{Patient 1:} \quad \max_{a \in \{0, 1\}} \calR_1\big(h_{11}, (2, 3), (a, d_3^\opt), \phi_{\calR 0}\big) &= \calR_1\big(h_{11}, (2, 3), (1, d_3^\opt), \phi_{\calR 0}\big) \approx 1.132 \, ;\\
    \text{Patient 2:} \quad \max_{a \in \{0, 1\}} \calR_1\big(h_{11}, (2, 3), (a, d_3^\opt), \phi_{\calR 0}\big) &= \calR_1\big(h_{21}, (2, 3), (0, d_3^\opt), \phi_{\calR 0}\big) \approx 0.952 \, .
\end{align*}
We also simulate a ``test'' set of $n_{\text{test}} = 100$ individuals from $\calE$ and identify their optimal individualized treatments at times $2$ and $3$. We report the mean agreement rate (AR), the average probability of identifying the correct optimal treatment under all four model specifications across all $n_{\text{test}}$ individuals. Finally, generating $h_{i1}$ under $\calE$ with $t_1 = 1$ and assuming future $t_2 = 2$ and $t_3 = 3$, we define individual-specific estimated benefit, bias and ARs for each test subject $i \in \{n + 1, \dots, n + n_{\text{test}}\}$ as follows:
\begin{align*}
    \operatorname{Benefit}_i &= \calR_1\big(h_{i1}, (2, 3), (1, d_3^\opt), \phi_{\calR 0}\big) - \calR_1\big(h_{i1}, (2, 3), (0, d_3^\opt), \phi_{\calR 0}\big) \\
    \operatorname{Bias}_i &= \frac{1}{S} \sum_{s = 1}^S  \E_{\phi_{\calR} \mid \calF_n^{(s)}}\left[\calR_1\big(h_{i1}, (2, 3), (\widehat{d}_2^\opt, \widehat{d}_3^\opt), \phi_{\calR }\big)\right] - \calR_1\big(h_{i1}, (2, 3), (d_2^\opt, d_3^\opt), \phi_{\calR 0} \big) \\
    \operatorname{AR}_i &= \frac{1}{S} \sum_{s=1}^S \one \left\{ \calR_1\big(h_{i1}, (2, 3), (d_2^\opt, d_3^\opt), \phi_{\calR 0}\big) =  \E_{\phi_{\calR} \mid \calF_n^{(s)}}\left[\calR_1\big(h_{i1}, (2, 3), (\widehat{d}_2^\opt, \widehat{d}_3^\opt), \phi_{\calR }\big)\right]\right\} \, .
\end{align*}
We also elaborate on Monte Carlo error estimates across simulation replications in  Appendix~\ref{appendix-1:mc-error-reward-simulation-study}.

\subsection{Simulation Results}

\begin{figure}[!ht]
    \centering
    \begin{minipage}[b]{0.49\textwidth}
        \includegraphics[width=1.1\textwidth]{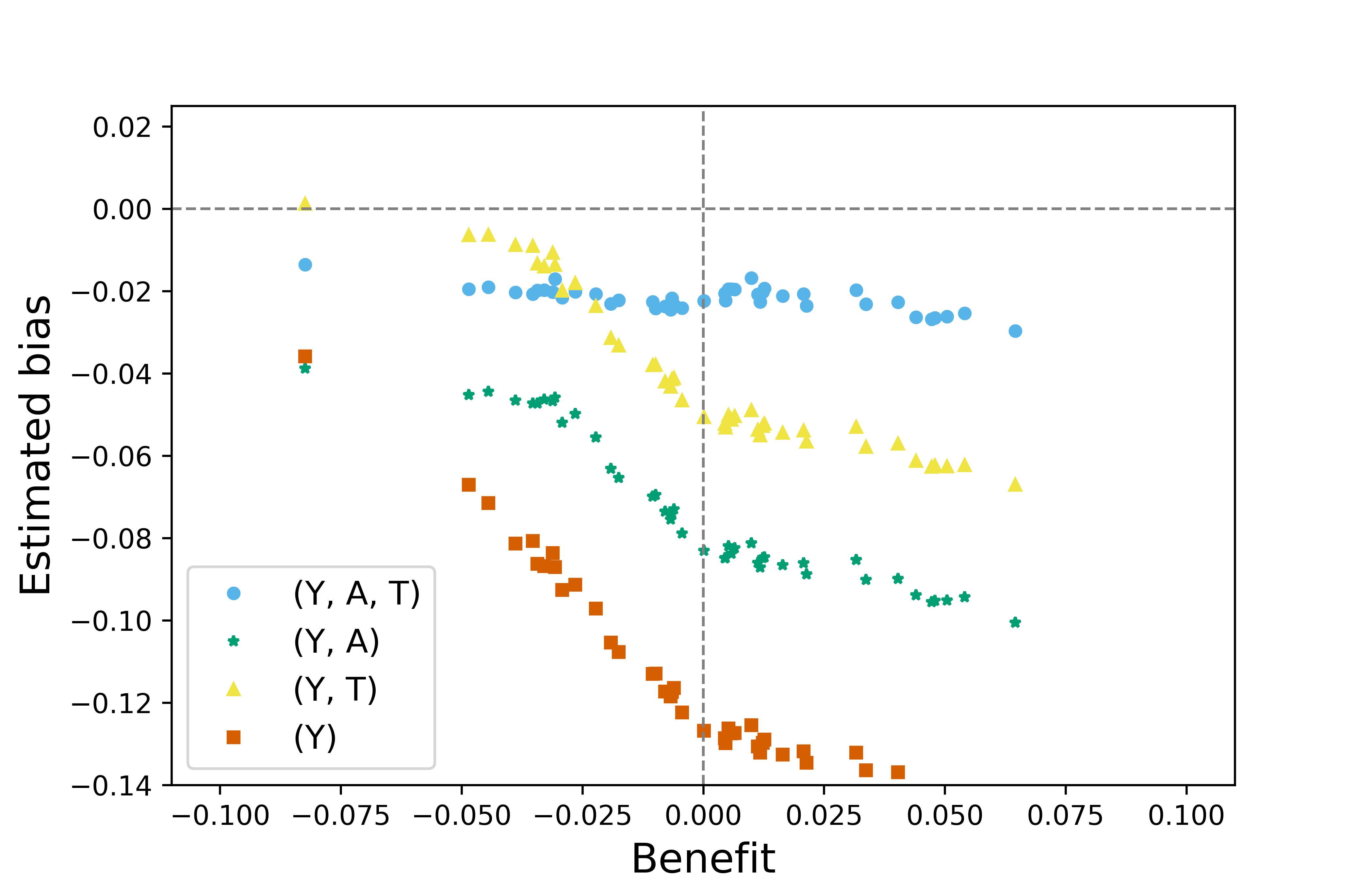}
    \end{minipage}
    \hfill
    \begin{minipage}[b]{0.49\textwidth}
        \includegraphics[width=1.1\textwidth]{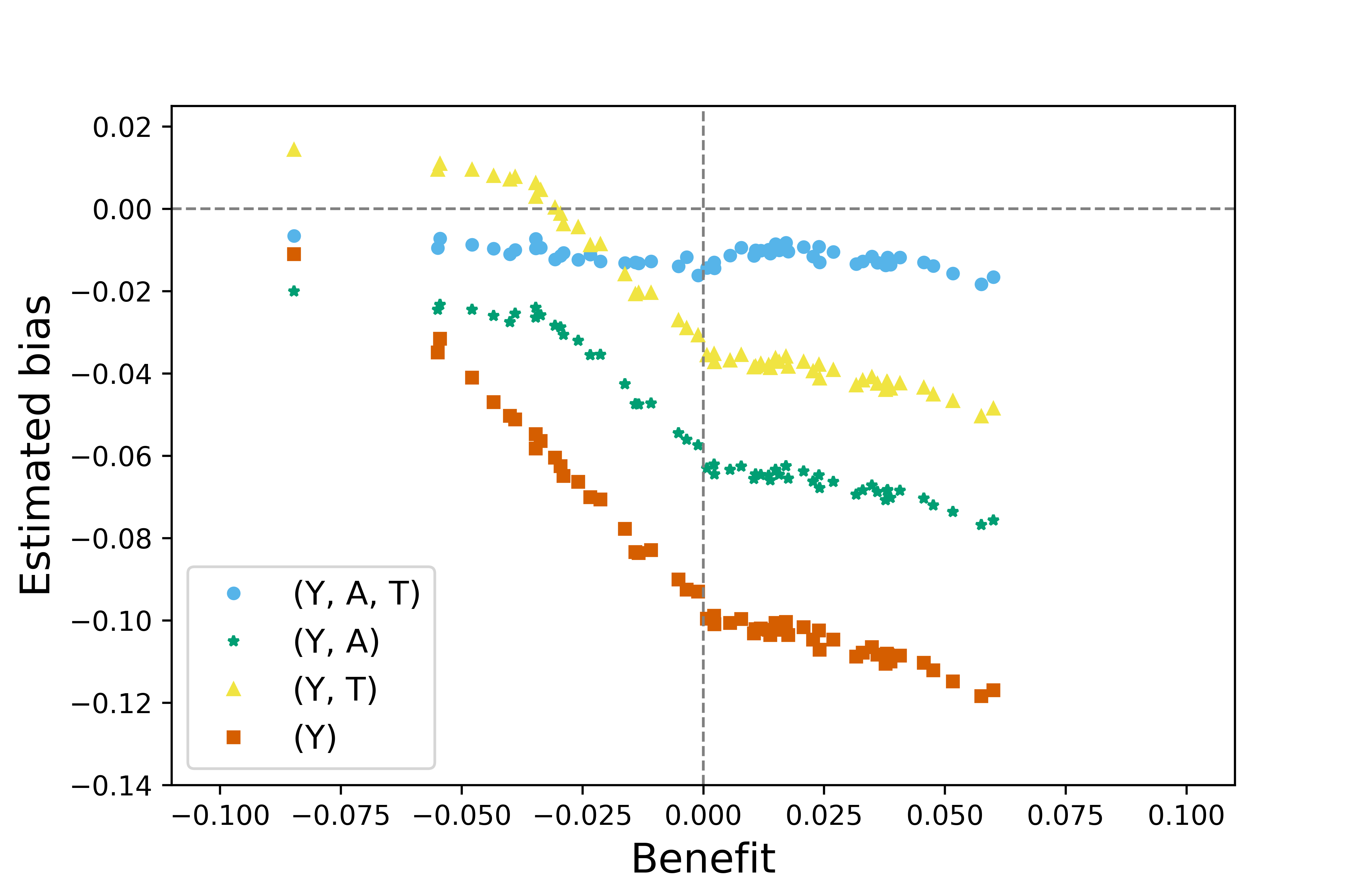}
    \end{minipage}
    \caption{Benefit vs. estimated bias for all four model specifications given previous outcome $y_{i1} = 0$ (left) and $y_{i1} = 1$ (right).}
    \label{fig:benefit-vs-estimated-loss}
\end{figure}
\begin{figure}[!ht]
    \begin{minipage}[b]{0.49\textwidth}
        \includegraphics[width=1.1\textwidth]{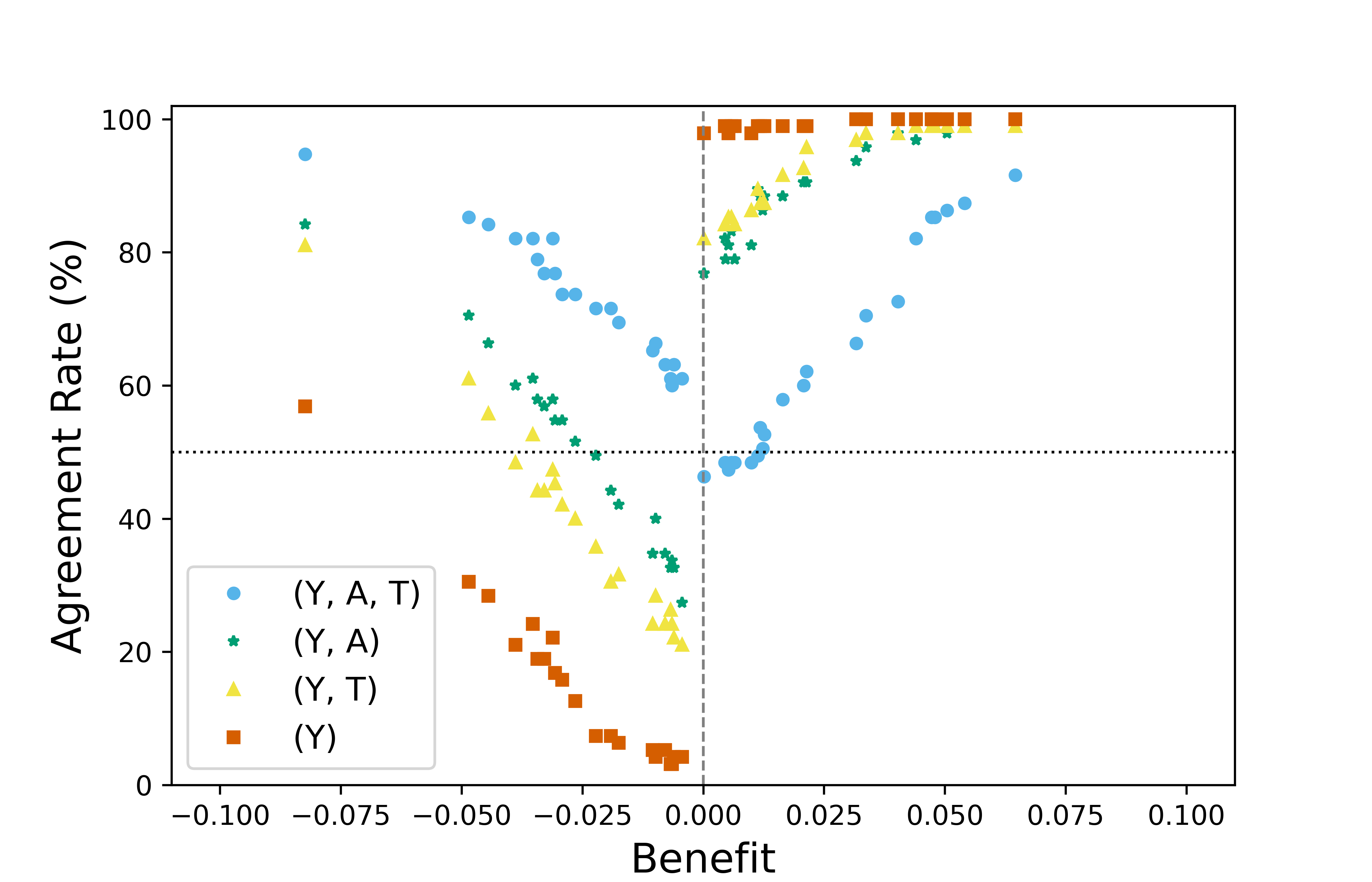}
        % \label{fig:benefit-vs-agreement-rate_y1_0}
    \end{minipage}
    \hfill
    \begin{minipage}[b]{0.49\textwidth}
        \includegraphics[width=1.1\textwidth]{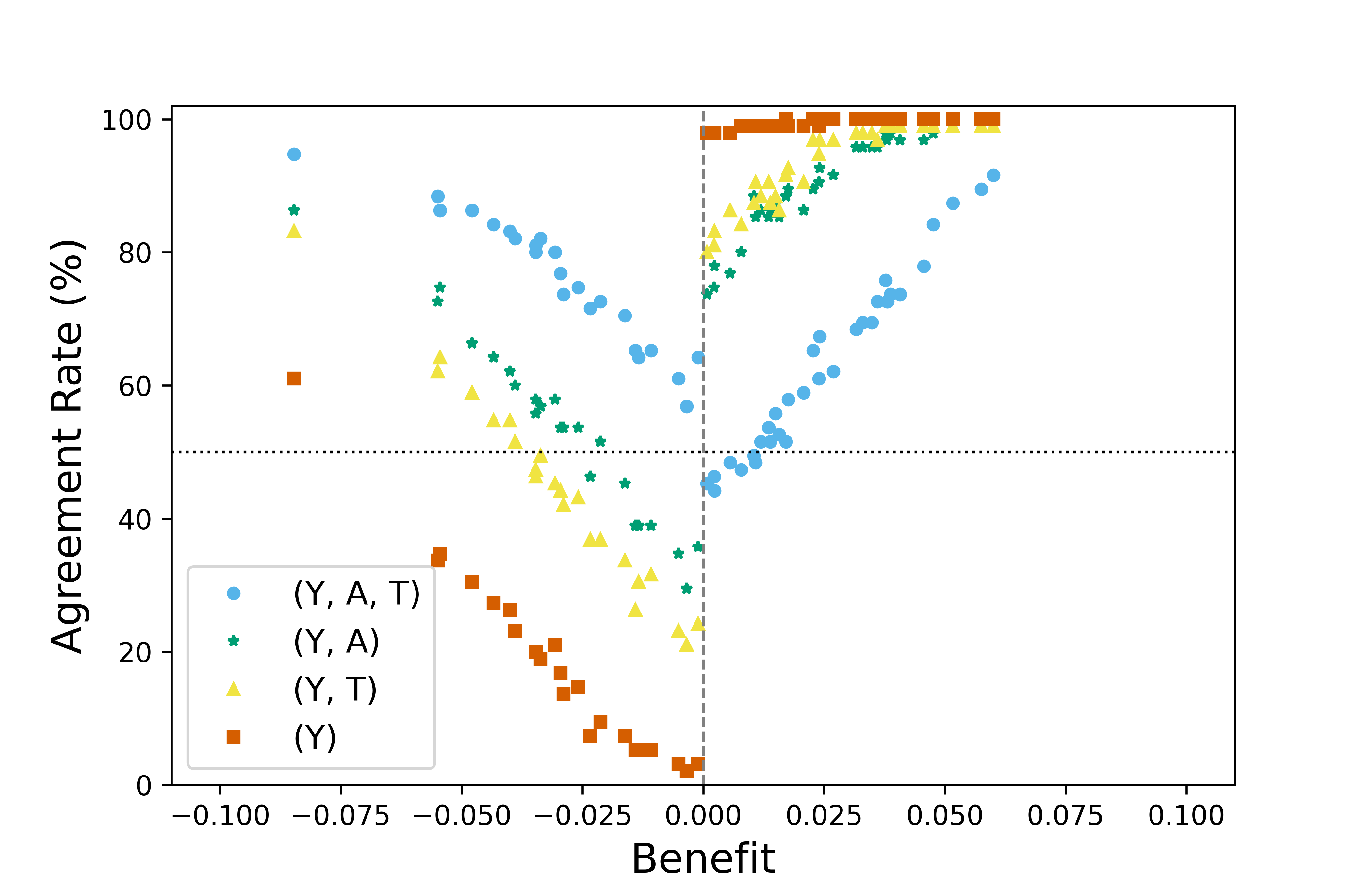}
        % \label{fig:benefit-vs-agreement-rate_y1_1}
    \end{minipage}
    \caption{Benefit vs. agreement rate for all four model specifications given previous outcome $y_{i1} = 0$ (left) and $y_{i1} = 1$ (right).}
    \label{fig:benefit-vs-agreement-rate}
\end{figure}

In Figure~\ref{fig:benefit-vs-estimated-loss}, we observe the true benefit -- reward difference under $(1, d_3^\opt)$ and $(0, d_3^\opt)$ -- against the estimated loss, the difference in the estimated reward under estimated optimal regime $(\widehat{d}_2^\opt, \widehat{d}_3^\opt)$ and true optimal reward. We highlight that the bias estimates using $(Y, A, T)$ where all three processes are modeled are closest to 0 on average. For select individuals whose optimal regime is $(0, d_3^\opt)$, bias estimates from the $(Y, A, T)$ seem comparable to the ones from the $(Y, T)$ model and hover around the 0 mark, representing unbiasedness. The increase in the magnitude of estimated biases for the three partial models indicates in individuals whose optimal regime is $(1, d_3^\opt)$ that, in our simulation setting, treatment effectiveness seems over-estimated, with the $(Y)$ model performing the worse of all models. We provide bias and standard error estimates under optimal regime and fixed regimes in Appendix~\ref{appendix-1:accuracy-opt-reward} and~\ref{appendix-1:accuracy-fixed-reward} respectively.

Similarly, in Figure~\ref{fig:benefit-vs-agreement-rate}, for individuals with a near-trivial benefit hovering around the 0 mark on the x-axis, we expect decision-makers to be ambivalent, i.e. having an AR around the 50\% mark. As benefit increases, we expect ARs to increase and depart from the 50\% mark and slowly converge to 100\%. We observe this phenomenon for the full joint model $(Y, A, T)$, aligning with the finding from Figure~\ref{fig:benefit-vs-estimated-loss} in that rewards estimates from $(Y, A, T)$ are best. We observe that the partial models are more polarized and biased, sometimes outright preferring one treatment regime over the other. Treatment effects seem to be overestimated in $(Y, A)$, $(Y, T)$ and $(Y)$ models. Given that $(1, d_3^\opt)$ is preferable over $(0, d_3^\opt)$ for around 57\% of randomly generated individuals according to our simulation mechanism, this finding aligns well with the higher agreement rate of the partial models in individuals with positive benefit. Figures~\ref{fig:benefit-vs-estimated-loss} and~\ref{fig:benefit-vs-agreement-rate} highlight the importance of accurately estimating reward values under all treatment arms as the regime individualizing is a comparative process. We also report that, if random effects are generated under $(\Uiw, \Uia) \indep \Uit$ or $(\Uiw, \Uit) \indep \Uia$, estimation under partial models $(Y, A)$ and $(Y, T)$ is respectively equivalent to estimation under the full joint model $(Y, A, T)$; these results are also available in Appendix~\ref{appendix-1:simulation-results-under-partial-models}. Lastly, we provide estimates of various standard errors and study the statistical significance of estimated bias in rewards in Appendix~\ref{appendix-1:simulation-results-under-partial-models}.

\section{Individualizing IL-7 Injection Cycles}
\label{section:data-analysis}

\subsection{INSPIRE 2 \& 3 Trials}

Human immunodeficiency virus (HIV) is a chronic disease primarily characterized by a depletion of CD4 cells leading to reduced immune system functionality \citep{douek2009emerging}. Highly active anti-retroviral therapy (HAART) replenishes CD4 cell count to a healthy threshold, i.e. above 500 cells/$\mu$L of blood, in HIV-infected individuals, enabling them to exhibit similar life expectancy as the non-infected \citep{douek2009emerging}. However, 15\% to 30\% of HIV-infected individuals fail to increase their CD4 count under HAART; such patients are referred to as low immunological responders \citep{levy2009enhanced}. INSPIRE trials 2 and 3 were randomized trials which aimed to assess the benefits of repeated Interleukin-7 (IL-7) injections in low immunological responders as a means of increasing their CD4 cell count \citep{thiebaut2014quantifying, thiebaut2016repeated}. Data from the INSPIRE trials demonstrated that patients sustained an increase in CD4 cell count when given IL-7 injections \citep{thiebaut2014quantifying}.

\begin{figure}[!b]
    \centering
    \includegraphics[width=0.8\linewidth]{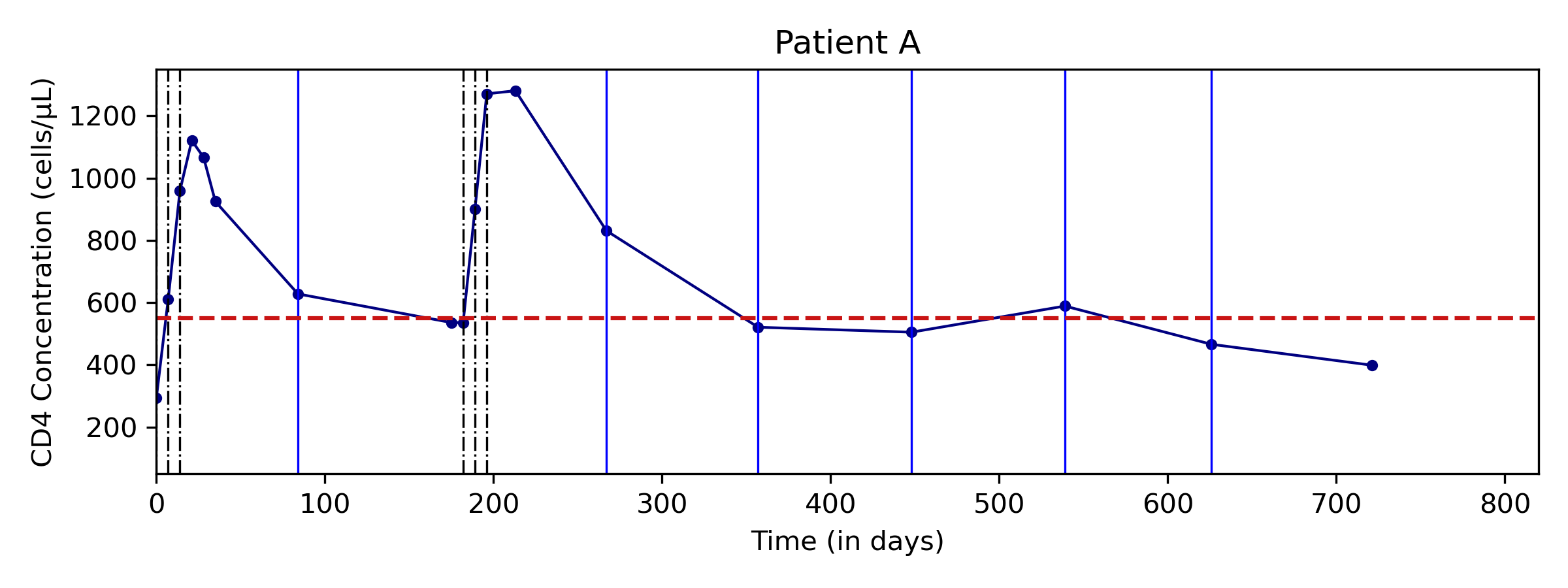}
    \includegraphics[width=0.8\linewidth]{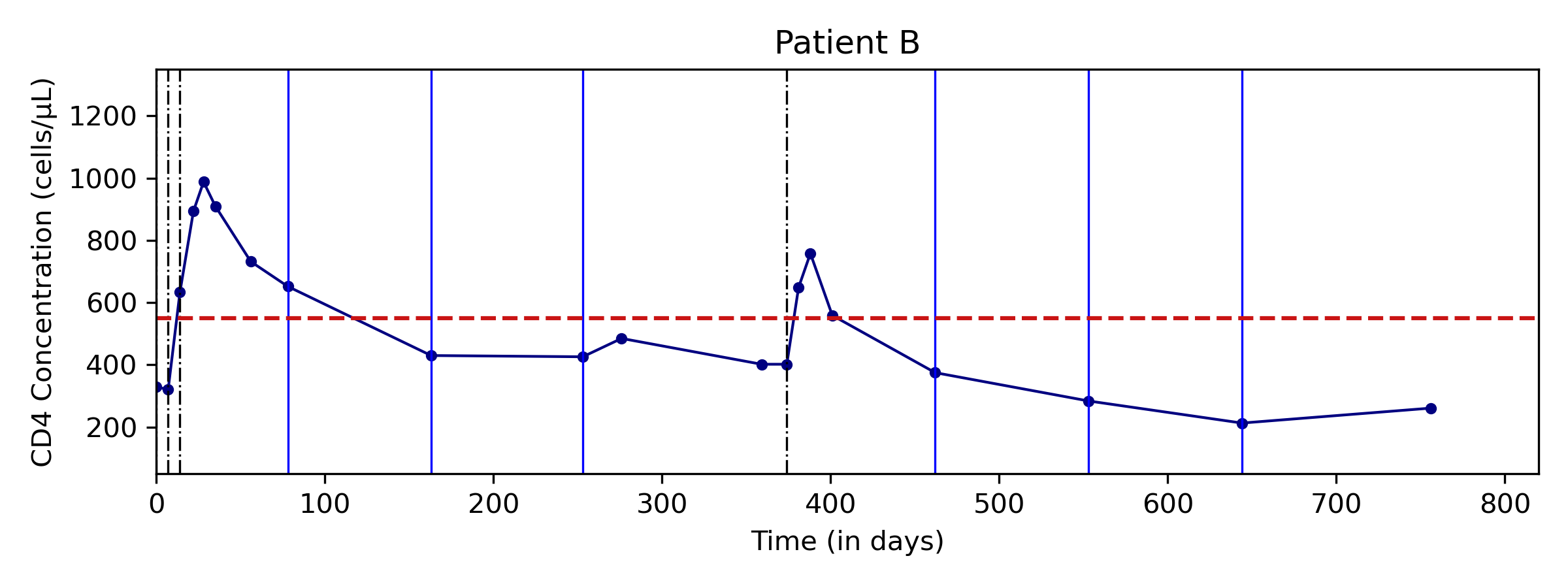}
    \caption{An example of the CD4 trajectory of two patients who has been observed for a total of around two years is displayed in Figure~\ref{fig:inspire-cd4-example}. Each dot represents a CD4 measurement. The vertical dashed-dot lines indicate administered injections and the solid vertical lines delineate approximate 90-windows around which investigators decided if injections were necessary or not.}
    \label{fig:inspire-cd4-example}
\end{figure}
The INSPIRE trials' protocols calls for IL-7 injections at 20$\mu$g/Kg for each injection cycle administered at 90 day intervals if participants' CD4 cell count falls below 550 cells/$\mu$L of blood \citep{thiebaut2016repeated}. However, this time interval varied between patients and within patient visits and participants also did not always receive all prescribed injections in a cycle when eligible.

% For instance, in Figure~\ref{fig:inspire-cd4-example}, two patients' CD4 measurements taken at clinical visits are displayed with variable interarrival times and, in Patient B's chart, they receive one injection instead of three as INSPIRE's protocols in their second cycle.

\subsection{Analysis Plan}

\subsubsection{Definition of Target Trial and Estimands}

With $\operatorname{CD4}_{ij}$ being patient $i$'s CD4 measurement at visit $j \geq 1$, we define outcomes to be $Y_{ij} = \one \left\{\operatorname{CD4}_{ij} \geq 500 \right\}$ and $A_{ij} = 1$ if an injection cycle was initiated at visit $j$, even if it was incomplete, and $0$ otherwise, akin to an intention-to-treat definition. We delineate the study designs of the INSPIRE trials and our target trial of interest as well as the observational setting in Appendix~\ref{appendix-1:inspire-protocol}. The applied problem is the following: given the 111 participant data from INSPIRE 2 and 3, we aim to optimize the injection cycles for the next year according to the target trial's protocol, i.e. for the next four visits. We perform an in-depth analysis for two patients, whom we refer as ``Patient A'' and ``Patient B'', whose trajectories are illustrative of two different patterns of sustained CD4 increase following IL-7 injections; their trajectories are displayed in Figure~\ref{fig:inspire-cd4-example}. %We assume that we see Patients A and B after around one year of follow-up and optimize their treatment strategy for the next year, considering four possible injection time points conditional on data from their first year.

The nature of sustained levels of CD4 concentration, high or low, may not be well-explained by measured variables and could be captured through random effect modeling. Two baseline variables used for treatment tailoring are denoted by $X_{i0}$: BMI and age. We define a patient's eligibility to receive injections at visit $j$ through a feasible set $\Psi_j(h_{j}) = \{0, 1\}$ if the total number of prior injections is two or less, otherwise $\Psi_j(h_j) = \{0\}$.
% \begin{align*}
%     \Psi_{j}(h_{j}) &=
%     \begin{cases}
%         \{0, 1\}& \, \, \, \text{if } \big\{\text{CD4}_{j} \leq 550\big\} \land \Big\{\sum_{\ell=1}^j a_{\ell} \leq 3 \Big\}\\
%         \{0\}& \, \, \, \text{otherwise}
%     \end{cases}\, .
% \end{align*}
% Two conditions must be satisfied: 1) their CD4 cell count must be below 550 cells/$\mu$L of blood and 2) they received fewer than 3 injection cycles in total.

Finally, employing the conditional reward defined in Section~\ref{subsection:target-trial-definition}, we condition on one year's worth of data for Patients A and B, i.e.  $k = 9$ for both individuals. Notably, patient A has received two injection cycles before day 365 whereas Patient B has only received one, since their second cycle starts at day 374. As such, given the feasible set definitions, Patient A would be eligible for a maximum of one more injections cycle whereas Patient B would eligible for up to two cycles. We are interested in optimizing treatment regimes for the next year, considering administering injections at 90-day intervals:
\begin{align*}
    \calR_k(h_k, \underline{t}_{k}, \underline{d}_k^\opt, \phi_{\calR 0}) = \frac{1}{4} \sum_{j=k+1}^{k+4} \E_\calE\big[Y_{j} \mid \overline{T}_{j} = \overline{t}_{j}, \overline{A}_{j-1} = (\overline{a}_{k}, d_k^\opt, \dots, d_{j-1}^\opt), \overline{W}_{k} = \overline{w}_k, \phi_{\calR 0}\big] \, .
\end{align*}
Given that outcomes are binary, $\gamma_j = \frac{1}{4}$ for all $j$ allows the reward to be interpreted as an average probability across future outcomes. As a comparison, we also estimate rewards under a fixed regime ``treat as early and as feasibly often as possible'', which we refer to as the ``always treated'' regime.

\subsubsection{Bayesian Joint modeling of INSPIRE Data}
The immediate rebounding of CD4 concentration after receiving IL-7 injections followed by a decaying effect with time can be seen in Figure~\ref{fig:inspire-cd4-example}; as such, we employ a dose-response (DR) relationship model to depict this phenomenon \citep{davidian2003nonlinear}. As above, we define a variable $Q_{ij}$ to be the indicator if patients have received \textit{any} treatment before visit $j$ and $T_{ij}^{\text{last}}$ to be the time since the last injection if there were any and 0 otherwise.
\begin{align*}
    Q_{ij} =
    \begin{cases}
        0 &\quad \text{if } j = 1\\
        \one \left[\sum_{k=1}^{j-1} A_{ik} > 0\right] &\quad \text{if } j > 1
    \end{cases} \, ;
    \qquad 
    T_{ij}^{\text{last}} =
    \begin{cases}
        0 &\quad \text{if } Q_{ij} = 0\\
        T_{ij} - \max \big\{T_{ik} A_{ik}\big\}_{k=1}^{j-1} &\quad \text{if } Q_{ij} = 1
    \end{cases} \, .
\end{align*}
Moreover, we use an exponential function to mimic the rapid diminishing effects of IL-7 with increasing time without injections in defining  $A_{ij}^{\text{DR}} = Q_{ij} \exp\left(- T_{ij}^{\text{last}} \eta_i\right)$ used in the outcome model, converging to 0 with increasing $T_{ij}^{\text{last}}$, and $\eta_i$ representing the individualized post-injection CD4 decay rate. Our model-specific regressors are: $H_{ij, \mathrm{A}} = \big[1, T_{ij}^{\text{last}}, T_{ij} - T_{i(j-1)}\big]$, $H_{ij, \mathrm{Y}} = \big[1, A_{ij}^{\text{DR}}, A_{ij}^{\text{DR}} * \operatorname{Age}_i, A_{ij}^{\text{DR}} * \operatorname{BMI}_i, Y_{i(j-1)}\big]$ and $H_{ij, \mathrm{T}} = \big[1, T_{ij}^{\text{last}}, \operatorname{Phase}_{ij}, \operatorname{Age}_i, \operatorname{BMI}_i\big]$. In the following equations, we specify three model components: visit process, treatment assignment process and outcome process.
\begin{align*}
    A_{ij} \mid \overline{T}_{ij}, \overline{A}_{i(j-1)}, \overline{Y}_{i(j-1)}, X_{i0}, \eta_i, \Uia, \phia &\sim \operatorname{Bern}\left(\Phi^{-1}\left(H_{ij, \mathrm{A}}^\top \phia + \Uia\right)\right)\\
    Y_{ij} \mid  \overline{T}_{ij}, \overline{A}_{i(j-1)}, \overline{Y}_{i(j-1)}, X_{i0}, \Uiy, \phiy &\sim \operatorname{Bern}\left(\Phi^{-1}\left(H_{ij, \mathrm{Y}}^\top\phiy + \Uiy\right)\right)\\
    T_{ij} - T_{i(j-1)} \mid \overline{A}_{i(j-1)}, \overline{Y}_{i(j-1)}, \overline{T}_{i(j-1)}, \operatorname{Phase}_{ij}, X_{i0}, \Uit, \phi_T, \sigma_{\mathrm{T}}^2 &\sim \mathcal{N}\left(H_{ij, \mathrm{T}}^\top \phi_{T} + \Uit, \sigma_{\mathrm{T}}^2\right)
\end{align*}
The interarrival times $T_{ij} - T_{i(j-1)}$ were modeled using a Normal distribution, incorporating the protocol phase of the patient as a categorical variable. Given the ITT model for modeling the uptake of injection method, we employ a binomial GLM model with probit link. We exclude patient information from visits where they are receiving their second or third injection in a cycle, defining the first post-treatment observation as the initial ``injection-free'' visit following an IL-7 cycle. We consider correlated random intercepts $\Uit, \Uiw, \Uia \mid \Sigma \sim \operatorname{MVN}\big(0, \Sigma\big)$ in the full joint model and similar conditional independence approaches in partial models $(Y, A)$, $(Y, T)$ and $(Y)$ as in the simulation study. We model the decay rate $\eta_i \mid \mu, \tau^2 \sim \mathcal{N}(\mu, \tau^2)$ and posit the following uninformative priors for our model parameters: $\phit \sim \operatorname{MVN}\left(0, 25 \cdot \mathbf{I}_{\operatorname{dim}(\phi_{\mathrm{T}})}\right)$, $\sigma_{\mathrm{T}}^2 \sim \operatorname{InvGamma}(0.1, 0.1)$, $\phia \sim \mathcal{N}(0, 25 \cdot \mathbf{I}_{\operatorname{dim}(\phia)})$, $\phiy \sim \mathcal{N}(0, 25 \cdot \mathbf{I}_{\operatorname{dim}(\phiy)})$, $\mu \sim \mathcal{N}(0, 25)$, $\tau^2 \sim \operatorname{InvGamma}(0.1, 0.1)$ and $\Sigma \sim \operatorname{InverseWishart}\big(\mathbf{I}_{3}, 3\big)$.

Four MCMC chains were run, each generating 100\,000 samples, discarding the first 50\,000 burn-in samples; traceplots for reward parameters $\phiy$, $\phiu$, $\mu$ and $\tau^2$ are available in Appendix~\ref{appendix-1:mcmc-traceplots}. After applying a thinning interval of 10, G-computation was run on the 10\,000 retained posterior samples to estimate the optimal and ``always treated`` reward.

\subsection{Data Analysis Results}

\begin{figure}[!ht]
    \centering   
    \includegraphics[width=\linewidth]{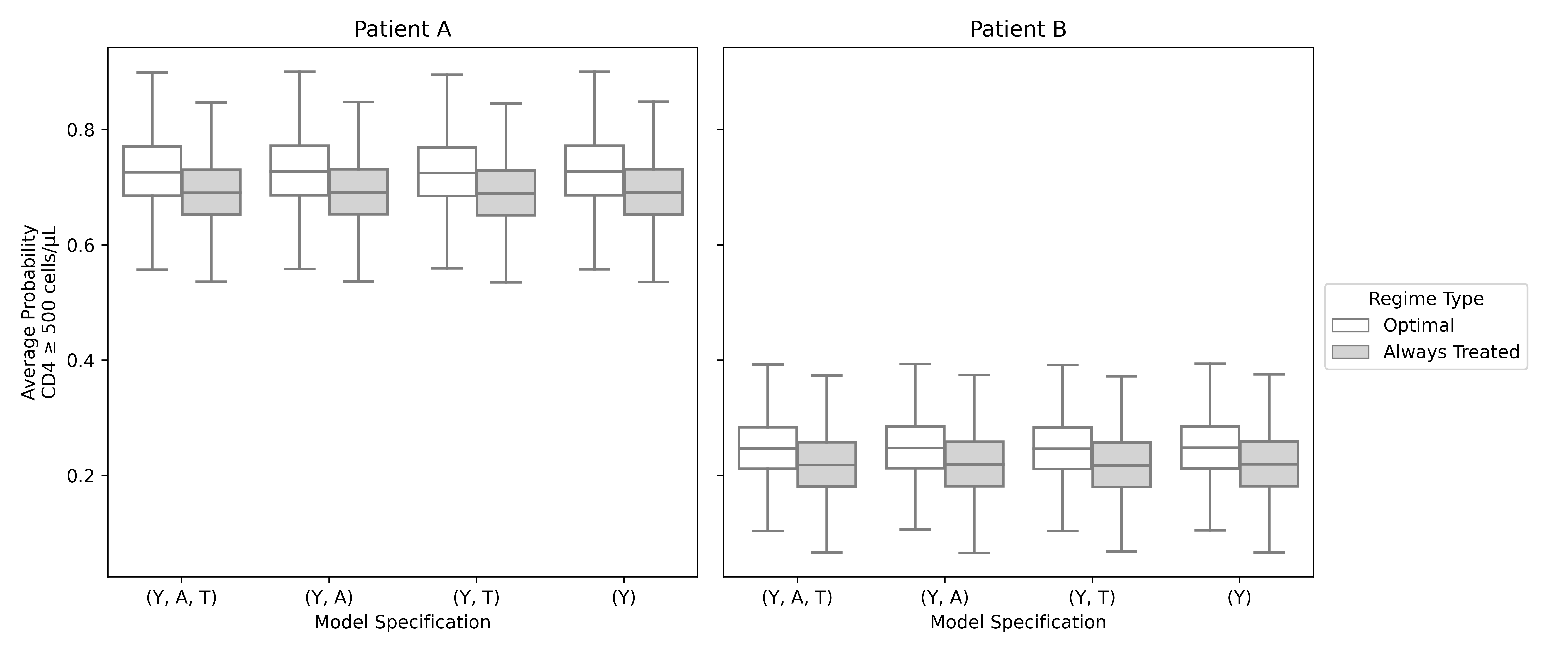}
    \caption{Estimated optimal reward under estimated optimal regime and estimated reward under an ``always treated'' regime for patients A and B under all four joint model specifications.}
    \label{fig:inspire-boxplots}
\end{figure}
In Figure~\ref{fig:inspire-boxplots}, we illustrate the posterior predictive rewards as boxplots if patient were seen for four more visits at 90 day intervals with treatment assigned under estimated optimal regime (in white) and under a fixed regime (in gray). From Figure~\ref{fig:inspire-boxplots}, we can see that all four model specifications yield similar reward estimates under both regimes, indicating that little subject-specific heterogeneity underpinned the observational treatment and visit processes. As expected, the reward under an optimal rewards yields a larger value than under the ``always treated'' regime.

Under the full joint model $(Y, A, T)$ and estimated optimal regime, which we assume hereon, Patient A was recommended $\widehat{\underline{d}}_k^\opt = (0, 1, 0, 0)$ at a frequency of 64.9\% followed by $(0, 1, 0, 0)$ 29.0\% of times whereas Patient B was recommended $(1, 1, 0, 0)$ and $(1, 0, 1, 0)$ respectively $54.5\%$ and $33.8\%$ of the time; more details of the frequency of treatment course recommendation is available in Appendix~\ref{appendix-individualized-treatment-frequency.}. This result can be understood as the model recommending to treat as often as permitted, with the preferred timing to depend on patient's history. For instance, from Figure~\ref{fig:inspire-cd4-example}, Patient A's longitudinal CD4 measurements are on the higher side whereas Patient B's CD4 count falls lower than Patient A. This explains why Patient A is more often recommended to receive an injection cycle after the first quarter than immediately whereas Patient B's most recommended course of treatment to receive injections as soon as possible, if eligible. Moreover, given Patient's A history, the average probability of having their CD4 count greater than 500 cells/$\mu$L is higher than Patient B's. This reward estimation which is explainable through patient history stems from random effects $\Uiy$ and $\eta_i$ being sampled from a distribution conditional on observed history. Lastly, injection recommendation is not heavily affected by patient characteristics given the statistically insignificant estimates of tailoring variable coefficients based on 95\% credible intervals; a summary of parameters from the fitted joint model is available in Table~\ref{table:inspire-summary-table} in Appendix~\ref{subsection:inspire-summary-table}. The similarity of results between model specifications can be attributed to the low correlation between model-specific random effects. While our proposed method and framework allows for correlated random effects, in this application the unobserved individual-level heterogeneity did not seem to be a major driver of the results. %While our proposed method and framework allows studying latent factors influencing observed heterogeneity, the discrepancies between observed and scheduled treatments, visits are likely due to data missing at random rather than being driven by historical factors.

% While the data analysis results aligns with the initial biological hypothesis that CD4 count rebounding immediately after IL-7 injections and decaying with time, 

% The data analysis results aligns with the overarching initial belief that, with injections rebounding CD4 count, optimizing a reward consisting CD4 information will yields a regime that suggests injecting as often as the restrictions would enable a medical provider. Interestingly, given the multiple decision points, our optimal regimes for both Patients A and B require that injections meet the maximum number of allotted cycles and are as spaced out as possible. This aligns with the biological hypothesis that CD4 counts declines over time, hence temporal spacing could maximize benefit in low immunological responders.

% A summary of parameters from the Bayesian joint model is available in Table~\ref{table:inspire-summary-table} in Appendix~\ref{subsection:inspire-summary-table}. In particular, we take interest in interaction terms, also referred as ``blip'' parameters in DTR literature \citep{moodie2007demystifying}. While none of the blip parameter estimates are statistically significant at the 5\% level, optimal treatment was not significantly affected by tailoring variable information due to the idea of injection as often as permitted. However, one area where this may have an impact is the timing of treatment in optimal regimes. For instance, some subpopulations seem to respond better to treatment and it may be the case that the timings of injections is less relevant than for other individuals.

\section{Discussion and Conclusion}

% Summary of paper contribution + Relation to prior work
In this article, we first build upon the target trial paradigm using the decision theory framework from~\citep{saarela2015predictive} and \citep{yiu2022causal} to both articulate optimal multi-stage DTRs and incorporate visit time information. In the same spirit as the potential outcome framework, assumptions relating $\calE$ and $\calO$ settings handle well the continuous nature of visit times in the definition of the causal estimand, here optimal rewards and treatment regimes \citep{dawid2010identifying}. Secondly, we propose using Bayesian joint modeling with irregular data to account for correlated random effects between different model components in estimating optimal treatment regimes and rewards. In particular, while \cite{hua2022personalized}, \cite{guan2020bayesian} and \cite{oganisian2022bayesian} also explore Bayesian estimation of optimal DTRs with irregular data, our work studies the induced bias by ignoring the correlated structure of random effects and how this bias translates in the estimation procedure of optimal rewards.

%A minor limitation is that ground truth conditional rewards are estimated via Monte Carlo sampling because of the potentially intractable distribution with respect to which integration is taken. However, we argue that 10\,000\,000 Monte Carlo draws are sufficient to assume precision up to five decimals in our simulation study, which is enough to investigate bias of both estimated rewards and estimated optimal DTRs.% In many cases, despite model misspecification of random effects, there may not be bias translating in the estimation of optimal rewards and DTRs. Further research delineating mathematical conditions that requires fitting a complete joint model, a computationally expensive procedure particularly due to MCMC sampling of latent variables.

% Limitations
Our notational framework assumes outcomes, covariates and treatments can only be measured at visits, akin to a marked point process. For instance, observational data collected from different data sources could have treatment assignments prescribed and clinical measurements taken on different days, requiring further refinement of our proposed framework. Moreover, our Bayesian joint model also makes strong parametric assumptions for the distribution of random effects, a restriction that can be relaxed in future work akin to~\cite{orihara2023addressing}. In the definition of our optimal rewards $\calR_{k}(h_k, \underline{t}_{k+1}, \underline{d}_k^\opt, \phi_{\calR 0})$, we assume that future decision times $\underline{t}_{k+1}$ are chosen when defining the target trial of interest. Akin to~\cite{guan2020bayesian} and~\cite{hua2022personalized}, they could be part of the decision-making process and included in the definition of optimal DTRs. In principle, the proposed framework allows for optimization of visit times, but studying this is left as future work. 
% the optimization procedure to produce deterministic decisions optimized to the individual rather
% rather to stem from stochastic distributions

In the data analysis, we highlight several limitations. Akin to~\cite{villain2019adaptive} and \cite{dong2020evaluating}, the number of IL-7 injections can be included in the optimization procedure with a clearer articulation of the target trial. Secondly, in the spirit of retaining a parsimonious outcome model, only effect modifiers for the magnitude of injection benefits were looked into, in that we could also look into individualized effects of treatment decay. We also assume that the visit process random effect does not depend on the protocol phase, an assumption which can be relaxed since visit times in irregular data can vary more as a study progresses. Lastly, injection schedules and their visit times can themselves be optimized, motivating further methodological extension of this work.

% The expensive nature of clinical trials cannot keep up with the increasing complexity of clinical research questions. While observational data becomes more readily available with technological advancements, our work addresses the overarching theme of developing techniques grounded in conceptual [...], statistical theory and computational feasibility.
% The rising complexity of clinical research questions outpaces the feasibility of conducting costly clinical trials. As observational data becomes increasingly accessible through technological advancements, our work contributes to this need of developing methodologies that are not only conceptually grounded but also rooted in robust statistical theory and computational feasibility.

% cherry on top?
% Finally, the rising complexity of clinical research questions outpaces the feasibility of conducting costly clinical trials. As observational data becomes increasingly accessible through technological advancements, our work contributes to this shift by developing methodologies that are conceptually grounded, statistically rigorous, and computationally feasible.
To conclude, longitudinal observational data often feature irregular and informative observation times; we have shown that treating these times as non-informative can hinder identification of the optimal DTR and proposed a joint modeling approach to account for their informative nature. This, coupled with our target trial approach to identifying the causal estimand can be used to improve the rigor of DTR estimation.

\bibliographystyle{Chicago}

\bibliography{Bibliography-MM-MC}

\appendix

\clearpage

\bigskip
\begin{center}
{\Huge\bf Supplementary Materials to ``Estimating Optimal Dynamic Treatment Regimes Using Irregularly Observed Data: A Target Trial Emulation and Bayesian Joint Modeling Approach''}
\end{center}

\section{Characterizing Optimal Regimes via Optimal Decision Functions}
\label{appendix-1:optimal-decision-functions}

% We primarily take interest in estimating optimal reward under optimal future treatment allocation, $\calR_{k}(h_k, \underline{t}_{k+1}, \underline{d}_{k}^\opt, \phi_\calR)$. 
A recursive definition of the reward $\calR_{k}(h_k, \underline{t}_{k+1}, \underline{d}_{k}^\opt, \phi_\calR)$ can be used for any non-terminal stage $k$:
\begin{align*}
    &\quad \calR_{k}(h_k, \underline{t}_{k+1}, \underline{d}_{k}^\opt, \phi_\calR)\\
    &= \gamma_{k+1} \E_\calE\left[Y_{k+1} \mid \overline{T}_{k+1} = \overline{t}_{k+1}, \overline{A}_{k} = \big(\overline{a}_{k-1}, d_{k}^\opt\big), \overline{W}_{k}=\overline{w}_{k}, \phi_\calR \right] \\
    &\quad + \sum_{j=k+2}^{K+1} \gamma_j \E_\calE\left[Y_{j} \mid \overline{T}_{j} = \overline{t}_j, \overline{A}_{j-1} = \big(\overline{a}_{k-1}, d_{k}^\opt, \dots, d_{j-1}^\opt\big), \overline{W}_{k}=\overline{w}_{k}, \phi_\calR \right]\\
    &= \gamma_{k+1} \E_\calE\left[Y_{k+1} \mid \overline{T}_{k+1} = \overline{t}_{k+1}, \overline{A}_{k} = \big(\overline{a}_{k-1}, d_{k}^\opt\big), \overline{W}_{k}=\overline{w}_{k}, \phi_\calR \right]
\end{align*}
\begin{align*}
    &\qquad + \bigintsss_{w_{k+1}} \left\{\sum_{j=k+2}^{K+1} \gamma_j \E_\calE\left[Y_{j} \mid \overline{T}_{j} = \overline{t}_j, \overline{A}_{j-1} = \big(\overline{a}_{k-1}, d_{k}^\opt, \dots, d_{j-1}^\opt\big), \overline{W}_{k+1}=\overline{w}_{k+1}, \phi_\calR \right]\right\}\\
    &\hspace{2cm} \times f_\calE\big(w_{k+1} \mid \overline{t}_{k+1}, \big(\overline{a}_{k-1}, d^\opt_k(h_k)\big), \overline{w}_{k}, \phi_{\calR 0}\big) \, d w_{k+1}\\
    &= \gamma_{k+1} \E_\calE\left[Y_{k+1} \mid \overline{T}_{k+1} = \overline{t}_{k+1}, \overline{A}_{k} = \big(\overline{a}_{k-1}, d_{k}^\opt\big), \overline{W}_{k}=\overline{w}_{k}, \phi_\calR \right]\\
    &\qquad +
    \bigintsss_{w_{k+1}} \calR_{k+1}(h_{k+1}, \underline{t}_{k+2}, \underline{d}^\opt_{k+1}, \phi_{\calR}) \, f_\calE\big(w_{k+1} \mid \overline{t}_{k+1}, \big(\overline{a}_{k-1}, d^\opt_k(h_k)\big), \overline{w}_{k}, \phi_{\calR 0}\big) \, d w_{k+1} \,.
\end{align*}
As such, $d^\opt_j$ for $j = k, \dots, K$ can be characterized recursively with $d_K^\opt$ being the base case:
\begin{align*}
    d_{K}^\opt(h_K, \phi_{\calR}) &= \argmax_{d_K} \,\gamma_{K+1} \E_\calE\left[Y_{k+1} \mid \overline{T}_{k+1} = \overline{t}_{K+1}, \overline{A}_{K} = \big(\overline{a}_{K-1}, d_{K}\big), \overline{W}_{K}=\overline{w}_{K}, \phi_\calR \right]\\
    &= \argmax_{a_K \in \Psi_K(h_K)} \, \gamma_{K+1} \E_\calE\left[Y_{k+1} \mid \overline{T}_{k+1} = \overline{t}_{K+1}, \overline{A}_{K} = \overline{a}_{K}, \overline{W}_{K}=\overline{w}_{K}, \phi_\calR \right]\\
    d_j^\opt(h_j, \phi_{\calR}) &= \argmax_{d_j} \, \Bigg\{ \gamma_{j+1} \E_\calE\left[Y_{j+1} \mid \overline{T}_{j+1} = \overline{t}_{j+1}, \overline{A}_{j} = \big(\overline{a}_{j-1}, d_{j}\big), \overline{W}_{j}=\overline{w}_{j}, \phi_\calR \right]\\
    &\qquad +
    \bigintsss_{w_{j+1}} \calR_{j+1}(h_{j+1}, \underline{t}_{j+2}, \underline{d}^\opt_{j+1}, \phi_{\calR}) \, f_\calE\big(w_{j+1} \mid \overline{t}_{j+1}, \big(\overline{a}_{j-1}, d_j(h_j)\big), \overline{w}_{j}, \phi_{\calR 0}\big) \, d w_{j+1} \Bigg\}\\
    &= \argmax_{a_j \in \Psi_j(h_j)} \, \Bigg\{ \gamma_{j+1} \E_\calE\left[Y_{j+1} \mid \overline{T}_{j+1} = \overline{t}_{j+1}, \overline{A}_{j} = \overline{a}_{j}, \overline{W}_{j}=\overline{w}_{j}, \phi_\calR \right]\\
    &\qquad +
    \bigintsss_{w_{j+1}} \calR_{j+1}(h_{j+1}, \underline{t}_{j+2}, \underline{d}^\opt_{j+1}, \phi_{\calR}) \, f_\calE(w_{j+1} \mid \overline{t}_{j+1}, \overline{a}_j, \overline{w}_{j}, \phi_{\calR 0}) \, d w_{j+1} \Bigg\} \, .
\end{align*}

\section{Derivation of Rewards}
\label{appendix-1:definition-rewards}

First, we begin with factorizing $f_\calE(\overline{w}_{j+1}, \overline{t}_{j+1}, \overline{a}_{j} \mid u, \phiw, \phit^*, \phia^*)$ for any $j = 1, \dots, J$:
\begin{align*}
    &\quad f_\calE(\overline{w}_{j+1}, \overline{t}_{j+1}, \overline{a}_{j} \mid u, \phiw, \phit^*, \phia^*)\\
    &= f_\calE(w_{j+1}, t_{j+1} \mid \overline{w}_{j}, \overline{t}_{j}, \overline{a}_{j}, \uw, \ut, \phiw, \phit^*) \left\{\prod_{\ell=1}^j f_\calE(a_\ell, w_{\ell}, t_{\ell} \mid \overline{a}_{\ell-1}, \overline{w}_{\ell-1}, \overline{t}_{\ell-1}, u, \phiw, \phit^*, \phia^*) \right\} \\
    &= f_\calE(w_{j+1} \mid \overline{t}_{j+1}, \overline{w}_j, \overline{a}_j, \uw, \phiw) \, f_\calE(t_{j+1} \mid \overline{w}_j, \overline{t}_j, \overline{a}_j, \ut, \phit^*) \\
    &\quad \times \left\{\prod_{\ell=1}^j f_\calE(a_\ell \mid  \overline{w}_{\ell}, \overline{t}_{\ell}, \overline{a}_{\ell-1}, \ua, \phia^*) f_\calE(w_\ell \mid  \overline{t}_{\ell}, \overline{a}_{\ell-1}, \overline{w}_{\ell-1}, \uw, \phiw) f_\calE(t_\ell \mid \overline{a}_{\ell-1}, \overline{w}_{\ell-1}, \overline{t}_{\ell-1}, \ut, \phit^*) \right\}\\
    &= \left\{\prod_{\ell=1}^{j+1} f_\calE(w_\ell \mid  \overline{t}_{\ell}, \overline{a}_{\ell-1}, \overline{w}_{\ell-1}, \uw, \phiw) \right\}
    \left\{\prod_{\ell=1}^{j+1} f_\calE(t_\ell \mid  \overline{a}_{\ell-1}, \overline{w}_{\ell-1}, \overline{t}_{\ell-1}, \ut, \phit^*) \right\}\\ 
    &\quad \times\left\{\prod_{\ell=1}^{j} f_\calE(a_\ell \mid  \overline{w}_{\ell}, \overline{t}_{\ell}, \overline{a}_{\ell-1}, \ua, \phia^*) \right\}\\
    &= \left\{\prod_{\ell=1}^{j+1} f_\calE(w_\ell \mid  \overline{t}_{\ell}, \overline{a}_{\ell-1}, \overline{w}_{\ell-1}, \uw, \phiw) \right\}
    \left\{\prod_{\ell=1}^{j+1} f_\calE(t_\ell \mid \overline{a}_{\ell-1}, \overline{t}_{\ell-1}, \ut, \phit^*) \right\} \\ 
    &\quad \times\left\{\prod_{\ell=1}^{j} f_\calE(a_\ell \mid  \overline{t}_{\ell-1}, \overline{a}_{\ell-1}, \ua, \phia^*) \right\}\\
    &= \left\{\prod_{\ell=1}^{j+1} f_\calE(w_\ell \mid  \overline{t}_{\ell}, \overline{a}_{\ell-1}, \overline{w}_{\ell-1}, \uw, \phiw) \right\} f_\calE (\overline{t}_{j+1}, \overline{a}_j \mid \ua, \ut, \phia^*, \phit^*) \\
    &= \left\{\prod_{\ell=1}^{j+1} f_\calE(w_\ell \mid  \overline{t}_{\ell}, \overline{a}_{\ell-1}, \overline{w}_{\ell-1}, \uw, \phiw) \right\} f_\calE (\overline{t}_{j+1}, \overline{a}_j \mid \phia^*, \phit^*) \, .
\end{align*}
The third to last equation is obtained from using the specifications of treatment and observation processes under $\calE$ as per~\ref{project-1-assm:3}. Using the derivation above, we can show that $f_\calE(\overline{w}_j \mid \overline{t}_j, \overline{a}_{j-1}, \uw, \phiw)$ factorizes into a product of conditional distributions:
\begin{align*}
    &\qquad f_\calE(\overline{w}_j \mid \overline{t}_j, \overline{a}_{j-1}, \uw, \phiw, \phia^*, \phit^*)\\
    &= \frac{f_\calE(\overline{w}_j, \overline{t}_j, \overline{a}_{j-1} \mid u, \phiw, \phia^*, \phit^*)}{f_\calE(\overline{t}_j, \overline{a}_{j-1} \mid u, \phia^*, \phit^*)}\\
    &= \frac{\left\{\prod_{\ell=1}^{j} f_\calE(w_\ell \mid  \overline{t}_{\ell}, \overline{a}_{\ell-1}, \overline{w}_{\ell-1}, \uw, \phiw) \right\} f_\calE (\overline{t}_{j}, \overline{a}_{j-1} \mid \phia^*, \phit^*)}{f_\calE(\overline{t}_{j}, \overline{a}_{j-1} \mid \phia^*, \phit^*)} && \text{using the previous derivation}\\
    &= \frac{\left\{\prod_{\ell=1}^{j} f_\calE(w_\ell \mid  \overline{t}_{\ell-1}, \overline{a}_{\ell-1}, \overline{w}_{\ell-1}, \uw, \phiw) \right\} \cancel{f_\calE (\overline{t}_{j}, \overline{a}_{j-1} \mid \phia^*, \phit^*)}}{\cancel{f_\calE(\overline{t}_{j}, \overline{a}_{j-1} \mid \phia^*, \phit^*)}}\\
    &= \prod_{\ell=1}^{j} f_\calE(w_\ell \mid  \overline{t}_{\ell-1}, \overline{a}_{\ell-1}, \overline{w}_{\ell-1}, \uw, \phiw)
\end{align*}
The density from the conditional reward $f_\calE(w_j \mid \overline{t}_j, \overline{a}_{j-1}, \overline{w}_{k}, \uw, \phiw)$ only conditions on partial observations $\overline{W}_k = \overline{w}_k$. For $j > k+1$, we express it as:
\begin{align*}
    &\qquad f_\calE(w_j \mid \overline{t}_j, \overline{a}_{j-1}, \overline{w}_{k}, \uw, \phiw)\\
    &= \bigintsss_{w_{k+1}} \dots \bigintsss_{w_{j+1}} f_\calE(w_{k+1}, \dots, w_{j+1}, w_j \mid \overline{t}_{j}, \overline{a}_{j-1}, \overline{w}_{j-1}, \uw, \phiw) \d w_{j+1} \dots \d w_{k+1} \d \uw \, .
\end{align*}
The stage $j > k$ expected outcome conditional can then be rewritten:
\begin{align*}
    &\qquad \E_\calE\left[Y_{j} \mid \overline{T}_{j} = \overline{t}_j, \overline{A}_{j-1} = \overline{d}_{j-1}, \overline{W}_{k}=\overline{w}_{k}, \phi_\calR \right]\\
    &= \bigintsss_{\uw} \bigintsss_{w_j} y_j \, f_\calE(w_j \mid \overline{t}_j, \overline{a}_{j-1}, \overline{w}_{k}, \uw, \phiw) \, f_\calE(\uw \mid \overline{t}_j, \overline{a}_{j-1}, \overline{w}_{j-1} \phiu) \, \d w_j \d \uw\\
    &= \bigintsss_{\uw} \bigintsss_{w_{k+1}} \dots \bigintsss_{w_j} y_k \, \prod_{\ell=k+1}^{j} f_\calE(w_\ell \mid  \overline{t}_{\ell-1}, \overline{a}_{\ell-1}, \overline{w}_{\ell-1}, \uw, \phiw) \, f_\calE(\uw \mid \overline{t}_j, \overline{a}_{j-1}, \overline{w}_{j-1} \phiu) \d w_j \dots \d w_{k+1} \d \uw \, .
\end{align*}
Using this, we can then rewrite our reward $\calR_{k}(h_k, \underline{t}_{k+1}, \underline{d}_{k}, \phi_\calR)$ using conditional distributions of the two following forms: $f_\calE(w_j \mid \overline{t}_j, \overline{a}_{j-1}, \overline{w}_{j-1}, \uw, \phiw)$ for some $j \leq K + 1$ and $f_\calE(\uw \mid \phiu)$. Grouping several alike terms and using Bayes' formula from the third to fourth step yields the following derivation:
\begin{align*}
    &\qquad \calR_{k}(h_k, \underline{t}_{k+1}, \underline{d}_k, \phi_\calR)\\
    &= \sum_{j=k+1}^{K+1} \gamma_j \E_\calE\left[Y_{j} \mid \overline{T}_{j} = \overline{t}_j, \overline{A}_{j-1} = \overline{d}_{j-1}, \overline{W}_k=\overline{w}_{k}, \phi_\calR \right]\\
    &= \bigintsss_{\uw} \left\{ \sum_{j=k+1}^{K+1} \gamma_j \E_\calE\left[Y_{j} \mid \overline{T}_{j} = \overline{t}_j, \overline{A}_{j-1} = \overline{d}_{j-1}, \overline{W}_k=\overline{w}_{k}, \uw, \phiw \right] \right\} f_\calE(\uw \mid \overline{t}_{k}, \overline{w}_k, \overline{d}_{k-1}, \phi_\calR) \d \uw \\
    &= \bigintsss_{\uw} \Bigg\{ \bigintsss_{w_{k+1}} \dots \bigintsss_{w_{K+1}} \sum_{j=k+1}^{K+1} \gamma_j y_j \prod_{j=k+1}^{K+1} f_\calE(w_j \mid \overline{t}_j, \overline{d}_{j-1}, \overline{w}_{j-1}, \uw, \phiw) \, \d w_{K+1} \dots \d w_{j+1}\Bigg\}\\
    &\qquad \times f_\calE(\uw \mid \overline{w}_j, \overline{t}_{j}, \overline{d}_{j-1}, \phi_\calR) \d \uw\\
    &= \bigintsss_{\uw} \Bigg\{ \bigintsss_{w_{k+1}} \dots \bigintsss_{w_{K+1}} \sum_{j=k+1}^{K+1} \gamma_j y_j \prod_{j=k+1}^{K+1} f_\calE(w_j \mid \overline{t}_j, \overline{d}_{j-1}, \overline{w}_{j-1}, \uw, \phiw) \, \d w_{K+1} \dots \d w_{k+1}\Bigg\}\\
    &\qquad \times \frac{\left\{\prod_{j=1}^{k} f_\calE(w_j \mid \overline{t}_{j}, \overline{d}_{j-1}, \overline{w}_{j-1}, \uw, \phiw)\right\} f_\calE(\uw \mid \phiu)}{\bigintss_{\uw} \left\{\prod_{j=1}^{k} f_\calE(w_k \mid \overline{t}_{k}, \overline{d}_{k-1}, \overline{w}_{k-1}, \uw, \phiw) \right\} f_\calE(\uw \mid \phiu) \d \uw} \d \uw \\
    &= \bigintsss_{\uw} \bigintsss_{w_{k+1}} \dots \bigintsss_{w_{K+1}} \left\{\sum_{j=k+1}^{K+1} \gamma_j y_j \right\}\\
    &\qquad \times \frac{\left\{\prod_{j=1}^{K+1} f_\calE(w_j \mid \overline{t}_{j}, \overline{d}_{j-1}, \overline{w}_{j-1}, \uw, \phiw)\right\} f_\calE(\uw \mid \phiu)}{\bigintss_{\uw} \left\{\prod_{j=1}^{k} f_\calE(w_j \mid \overline{t}_{j}, \overline{d}_{j-1}, \overline{w}_{j-1}, \uw, \phiw) \right\} f_\calE(\uw \mid \phiu) \d \uw} \d w_{K+1} \dots \d w_{k+1} \d \uw \,,
\end{align*}
where the last step in the derivation above combines products of $f_\calE(w_j \mid \overline{t}_{j}, \overline{d}_{j-1}, \overline{w}_{j-1}, \uw, \phiw)$ for indices $j = 1, \dots, k$ and $j = k+1, \dots, K+1$.

\section{Extended G-Computation Proof}
\label{appendix-1:g-computation-proof}

\begin{proof}
    We make use of assumptions~\ref{project-1-assm:1},~\ref{project-1-assm:2} and~\ref{project-1-assm:4} to convert $\calE-$distributions into $\calO-$distributions. We start with the ratio of $\calE$ and $\calO$ random effects $\uw$ to be equal to 1:
    \begin{align*}
        \frac{f_\calO(\uw \mid \phiu)}{f_\calE(\uw \mid \phiu)} = 1 \, ,
    \end{align*}
    for all possible $\uw$ due to~\ref{project-1-assm:3}. We also consider the following importance sampling ratio:
    \begin{align*}
        \frac{f_\calO(w_k \mid \overline{t}_{k}, \overline{d}_{k-1}, \overline{w}_{k-1}, \uw, \phiw)}{f_\calE(w_k \mid \overline{t}_{k}, \overline{d}_{k-1}, \overline{w}_{k-1}, \uw, \phiw)} = 1
    \end{align*}
    to be well-defined, i.e. have a non-zero denominator, due to~\ref{project-1-assm:4} and to be equal to $1$ for any $k = 1, \dots, K + 1$ due to~\ref{project-1-assm:2}. With these stability assumptions, we directly convert $\calE$ to $\calO$ densities through direct substitions
    \begin{align*}
        &\qquad \calR_k(h_k, \underline{t}_{k+1}, \underline{d}_{k}, \phi_\calR) \\
        &= \bigintsss_{\uw} \bigintsss_{w_{k+1}} \dots \bigintsss_{w_{K+1}}  \sum_{j=k+1}^{K+1} \gamma_j y_j \\
        &\quad \times \frac{\left\{\prod_{j=k+1}^{K+1} f_\calE(w_j \mid \overline{t}_{j}, \overline{d}_{j-1}, \overline{w}_{j-1}, \uw, \phiw)\right\} f_\calE(\uw \mid \phiu)}{\bigintss_{\uw} \left\{\prod_{j=1}^{k} f_\calE(w_j \mid \overline{t}_{j}, \overline{d}_{j-1}, \overline{w}_{j-1}, \uw, \phiw) \right\} f_\calE(\uw \mid \phiu) d \uw} \, d w_{K+1} \dots d w_{k+1} d \uw \\
        &= \bigintsss_{\uw} \bigintsss_{w_{k+1}} \dots \bigintsss_{w_{K+1}}  \sum_{j=k+1}^{K+1} \gamma_j y_j \\
        &\quad \times \frac{\left\{\prod_{j=k+1}^{K+1} f_\calO(w_j \mid \overline{t}_{j}, \overline{d}_{j-1}, \overline{w}_{j-1}, \uw, \phiw)\right\} f_\calO(\uw \mid \phiu)}{\bigintss_{\uw} \left\{\prod_{j=1}^{k} f_\calO(w_j \mid \overline{t}_{j}, \overline{d}_{j-1}, \overline{w}_{j-1}, \uw, \phiw) \right\} f_\calO(\uw \mid \phiu) d \uw} \, d w_{K+1} \dots d w_{k+1} d \uw  \, .
\end{align*}
The positivity assumption~\ref{project-1-assm:4} is implicitly used when evoking $f_\calE(w_j \mid \overline{t}_j, \overline{a}_{j-1}, \overline{w}_{j-1}, \uw, \phiw)$ where the conditional set must have non-zero density. Given that rewards are defined under $\calE$, we require that the measure of the conditional set $(\overline{t}_j, \overline{a}_{j-1}, \overline{w}_{j-1})$ under $\calE$ be absolutely continuous respect to its $\calO$ analogue, i.e.:
\begin{align*}
    f_\calE(\overline{t}_j, \overline{a}_{j-1}, \overline{w}_{j-1}) > 0 &\Rightarrow f_\calO(\overline{t}_k, \overline{a}_{j-1}, \overline{w}_{j-1}) > 0\\
    \Leftrightarrow f_\calE(\overline{t}_j, \overline{a}_{j-1}, \overline{w}_{j-1} \mid u, \phiw, \phia^*, \phit^*) > 0 &\Rightarrow f_\calO(\overline{t}_j, \overline{a}_{j-1}, \overline{w}_{j-1} \mid u, \phi) > 0 \qquad (*)
\end{align*}
for all $j=1, \dots, K+1$ and all possible $u, \phi, \phia^*, \phit^*$ values. Akin to the factorization in~\ref{appendix-1:definition-rewards}, the densities can be expressed as:
\begin{align*}
    f_\calE(\overline{t}_j, \overline{a}_{j-1}, \overline{w}_{j-1} \mid u, \phiw, \phia^*, \phit^*) &= \left\{\prod_{\ell=1}^{j} f_\calE(t_\ell \mid \overline{a}_{\ell-1}, \overline{t}_{\ell-1}, \ut, \phit^*) \right\}\left\{\prod_{\ell=1}^{j-1} f_\calE(a_\ell \mid  \overline{t}_{\ell-1}, \overline{a}_{\ell-1}, \ua, \phia^*) \right\}\\ 
    &\quad \times\ \left\{\prod_{\ell=1}^{j-1} f_\calE(w_\ell \mid  \overline{t}_{\ell}, \overline{a}_{\ell-1}, \overline{w}_{\ell-1}, \uw, \phiw) \right\}\\
    f_\calO(\overline{t}_j, \overline{a}_{j-1}, \overline{w}_{j-1} \mid u, \phi) &= \left\{\prod_{\ell=1}^{j} f_\calO(t_\ell \mid \overline{a}_{\ell-1}, \overline{w}_{\ell-1}, \overline{t}_{\ell-1}, \ut, \phit) \right\}\\
    &\quad \times \left\{\prod_{\ell=1}^{j-1} f_\calO(w_\ell \mid  \overline{t}_{\ell}, \overline{a}_{\ell-1}, \overline{w}_{\ell-1}, \uw, \phiw) \right\}\\ 
    &\quad \times\left\{\prod_{\ell=1}^{j-1} f_\calO(a_\ell \mid  \overline{t}_{\ell-1}, \overline{a}_{\ell-1}, \overline{w}_{\ell-1}, \ua, \phia) \right\} \, .
\end{align*}
% A strong induction argument can be applied such that $f_\calE(w_j \mid  \overline{t}_{j}, \overline{a}_{j-1}, \overline{w}_{j-1}, \uw, \phiw)$ is well-defined for arbitrary $j > 1$. By~\ref{project-1-assm:4}, the $\calO$ analogue is also well-defined. The inductive step then becomes to show that:
% \begin{align*}
%     f_\calE(\overline{t}_{j+1}, \overline{a}_{j}, \overline{w}_{j} \mid u, \phiw, \phia^*, \phit^*) > 0 &\Rightarrow f_\calO(\overline{t}_{j+1}, \overline{a}_{j}, \overline{w}_{j} \mid u, \phi) > 0 \,,
% \end{align*}
The factorization of joint densities can be incorporated into the condition $(*)$ above. Under~\ref{project-1-assm:2}, it simplifies to $f_\calE(t_{j+1} \mid \overline{a}_{j}, \overline{t}_j, \phit^*) > 0 \Rightarrow f_\calO(t_{j+1} \mid \overline{a}_{j}, \overline{t}_j, \overline{w}_j, \ut, \phit) > 0$ and $f_\calE(a_{j} \mid \overline{t}_{j}, \overline{a}_{j-1}, \phia^*) > 0 \Rightarrow f_\calO(a_{j} \mid \overline{w}_{j}, \overline{t}_j, \overline{a}_j, \ua, \phia) > 0$ for any $j$, which is the positivity assumption posited in~\ref{project-1-assm:4}.
\end{proof}

\section{Posterior Estimation of Rewards}

\subsection{Justification of Bayesian Estimator via de Finetti's Representation Theorem}
\label{appendix-1:de-finetti}

Following \cite{saarela2023role}, we show that the rewards for a yet-to-be-observed individual $i > n$ evaluated at the true $\phi_{\mathrm{W}0}$ value is the limiting case of an expression that would serve as a natural Bayesian estimator using de Finetti's Representation Theorem.
\begin{align*}
    &\quad \lim_{n \rightarrow \infty} \E_\calE\left[Y_{j} \mid \overline{T}_{j} = \overline{t}_{j}, \overline{A}_{j-1} = \overline{d}_{j-1}, \overline{W}_{j-1} = \overline{w}_{j-1}, \calF_n\right]\\
    &= \lim_{n \rightarrow \infty} \bigintsss_{\phi_\calR} \bigintsss_{\uw} \E_\calE\left[Y_{j} \mid \overline{T}_{j} = \overline{t}_{j}, \overline{A}_{j-1} = \overline{d}_{j-1}, \overline{W}_{j-1} = \overline{w}_{j-1}, \uw, \phiw\right] f_\calO(\uw \mid \phiu) \,\pi_n(d\phi_\calR)\\
    &= \bigintsss_{\phi_{\calR}} \bigintsss_{\uw} \E_\calE\left[Y_{j} \mid \overline{T}_{j} = \overline{t}_{j}, \overline{A}_{j-1} = \overline{d}_{j-1}, \overline{W}_{j-1} = \overline{w}_{j-1}, \uw\right] f_\calO(\uw \mid \phiu) \, \delta_{\phi_{\calR0}}(d \phi_{\calR})\\
    &= \E_\calE \left[Y_{j} \mid \overline{T}_{j} = \overline{t}_j, \overline{A}_{j-1} = \overline{d}_{j-1}, \overline{W}_{j-1} = \overline{w}_{j-1}, \phi_{\calR0} \right] \, ,
\end{align*}
which is the $j$th stage intermediate reward under visit times $\overline{t}_j$, treatments $\overline{a}_{j-1}$ and the \textit{true} parameter values $\phi_{\calR0}$, assuming that the posterior converges to a degenerate distribution at $\phi_{\calR0}$. The last equality uses $Y_{j} \indep \calF_n \mid \phi$ for $1 \leq j \leq K$, $i \notin \{1, \dots, n\}$ from the representation theorem. With finite $n$, the above can be approximated by the posterior predictive expectation: %The latter also implies the existence of a (possibly infinite-dimensional) parameter $\phi$ such that, in the case of finite $n$
\begin{align*}
    &\qquad \E_\calE\left[Y_{j} \mid \overline{T}_{j} = \overline{t}_{j}, \overline{A}_{j-1} = \overline{d}_{j-1}, \overline{W}_{j-1} = \overline{w}_{j-1}, \phi_{\calR0}\right]\\
    &\approx \E_\calE\left[Y_{j} \mid \overline{T}_{j} = \overline{t}_{j}, \overline{A}_{j-1} = \overline{a}_{j-1}, \overline{W}_{j-1} = \overline{w}_{j-1}, \calF_n\right]\\
    &= \bigintsss_{\phi_\calR} \E_\calE\left[Y_{j} \mid \overline{T}_{j} = \overline{t}_{j}, \overline{A}_{j-1} = \overline{a}_{j-1}, \overline{W}_{j-1} = \overline{w}_{j-1}, \phi_\calR\right]  \, \pi_n(d \phi_\calR)\\
    &\approx \frac{1}{B} \sum_{b=1}^B \E_\calE\left[Y_{j} \mid \overline{T}_{j} = \overline{t}_{j}, \overline{A}_{j-1} = \overline{a}_{j-1}, \overline{W}_{j-1} = \overline{w}_{j-1}, \phi_\calR^{(b)}\right] \, ,
\end{align*}
where $\big\{\phi_\calR^{(b)}\big\}_{b=1}^B$ are drawn from $\pi_n(\phi_\calR)$ following an MCMC procedure.

\subsection{Monte Carlo Approximation to Optimal and Non-Optimal Rewards}
\label{appendix-1:monte-carlo-approximation}

We also introduce a Monte Carlo estimation procedure to estimate reward values for a fixed value of $\phi_\calR$. Rewards involve nested integration which can be estimated via repeated sampling from $f_\calO(\cdot)$ densities. Recall that, from Appendix~\ref{appendix-1:definition-rewards}, that the conditional reward can be written as:
\begin{align*}
    &\qquad \calR_{k}(h_k, \underline{t}_{k+1}, \underline{d}_k, \phi_\calR)\\
    &= \bigintsss_{\uw} \Bigg\{ \bigintsss_{w_{k+1}} \dots \bigintsss_{w_{K+1}} \sum_{j=k+1}^{K+1} \gamma_k y_k \prod_{j=k+1}^{K+1} f_\calE(w_j \mid \overline{t}_j, \overline{d}_{j-1}, \overline{w}_{j-1}, \uw, \phiw) \, \d w_{K+1} \dots \d w_{k+1}\Bigg\}\\
    &\qquad \times f_\calE(\uw \mid \overline{w}_k, \overline{t}_{k}, \overline{d}_{k-1}, \phi_\calR) \d \uw \, .
\end{align*}
To sample random effects conditional on observed history $\uw \sim f_\calO(\uw \mid \overline{t}_j, \overline{w}_j, \overline{a}_{j-1}, \phiu)$, we resort to a Metropolis argument. This can be done because we are able to evaluate the ratio of its density, the latter which can be expanded via Bayes' Theorem:
\begin{align*}
    f_\calE(\uw \mid \overline{w}_k, \overline{t}_{k}, \overline{d}_{k-1}, \phi_\calR) = \frac{\left\{\prod_{j=1}^{k} f_\calE(w_j \mid \overline{t}_{j}, \overline{d}_{j-1}, \overline{w}_{j-1}, \uw, \phiw)\right\} f_\calE(\uw \mid \phiu)}{\bigintss_{\uw} \left\{\prod_{j=1}^{k} f_\calE(w_j \mid \overline{t}_{j}, \overline{d}_{j-1}, \overline{w}_{j-1}, \uw, \phiw) \right\} f_\calE(\uw \mid \phiu) \d \uw} \, .
\end{align*}
The key is that the normalizing constant on the denominator does not depend on $\uw$ as it is being marginalized out; otherwise, the denominator would also be difficult to evaluate. As a result, the ratio of densities in the Metropolis algorithm, say under random effect values of $\uw'$ and $\uw^{(r-1)}$, can be computed as a ratio of densities:
\begin{align*}
    &\quad \frac{f_\calE(\uw^{'} \mid \overline{w}_k, \overline{t}_{k}, \overline{d}_{k-1}, \phi_\calR)}{f_\calE(\uw^{(r - 1)} \mid \overline{w}_k, \overline{t}_{k}, \overline{d}_{k-1}, \phi_\calR)}\\
    &= \frac{\frac{\left\{\prod_{j=1}^{k} f_\calE(w_j \mid \overline{t}_{j}, \overline{d}_{j-1}, \overline{w}_{j-1}, \uw^{'}, \phiw)\right\} f_\calE(\uw^{'} \mid \phiu)}{\bigintsss_{\uw} \left\{\prod_{j=1}^{k} f_\calE(w_j \mid \overline{t}_{j}, \overline{d}_{j-1}, \overline{w}_{j-1}, \uw, \phiw) \right\} f_\calE(\uw \mid \phiu) \d \uw}}{\frac{\left\{\prod_{j=1}^{k} f_\calE(w_j \mid \overline{t}_{j}, \overline{d}_{j-1}, \overline{w}_{j-1}, \uw^{(r-1)}, \phiw)\right\} f_\calE(\uw^{(r-1)} \mid \phiu)}{\bigintsss_{\uw} \left\{\prod_{j=1}^{k} f_\calE(w_j \mid \overline{t}_{j}, \overline{d}_{j-1}, \overline{w}_{j-1}, \uw, \phiw) \right\} f_\calE(\uw \mid \phiu) \d \uw}}\\
    &= \frac{\left\{\prod_{j=1}^{k} f_\calE(w_j \mid \overline{t}_{j}, \overline{d}_{j-1}, \overline{w}_{j-1}, \uw^{'}, \phiw)\right\} f_\calE(\uw^{'} \mid \phiu)}{\left\{\prod_{j=1}^{k} f_\calE(w_j \mid \overline{t}_{j}, \overline{d}_{j-1}, \overline{w}_{j-1}, \uw^{(r-1)}, \phiw)\right\} f_\calE(\uw^{(r-1)} \mid \phiu)} \, .
\end{align*}
We use a proposal distribution of $f_\calO(\uw \mid \phiu)$, a normal distribution from which we can easily sample. With this, we can estimate the reward using a Monte Carlo sample of size $R$ through the following steps:
\begin{enumerate}
    \item For iteration $1 \leq r \leq R$, sample a proposal $\uw^{'}$ from the prior $f_\calE(\uw \mid \phiu)$.
    \item Sample a dummy $\mathsf{U} \sim \operatorname{Unif}(0, 1)$, assign $\uw^{(r)}$ via:
    \begin{align*}
        \uw^{(r)} =
        \begin{cases}
            \uw^{'} & \text{if } \mathsf{U} \leq \min\left(\frac{f_\calE(\uw^{'} \mid \overline{w}_k, \overline{t}_{k}, \overline{d}_{k-1}, \phi_\calR)}{f_\calE(\uw^{(r - 1)} \mid \overline{w}_k, \overline{t}_{k}, \overline{d}_{k-1}, \phi_\calR)}, 1\right) \,,\\
            \uw^{(r - 1)}&\text{otherwise.}
        \end{cases}
    \end{align*}
    \item Sample $w_{j+1}^{(r)} \sim f_\calO(w_{j+1} \mid \overline{t}_{j+1}, \overline{w}_j, \overline{a}_{j}, \uw^{(r)}, \phiw)$ starting from $j = k$.
    \item Considering all valid $a_{j+1} \in \Psi_{j+1}(h_{j+1})$ possible arms where $h_{j+1} = \left(h_j, t_{j+1}, w_{j+1}^{(r)}\right)$ and $w_{j+1}^{(r)}$ obtained from the previous step, sample $w_{j+2}^{(r)}$.
    \item When estimating fixed regimes, repeat steps 2 and 3 for stages $j+2, \dots, K+1$ and for all $r = 1, \dots, R$ since future treatments are fixed. However, when estimating the optimal regime, $a_k^\opt$ can be estimated using:
    \begin{align*}
        \widehat{a}_j^\opt = \argmax_{a_j \in \Psi_j(h_j)} \, \left\{\frac{1}{R}\sum_{r=1}^R w_{j+1}^{(r)} + \calR_{k+1}\left(h_{j+1}, \underline{t}_{j+2}, \underline{d}_{j+1}^\opt, \phi_\calR\right) \right\} \, ,
    \end{align*}
    with the newly updated $h_{j+1} = \left(h_j, t_{j+1}, \frac{1}{R}\sum_{r=1}^R w_{j+1}^{(r)}, a_j\right)$ depending on the $a_j$ being considered.
\end{enumerate}
% Using $(b, r)$ as a superscript for the $r$th sample obtained using posterior draw $\phi_\calR^{(b)}$, the reward can be approximated as:
% \begin{align*}
%     \calR_k(h_k, \underline{t}_{k+1}, \underline{d}_k, \phi_{\calR 0}) \approx \frac{1}{B} \sum_{b=1}^B \calR_k(h_k, \underline{t}_{k+1}, \underline{d}_k, \phi_{\calR}^{(b)}) \approx \frac{1}{B \cdot R} \sum_{b=1}^B \sum_{r = 1}^{R} \sum_{j=k+1}^{K+1} w_{j}^{(b, r)} \, .
% \end{align*}
The estimation of $\widehat{a}_j^\opt$ depends on $\calR_{k+1}(h_{j+1}, \underline{t}_{j+2}, \underline{d}_{j+1}^\opt, \phi_\calR)$ which itself can be estimated using the steps above and also depends on optimal future rewards as well. As such, a recursive argument is evoked implying a backwards induction procedure where each feasible treatment arm must be considered in order to take max/argmax.

We note that, for numerical stability, the logarithm can be used in step 2:
\begin{align*}
    \mathsf{U} & \leq \min\left(\frac{f_\calE(\uw^{'} \mid \overline{w}_k, \overline{t}_{k}, \overline{d}_{k-1}, \phi_\calR)}{f_\calE(\uw^{(r - 1)} \mid \overline{w}_k, \overline{t}_{k}, \overline{d}_{k-1}, \phi_\calR)}, 1\right)\\
    \log(\mathsf{U}) & \leq \log\left(\min\left(\frac{f_\calE(\uw^{'} \mid \overline{w}_k, \overline{t}_{k}, \overline{d}_{k-1}, \phi_\calR)}{f_\calE(\uw^{(r - 1)} \mid \overline{w}_k, \overline{t}_{k}, \overline{d}_{k-1}, \phi_\calR)}, 1\right)\right)\\
    &= \min\left(\log\left(\frac{f_\calE(\uw^{'} \mid \overline{w}_k, \overline{t}_{k}, \overline{d}_{k-1}, \phi_\calR)}{f_\calE(\uw^{(r - 1)} \mid \overline{w}_k, \overline{t}_{k}, \overline{d}_{k-1}, \phi_\calR)}\right), \log(1)\right)\\
    &= \min\Bigg(\sum_{j=1}^k \left\{\log\left(f_\calE(w_j \mid \overline{t}_j, \overline{d}_{j-1}, \overline{w}_{j-1}, \uw^{'}, \phiw \right) - \log \left(f_\calE(w_j \mid \overline{t}_j, \overline{d}_{j-1}, \overline{w}_{j-1}, \uw^{(r-1)}, \phiw \right)\right\}\\
    &\hspace{1.7cm} + \log\left(f_\calE(\uw^{'} \mid \phiu\right) - \log\left(f_\calE(\uw^{(r-1)} \mid \phiu\right), 0\Bigg) \, .
\end{align*}
Lastly, the exponential increase in computational costs with respect to the number of decision points in this optimization procedure can be mitigated with Numba due to its ability to optimize repeated computational tasks \citep{lam2015numba}. However, while Numba significantly accelerates computations, it cannot completely eliminate the inherent complexity of the problem.

\subsection{Monte Carlo Standard Error of Reward Estimates}
\label{appendix-1:mc-error-reward}

Let $\theta^{(b, r)}$ denote the $r$th draw from $f_\calO(\theta \mid \phi_\calR^{(b)})$, a distribution parametrized by $\phi_{\calR}^{(b)}$ with the $b$th posterior draw for parameter $\phi_{\calR}$. For our purposes, we are interested in Monte Carlo draws for the conditional reward given posterior draws. To calculate the Monte Carlo standard error of Bayesian estimator $\widehat{\theta} = \frac{1}{B\cdot R}\sum_{b, r} \theta^{(b, r)}$, we articulate two properties that we will make use of:
\begin{itemize}
    \item Rewards with superscript $(b)$ are conditionally independent of one and another:
    \begin{align*}
        \theta^{(b, r_1)} \indep \theta^{(b, r_2)} \mid \phi_\calR^{(b)} \quad \text{for any } 1 \leq r_1 < r_2 \leq R \, .
    \end{align*}
    \item Assuming that MCMC draws $\left\{\phi_\calR^{(b)}\right\}_{b=1}^B$ are themselves independent given sufficiently large $R$ and thinning process, we assume that $\Var(\theta^{(b_1, r)}) = \Var(\theta^{(b_2, r)})$ for any $1 \leq b_1, b_2 \leq B$. Using $\widehat{\theta}$ to denote the point estimate of $\theta$ across $B$ posterior predictive draws and $R$ samples in the Monte Carlo integration, combining with Monte Carlo draws being independent by construction yields:
    \begin{align*}
        \Var\left(\widehat{\theta}\right)
        &= \frac{1}{B^2} \sum_{b=1}^B \Var \left(\frac{1}{R}\sum_{r=1}^R \theta^{(b, r)} \right)\\
        &= \frac{1}{B^2} \sum_{b=1}^B \left\{\E\left[\Var \left(\frac{1}{R} \sum_{r=1}^R \theta^{(b, r)} \mid \phi_\calR^{(b)}\right)\right] + \Var\left[\E\left(\frac{1}{R} \sum_{r=1}^R \theta^{(b, r)} \mid \phi_\calR^{(b)}\right)\right]\right\}\\
        &= \frac{1}{B^2} \sum_{b=1}^B \left\{\E\left[\Var \left(\frac{1}{R} \sum_{r=1}^R \theta^{(b, r)} \mid \phi_\calR^{(b)}\right)\right] + \Var\left[\E\left(\frac{1}{R} \sum_{r=1}^R \theta^{(b, r)} \mid \phi_\calR^{(b)}\right)\right]\right\}\\
        &= \frac{1}{B^2} \sum_{b=1}^B \left\{\frac{1}{R^2} \sum_{r=1}^R \E\left[\Var \left(\theta^{(b, r)} \mid \phi_\calR^{(b)}\right)\right] + \Var\left[\E\left(\frac{1}{R} \sum_{r=1}^R \theta^{(b, r)} \mid \phi_\calR^{(b)}\right)\right]\right\}\\
        &= \frac{\sum_{b=1}^B \E\left[\Var \left(\theta^{(b, 1)} \mid \phi_\calR^{(b)}\right)\right]}{B^2 R} + \frac{\sum_{b=1}^B \Var\left[\E\left(\frac{1}{R} \sum_{r=1}^R \theta^{(b, r)} \mid \phi_\calR^{(b)}\right)\right]}{B^2} \, .
    \end{align*}
    Notice that, with $R \rightarrow \infty$, we are left with the variance due to the MCMC procedure:
    \begin{align*}
        \lim_{R \rightarrow \infty} \Var\left(\widehat{\theta}\right) = \frac{\sum_{b=1}^B \Var\left[\E\left(\frac{1}{R} \sum_{r=1}^R \theta^{(b, r)} \mid \phi_\calR^{(b)}\right)\right]}{B^2} \, .
    \end{align*}
    However, with $B \rightarrow \infty$, we are left with an asymptotically 0 variance.
\end{itemize}

\subsection{Monte Carlo Standard Error of Simulation Study Reward Estimates}
\label{appendix-1:mc-error-reward-simulation-study}

An additional layer to the reward estimate is that multiple replications of the entire simulation are run:
\begin{align*}
    \calR_k(h_k, \underline{t}_{k+1}, \underline{d}_{k+1}, \phi_{\calR 0}) &\approx \frac{1}{S} \sum_{s=1}^{S} \left(\frac{1}{B} \sum_{b=1}^B \calR_k(h_k, \underline{t}_{k+1}, \underline{d}_{k+1}, \phi_{\calR}^{(s, b)}) \right)\\
    &\approx \frac{1}{S} \sum_{s=1}^{S} \left( \frac{1}{B} \sum_{b=1}^B \left( \frac{1}{R} \sum_{r = 1}^{R} \left\{\sum_{j=k+1 }^{K+1} w_{j}^{(s, b, r)}\right\} \right)\right)
\end{align*}
where $\phi_{\calR}^{(s, b)}$ is the $b$th draw from the posterior distribution of $\phi_\calR$ conditional via MCMC on the $s$th generated dataset $\calF_n^{(s)}$. Here, we denote $\theta^{(s, b, r)}$ to be the $r$th sample in the Monte Carlo integration and $b$th MCMC draw for the $s$th simulation, for $s = 1, \dots, S$ with $S$ being the total number of simulation study replications. We also index MCMC draws of the reward parameters $\phi_\calR^{(s, b)}$ with the simulation iteration.
\begin{align*}
    \Var(\widehat{\theta}) &= \frac{1}{S^2} \sum_{s=1}^S \Var\left(\frac{1}{B}\sum_{b=1}^B \frac{1}{R}\sum_{r=1}^R \theta^{(s, b, r)}\right)\\
    &= \frac{1}{S^2} \sum_{s=1}^S \left\{\Var\left[\E\left(\frac{1}{B}\sum_{b=1}^B \frac{1}{R}\sum_{r=1}^R \theta^{(s, b, r)}\right) \mid \calF_n^{(s)}\right] + \E\left[\Var\left(\frac{1}{B}\sum_{b=1}^B \frac{1}{R}\sum_{r=1}^R \theta^{(s, b, r)}\right) \mid \calF_n^{(s)}\right]\right\}\\
    &= \frac{1}{S^2} \sum_{s=1}^S \Bigg\{\Var\left[\E\left(\frac{1}{B}\sum_{b=1}^B \frac{1}{R}\sum_{r=1}^R \theta^{(s, b, r)}\right) \mid \calF_n^{(s)}\right] + \frac{\sum_{b=1}^B \E\left[\Var \left(\theta^{(s, b, 1)} \mid \phi_\calR^{(s, b)}, \calF_n^{(s)}\right)\right]}{B^2 R} \\
    &\hspace{2cm} + \frac{\sum_{b=1}^B \Var\left[\E\left(\frac{1}{R} \sum_{r=1}^R \theta^{(s, b, r)} \mid \phi_\calR^{(s, b)}, \calF_n^{(s)}\right)\right]}{S^2 B^2}\Bigg\}\\
    &= \frac{\sum_{s} \Var\left[\E\left(\frac{1}{B}\sum_{b=1}^B \frac{1}{R}\sum_{r=1}^R \theta^{(s, b, r)}\right) \mid \calF_n^{(s)}\right]}{S^2} + \frac{\sum_{s, b} \Var\left[\E\left(\frac{1}{R} \sum_{r} \theta^{(s, b, r)} \mid \phi_\calR^{(s, b)}, \calF_n^{(s)}\right)\right]}{S^2 B^2}\\
    &\qquad + \frac{\sum_{s, b, r} \E\left[\Var \left(\theta^{(s, b, r)} \mid \phi_\calR^{(s, b)}, \calF_n^{(s)}\right)\right]}{S^2 B^2 R^2}\,.
\end{align*}
From the second step to the third, we make use of the result from Appendix~\ref{appendix-1:mc-error-reward}. Each term characterizes different quantities in the variability of simulations.
\begin{itemize}
    \item $\frac{\sum_{s} \Var\left[\E\left(\frac{1}{B}\sum_{b=1}^B \frac{1}{R}\sum_{r=1}^R \theta^{(s, b, r)}\right) \mid \calF_n^{(s)}\right]}{S^2}$: between-simulation variability;
    \item $\frac{\sum_{s, b} \Var\left[\E\left(\frac{1}{R} \sum_{r} \theta^{(s, b, r)} \mid \phi_\calR^{(s, b)}, \calF_n^{(s)}\right)\right]}{S^2 B^2}$: variability due to MCMC draws, or variability of the distribution over which integration is conducted;
    \item $\frac{\sum_{s, b} \E\left[\Var \left(\theta^{(s, b, 1)} \mid \phi_\calR^{(s, b)}, \calF_n^{(s)}\right)\right]}{S^2 B^2 R}$: variability due to Monte Carlo integration.
\end{itemize}

\section{Simulation Study Data Generation}
\label{appendix-1:simulation-study}

\RestyleAlgo{boxruled}
\LinesNumbered
\begin{algorithm}[!ht]
    \KwIn{Parameters $\phi_{\mathrm{A 0}}, \phi_{\mathrm{T 0}}, \phi_{\mathrm{X 0}}, \phi_{\mathrm{Y 0}}, \sigma^2_{\mathrm{X 0}}, \lambda_0, \alpha_0, \Sigma_0$}
    \caption{Data Generation Procedure for Simulation Study}
    \For{i in \{1, \dots, n\}}{
        $\log(\Uit), \Uiw, \Uia \sim \operatorname{MVN}\big(0, \Sigma_0\big)$\\
        $\zeta_{i} \sim \mathcal{U}(3.5, 4)$ \tcc{Independent censoring}
        \tcc{First Visit}
        $j \gets 1$\\
        $T_{i1} \gets 0$\\
        $X_{i1} \gets 0$\\
        $Y_{i1} \sim \operatorname{Bern}\left(\Phi^{-1}\left(H_{ij, \mathrm{Y}}\phi_{\mathrm{Y}0} + \Uiw\right)\right)$\\
        \While{$T_{ij} < \zeta_i$}{
            \tcc{Visit occurs, increment counting process}
            $j \gets j + 1$\\
            \tcc{Defining model-specific regressors}
            $H_{ij, \mathrm{A}} \gets \big[1, Y_{ij}, X_{ij}, A_{i(j-1)}\big]$\\
            $H_{ij, \mathrm{Y}} \gets \big[1, T_{ij}, X_{ij}, A_{i(j-1)}, A_{i(j-1)} X_{ij}, A_{i(j-1)} T_{ij}\big]$\\
            $H_{ij, \mathrm{X}} \gets H_{ij, \mathrm{T}} \gets \big[X_{i(j-1)}, A_{i(j-1)}\big]$\\
            \tcc{Sampling $X_{ij}, Y_{ij}, A_{ij}$}
            $A_{ij} \mid \overline{W}_{ij}, \overline{T}_{ij}, \overline{A}_{i(j-1)}, \Uia, \phia \sim \operatorname{Bern}\left(\Phi^{-1}\left(H_{ij, \mathrm{A}} \phia + \Uia\right)\right)$\\
            $X_{ij} \mid \overline{T}_{ij}, \overline{A}_{i(j-1)}, \overline{W}_{i(j-1)}, \Uix, \phix \sim \mathcal{N}\left(H_{ij, \mathrm{X}}\phix + \Uix, \Sigma_{\mathrm{W}}\right)$\\
            $Y_{ij} \mid \overline{X}_{ij}, \overline{T}_{ij}, \overline{A}_{i(j-1)}, \Uiy, \phiy \sim \operatorname{Bern}\left(\Phi^{-1}\left(H_{ij, \mathrm{Y}} \phiy + \Uiy\right)\right)$\\
            \tcc{Generate candidate $T_{ij}$}
            $T_{ij} - T_{i(j-1)} \mid H_{ij, \mathrm{T}}, \Uit, \phit \sim \operatorname{Weibull}\left(\lambda_0 \log(\Uit) \exp \left\{H_{ij, \mathrm{T}} \phit \right\}, \alpha\right)$\\
        }
    }
    \caption{Data Generation Procedure for Simulation Study}
    \label{alg:project-1}
\end{algorithm}
\FloatBarrier

\section{More Simulation Results}
\label{appendix-1:more-simulation-results}

\subsection{Accuracy of Optimal Reward Estimates}
\label{appendix-1:accuracy-opt-reward}

\begin{table}[!b]
    \centering
    \renewcommand{\arraystretch}{1.5}
    \begin{tabular}{|c|c||c|c|c|c||c|c|c|c|}
        \hline
        & & \multicolumn{4}{c||}{Patient 1} & \multicolumn{4}{c|}{Patient 2} \\ \hline
        $n$ & Model & Bias & MC Error & SE MC & Avg. SE & Bias & MC Error & SE MC & Avg. SE \\ \hline
    \multirow{4}{*}{$300$} & $(Y, A, T)$ & 3.867 & 0.053 & 5.301 & 7.003 & 2.068 & 0.058 & 5.757 & 7.035 \\ \cline{2-10}
        & $(Y, A)$ & 8.347 & 0.062 & 6.239 & 7.273 & 6.378 & 0.068 & 6.837 & 7.409 \\ \cline{2-10}
        & $(Y, T)$ & 5.669 & 0.065 & 6.468 & 7.538 & 2.814 & 0.069 & 6.850 & 7.578 \\ \cline{2-10}
        & $(Y)$ & 11.901 & 0.069 & 6.901 & 7.682 & 9.440 & 0.076 & 7.640 & 7.921 \\ \hline
    \multirow{4}{*}{$100$} & $(Y, A, T)$ & 9.564 & 0.093 & 9.329 & 10.864 & 4.490 & 0.110 & 10.990 & 11.273 \\ \cline{2-10}
        & $(Y, A)$ & 12.973 & 0.096 & 9.607 & 11.066 & 8.088 & 0.113 & 11.316 & 11.572 \\ \cline{2-10}
        & $(Y, T)$ & 10.728 & 0.100 & 9.992 & 11.377 & 5.225 & 0.116 & 11.629 & 11.860 \\ \cline{2-10}
        & $(Y)$ & 14.939 & 0.102 & 10.15 & 11.535 & 9.796 & 0.118 & 11.768 & 12.095 \\ \hline
    \end{tabular}
    \caption{Simulation results showcasing bias, Monte Carlo error (MC Error), standard error of the Monte Carlo sampling distribution (SE MC) and average standard error (Avg. SE) in estimating the reward under optimal regime for both patient profiles $h_{11}$, $h_{21}$ as defined above. All quantities have been multiplied by 100 in the table.}
    \label{table:simulation-results-opt}
\end{table}
From Table~\ref{table:simulation-results-opt}, we observe that the bias in estimating the optimal reward value decreases with increasing sample size and $(Y, A, T)$ model yields the lowest bias estimate compared to the three other partial models. We note that, akin to~\cite{saarela2015predictive}, the estimates in the rewards under optimal regime do not appear to be completely unbiased due to the non-linearity induced in the definition of optimal rewards and non-smooth estimators can violate conventional asymptotic convergence properties. However, under \textit{fixed} regimes, e.g. $(d_2, d_3) = (0, 0)$ or $(d_2, d_3) = (1, 1)$, regimes which we refer as ``never treated'' and ``always treated'', we report that reward estimates are unbiased (see Appendix~\ref{appendix-1:more-simulation-results}) and our method performs as we expect.

\begin{table}[!ht]
    \centering  % This command properly centers the figure
    \renewcommand{\arraystretch}{1.5}
    \begin{tabular}{|c|c|c|c|c|c|}
        \hline
        & $|\,$Bias$\,|$ & $\overline{\text{MC Error}}$ & $\overline{\text{SD MC}}$ & $\overline{\text{Avg SE}}$& AR \\ \hline
        $(Y, A, T)$ & 3.049 & 0.517 & 5.035 & 13.406 & 68.4\%\\ \hline
        $(Y, A)$ & 7.300 & 0.522 & 5.087 & 13.010 & 73.2\%\\ \hline
        $(Y, T)$ & 4.183 & 0.532 & 5.181 & 13.243 & 70.3\%\\ \hline
        $(Y)$ & 10.431 & 0.525 & 5.112 & 12.767 & 62.9\%\\ \hline
    \end{tabular}
    \caption{Simulation results showcasing absolute bias, Monte Carlo error, standard deviation of the Monte Carlo sampling distribution (MC Error), average standard error (Avg SE) and agreement rate (AR), all averaged across 100 randomly generated patient profiles under $\calE$. All quantities except AR have been multiplied by 100 in the table.}
    \label{table:simulation-results-2}
\end{table}

\clearpage

\subsection{Accuracy of Reward Estimates Under Fixed Regime}
\label{appendix-1:accuracy-fixed-reward}

\begin{table}[!ht]
    \centering
    \renewcommand{\arraystretch}{1.46}
    \begin{tabular}{|c|c||c|c|c|c||c|c|c|c|}
        \hline
        & & \multicolumn{4}{c||}{Patient 1} & \multicolumn{4}{c|}{Patient 2} \\ \hline
        $n$ & Model & Bias & MC Error & SE MC & Avg. SE & Bias & MC Error & SE MC & Avg. SE \\ \hline
        \multirow{4}{*}{$100$} & $(Y, A, T)$ & 2.192 & 1.636 & 15.945 & 16.393 & --0.442 & 1.668 & 16.256 & 15.275 \\ \cline{2-10}
        & $(Y, A)$ & 8.672 & 1.568 & 15.287 & 15.888 & 6.725 & 1.589 & 15.491 & 14.999 \\ \cline{2-10}
        & $(Y, T)$ & 6.919 & 1.633 & 15.913 & 16.062 & 5.033 & 1.574 & 15.341 & 14.876 \\ \cline{2-10}
        & $(Y)$ & 13.132 & 1.546 & 15.07 & 15.462 & 12.159 & 1.479 & 14.415 & 14.344 \\ \hline
        \multirow{4}{*}{$300$} & $(Y, A, T)$ & 0.796 & 0.950 & 9.256 & 10.235 & --0.567 & 1.039 & 10.13 & 9.503 \\ \cline{2-10}
        & $(Y, A)$ & 8.628 & 0.891 & 8.684 & 9.813 & 8.446 & 0.984 & 9.594 & 9.326 \\ \cline{2-10}
        & $(Y, T)$ & 6.077 & 0.930 & 9.069 & 9.942 & 5.638 & 0.983 & 9.579 & 9.208 \\ \cline{2-10}
        & $(Y)$ & 13.884 & 0.864 & 8.426 & 9.417 & 14.927 & 0.904 & 8.810 & 8.754 \\ \hline
    \end{tabular}
    \caption{Simulation results showcasing bias, Monte Carlo error (MC Error), standard error of the Monte Carlo sampling distribution (SE MC) and average standard error (Avg. SE) for both patient profiles under an \textbf{always treated} regime. All quantities have been multiplied by 100 in the table.}
    \label{table:simulation-results-at}
\end{table}

\begin{table}[!ht]
    \centering
    \renewcommand{\arraystretch}{1.46}
    \begin{tabular}{|c|c||c|c|c|c||c|c|c|c|}
        \hline
        & & \multicolumn{4}{c||}{Patient 1} & \multicolumn{4}{c|}{Patient 2} \\ \hline
        $n$ & Model & Bias & MC Error & SE MC & Avg. SE & Bias & MC Error & SE MC & Avg. SE \\ \hline
        \multirow{4}{*}{$100$} & $(Y, A, T)$ & 1.883 & 1.037 & 10.105 & 10.914 & --2.633 & 0.991 & 9.656 & 10.326 \\ \cline{2-10}
        & $(Y, A)$ & 1.455 & 1.036 & 10.102 & 10.947 & --1.963 & 0.984 & 9.590 & 10.319 \\ \cline{2-10}
        & $(Y, T)$ & --1.712 & 1.010 & 9.848 & 10.592 & --5.412 & 0.997 & 9.713 & 10.247 \\ \cline{2-10}
        & $(Y)$ & --1.885 & 1.029 & 10.029 & 10.59 & --4.561 & 1.000 & 9.743 & 10.180 \\ \hline
        \multirow{4}{*}{$300$} & $(Y, A, T)$ & 0.062 & 0.581 & 5.664 & 7.068 & --1.131 & 0.558 & 5.443 & 6.776 \\ \cline{2-10}
        & $(Y, A)$ & --0.567 & 0.588 & 5.728 & 7.099 & --0.255 & 0.566 & 5.516 & 6.725 \\ \cline{2-10}
        & $(Y, T)$ & --4.175 & 0.560 & 5.456 & 6.867 & --4.459 & 0.562 & 5.480 & 6.643 \\ \cline{2-10}
        & $(Y)$ & --4.547 & 0.562 & 5.478 & 6.811 & --3.290 & 0.569 & 5.545 & 6.585 \\ \hline
    \end{tabular}
    \caption{Simulation results showcasing bias, Monte Carlo error (MC Error), standard error of the Monte Carlo sampling distribution (SE MC) and average standard error (Avg. SE) for both patient profiles under an \textbf{never treated} regime. All quantities have been multiplied by 100 in the table.}
    \label{table:simulation-results-nt}
\end{table}

\clearpage

\subsection{Using Data Generated With Partially Correlated and Uncorrelated Random Effects}
\label{appendix-1:simulation-results-under-partial-models}

\begin{figure}[!ht]
    \centering
    \begin{minipage}[b]{0.49\textwidth}
        \includegraphics[width=1.1\textwidth]{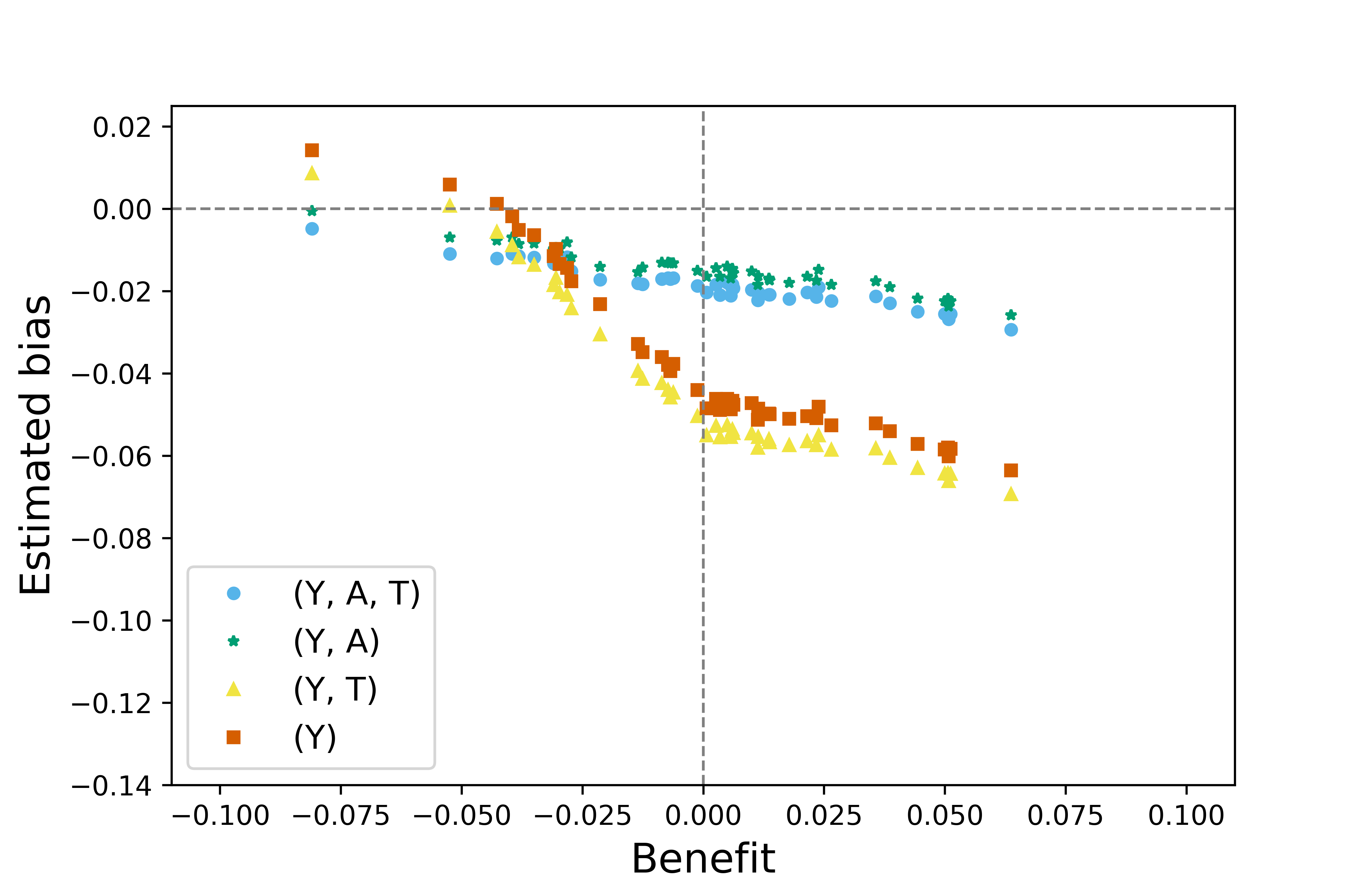}
    \end{minipage}
    \hfill
    \begin{minipage}[b]{0.49\textwidth}
        \includegraphics[width=1.1\textwidth]{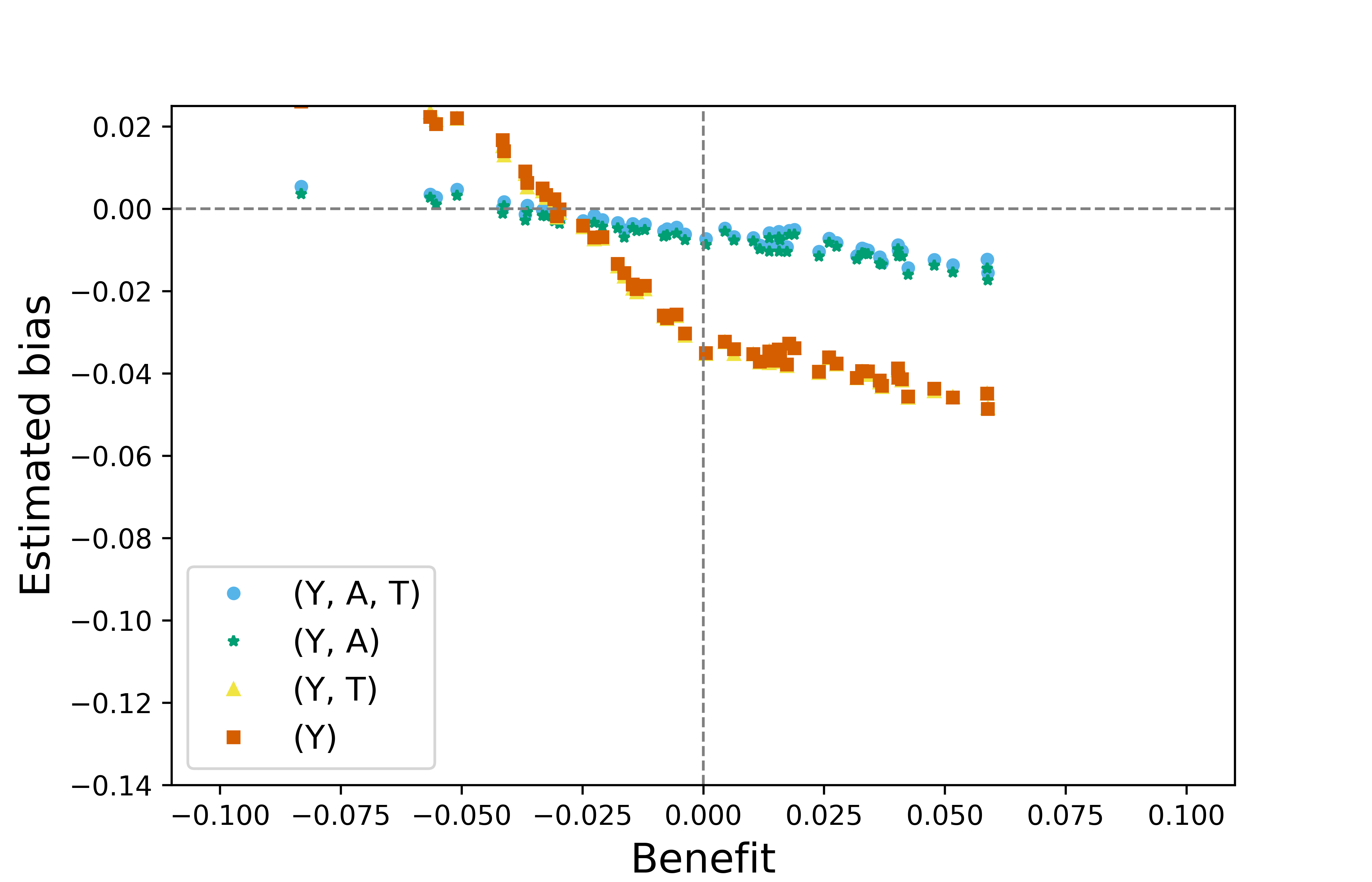}
    \end{minipage}
    \label{fig:benefit-vs-estimated-loss-ya}
    \caption{Benefit vs. estimated loss for all four model specifications given previous outcome -- $y_{i1} = 0$ (left) and $y_{i1} = 1$ (right) -- and data generated under $(\uiw, \uia) \indep \uit.$}

    \begin{minipage}[b]{0.49\textwidth}
        \includegraphics[width=1.1\textwidth]{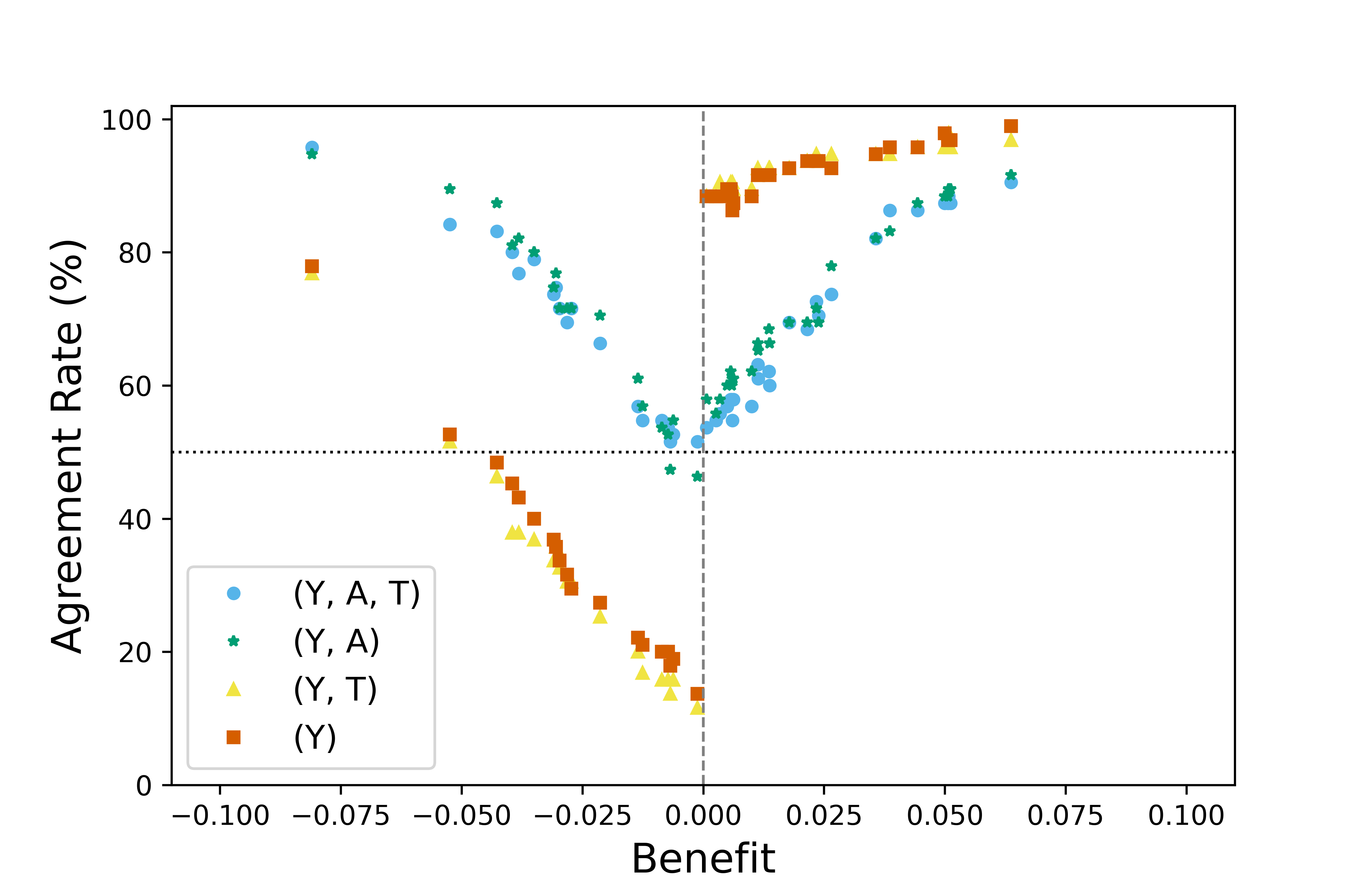}
    \end{minipage}
    \hfill
    \begin{minipage}[b]{0.49\textwidth}
        \includegraphics[width=1.1\textwidth]{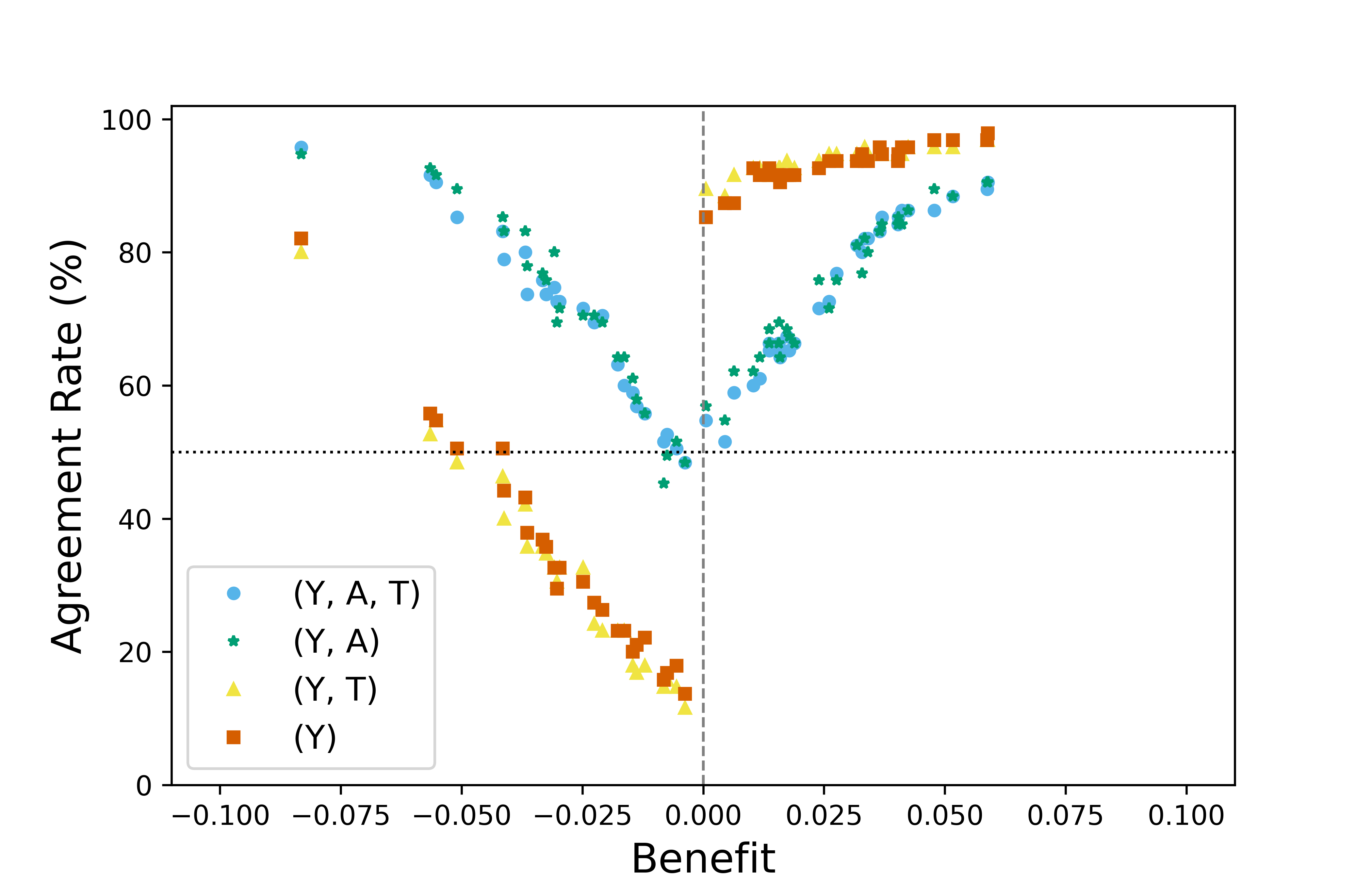}
    \end{minipage}
    \label{fig:benefit-vs-agreement-rate-ya}
    \caption{Benefit vs. agreement rate for all four model specifications given previous outcome -- $y_{i1} = 0$ (left) and $y_{i1} = 1$ (right) -- and data generated under $(\uiw, \uia) \indep \uit.$}
\end{figure}

\begin{figure}[!ht]
    \centering
    \begin{minipage}[b]{0.49\textwidth}
        \includegraphics[width=1.1\textwidth]{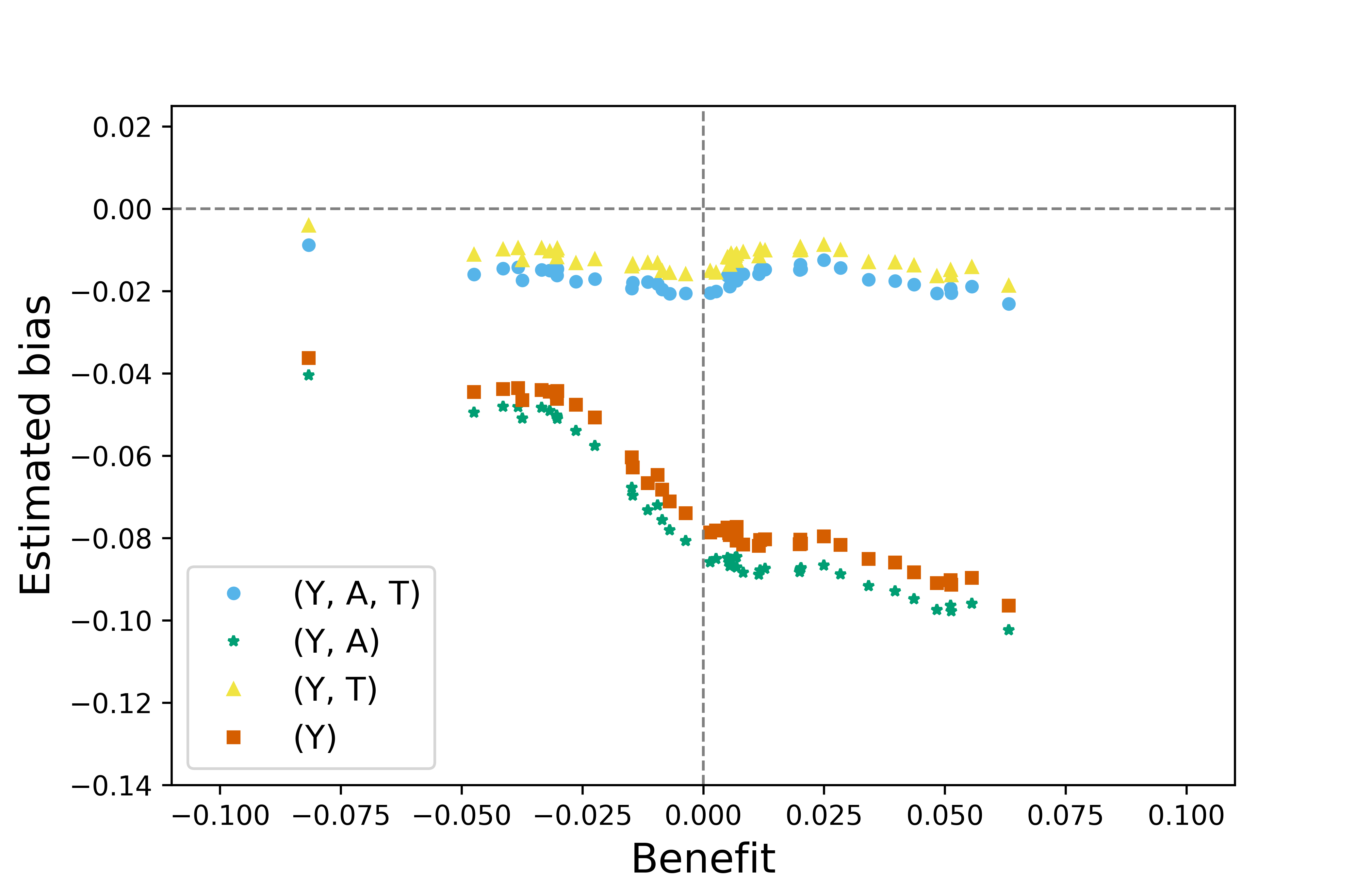}
    \end{minipage}
    \hfill
    \begin{minipage}[b]{0.49\textwidth}
        \includegraphics[width=1.1\textwidth]{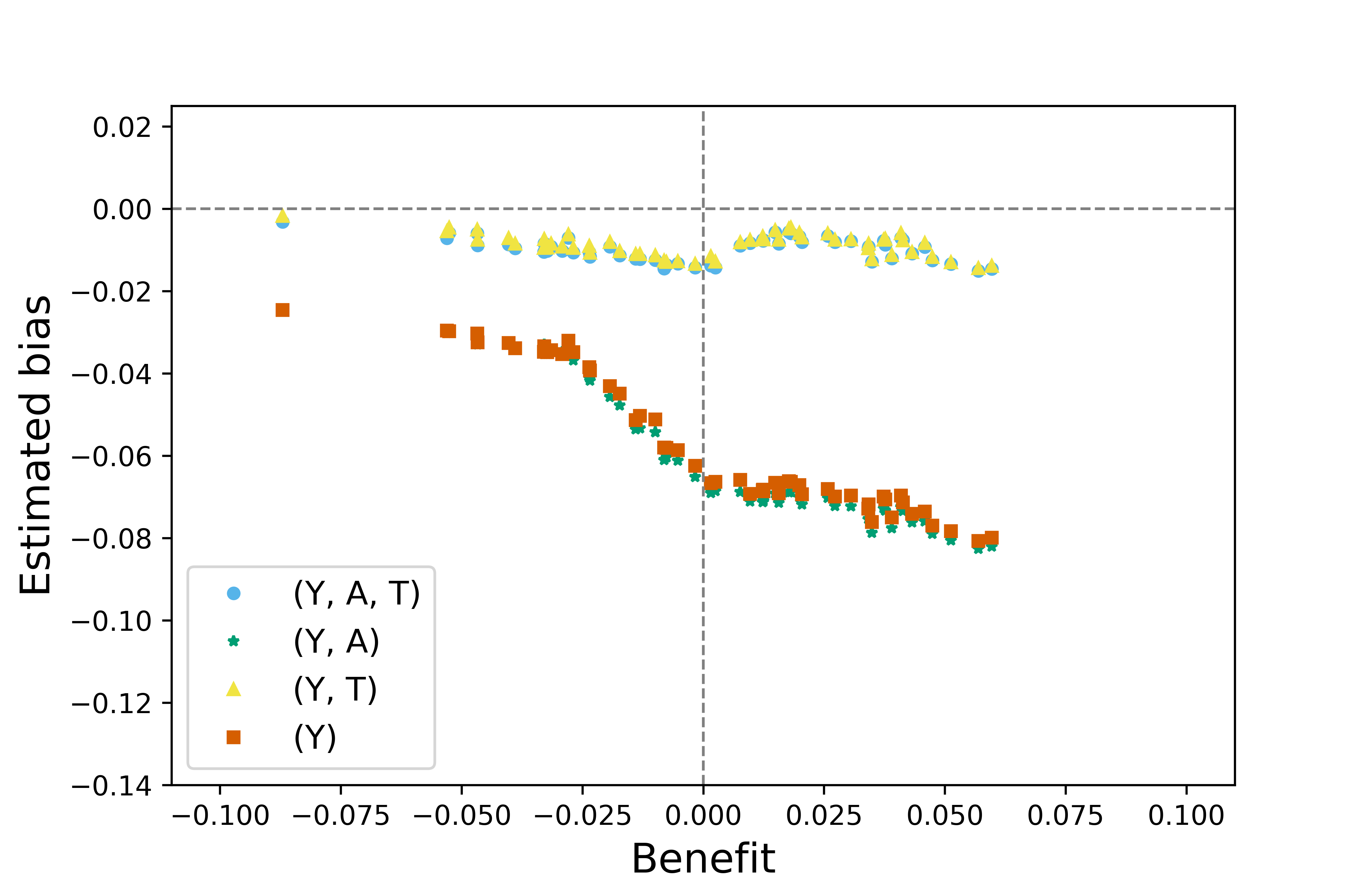}
    \end{minipage}
    \label{fig:benefit-vs-estimated-loss_y1_yt}
    \caption{Benefit vs. estimated loss for all four model specifications given previous outcome -- $y_{i1} = 0$ (left) and $y_{i1} = 1$ (right) -- and data generated under $(\uiw, \uit) \indep \uia.$}

    \begin{minipage}[b]{0.49\textwidth}
        \includegraphics[width=1.1\textwidth]{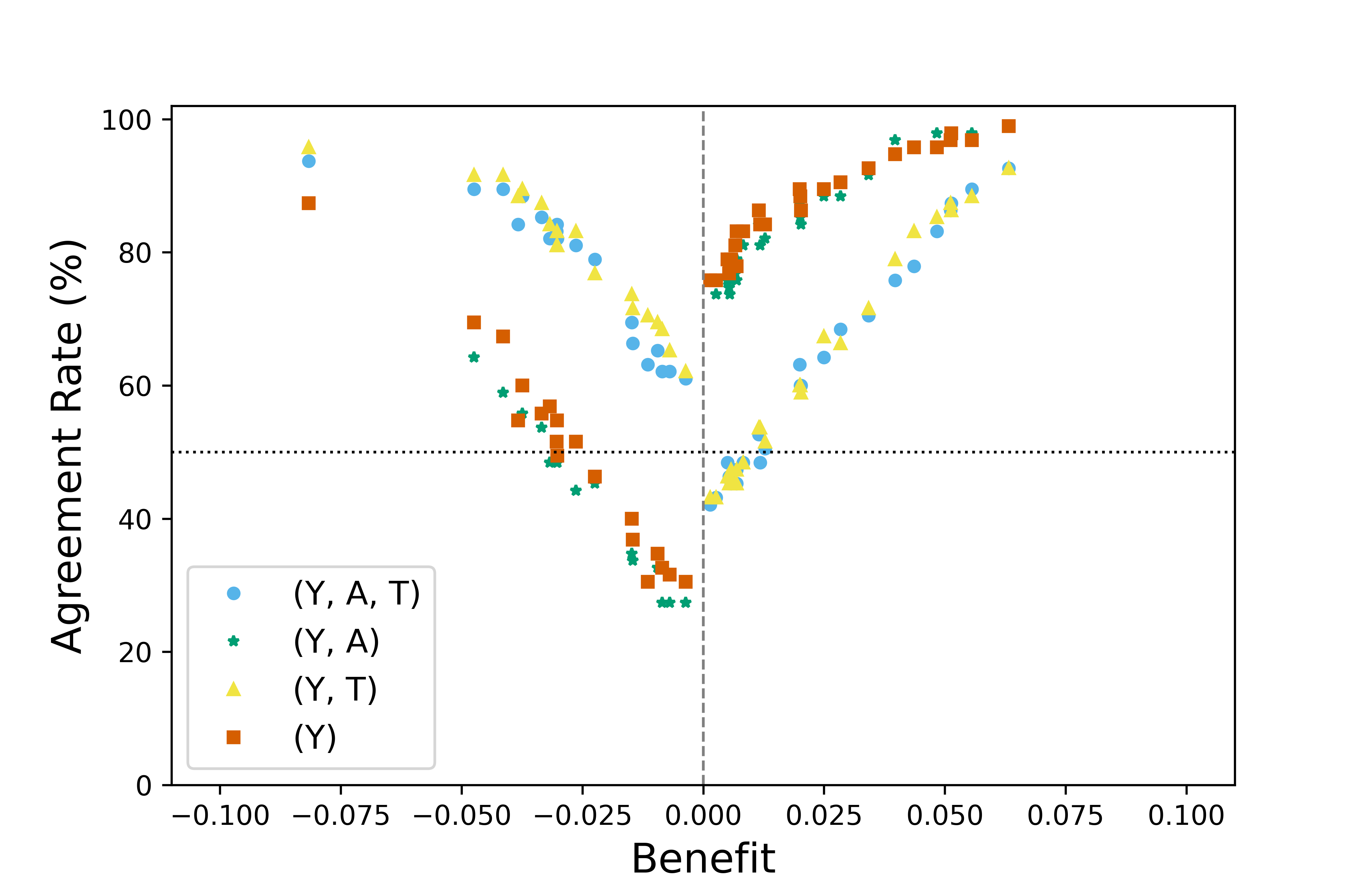}
    \end{minipage}
    \hfill
    \begin{minipage}[b]{0.49\textwidth}
        \includegraphics[width=1.1\textwidth]{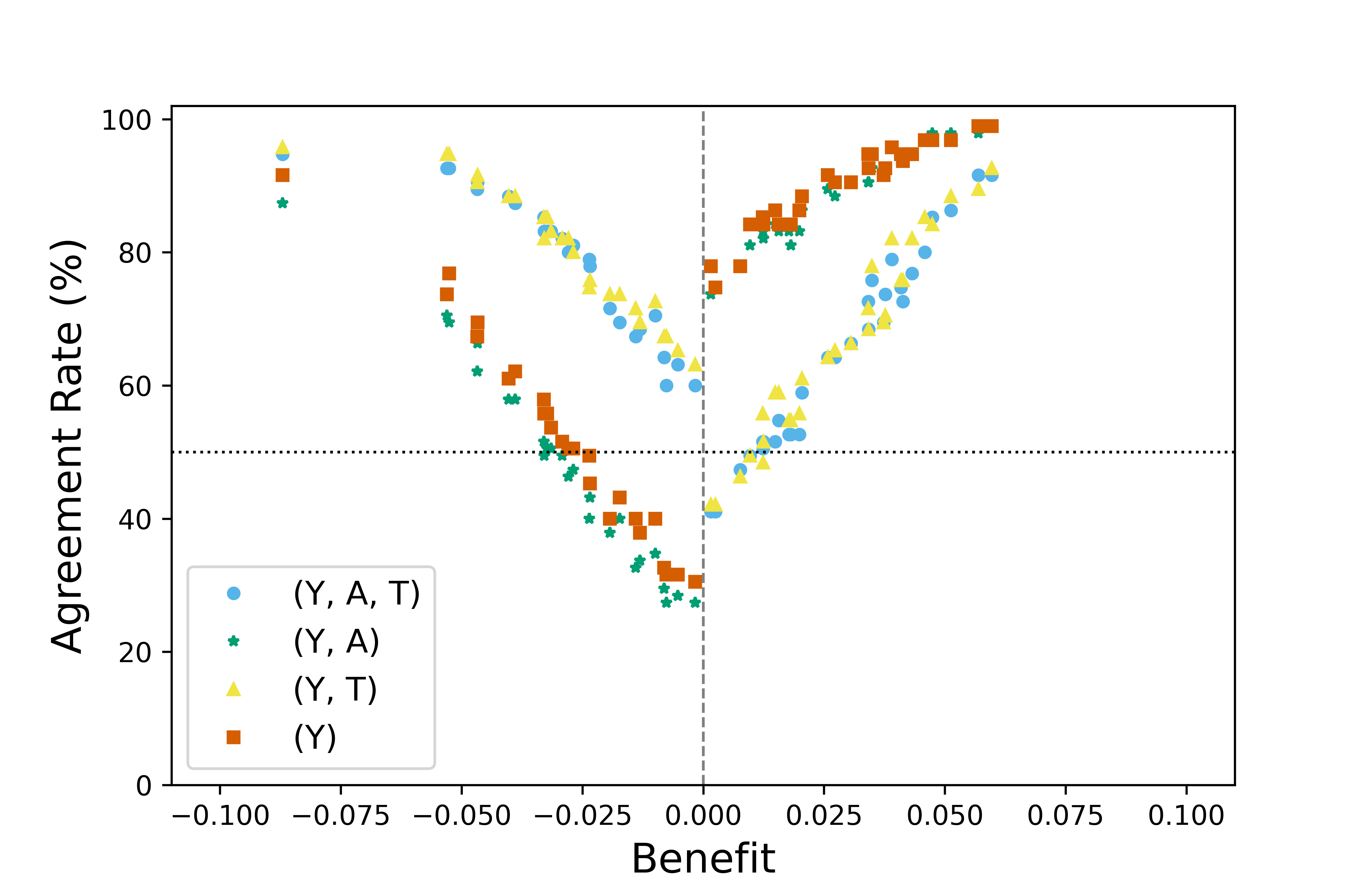}
    \end{minipage}
    \label{fig:benefit-vs-agreement-rate-yt}
    \caption{Benefit vs. agreement rate for all four model specifications given previous outcome -- $y_{i1} = 0$ (left) and $y_{i1} = 1$ (right) -- and data generated under $(\uiw, \uit) \indep \uia.$}
\end{figure}

\begin{figure}[!ht]
    \centering
    \begin{minipage}[b]{0.49\textwidth}
        \includegraphics[width=1.1\textwidth]{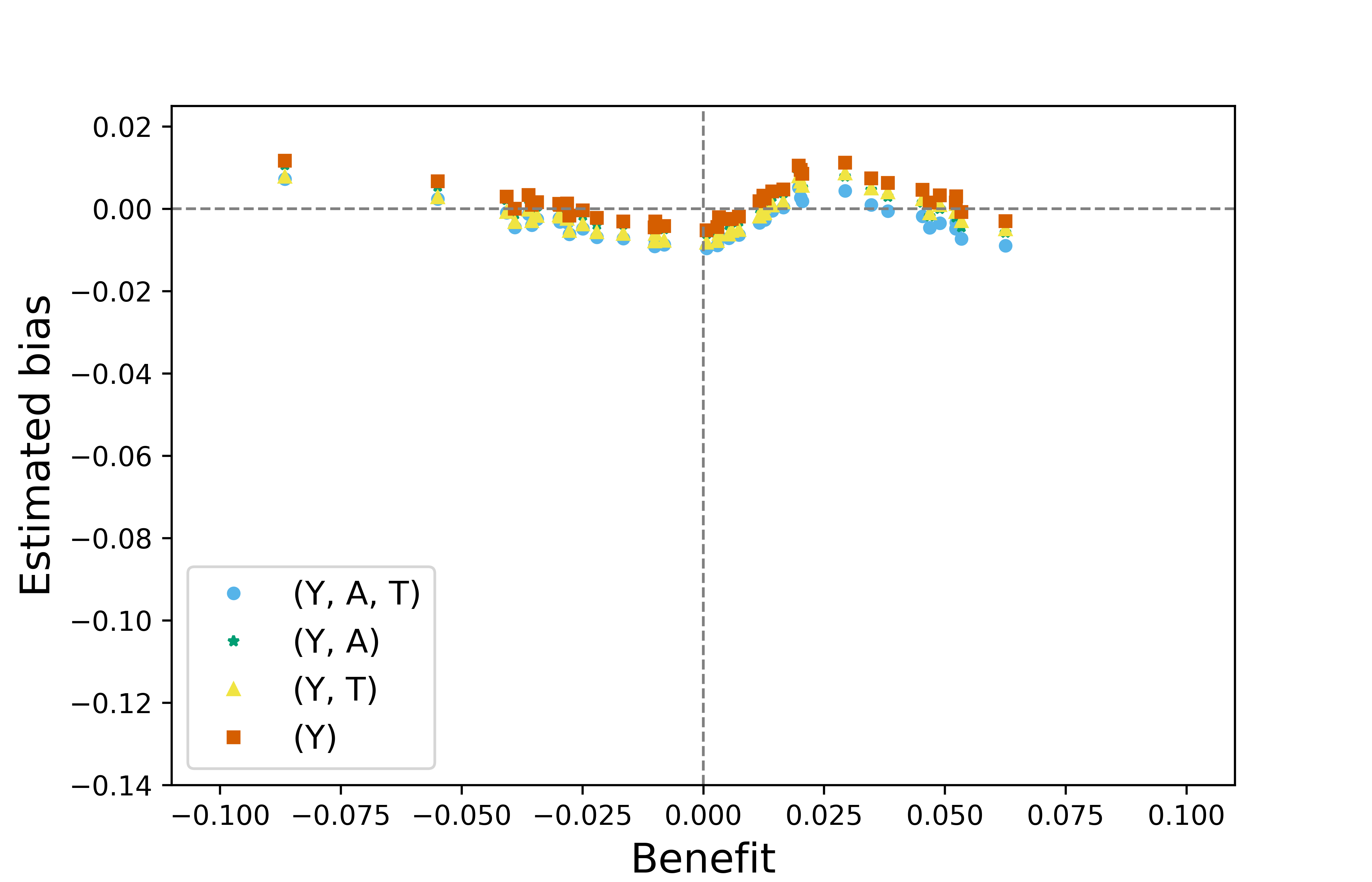}
    \end{minipage}
    \hfill
    \begin{minipage}[b]{0.49\textwidth}
        \includegraphics[width=1.1\textwidth]{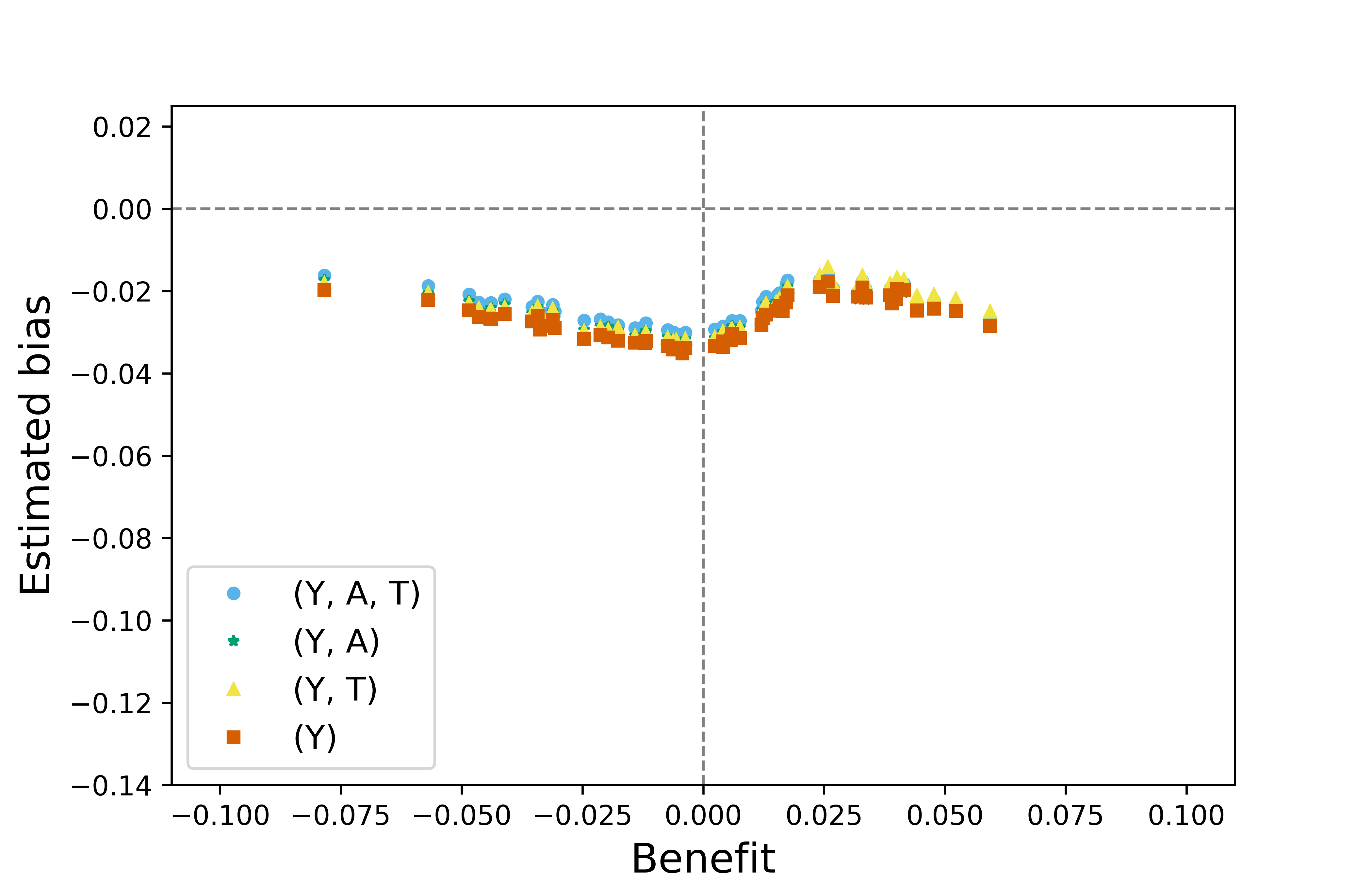}
    \end{minipage}
    \label{fig:benefit-vs-estimated-independent}
    \caption{Benefit vs. estimated loss for all four model specifications given previous outcome -- $y_{i1} = 0$ (left) and $y_{i1} = 1$ (right) -- and data generated under $\uiw \indep \uia \indep \uit$.}

    \begin{minipage}[b]{0.49\textwidth}
        \includegraphics[width=1.1\textwidth]{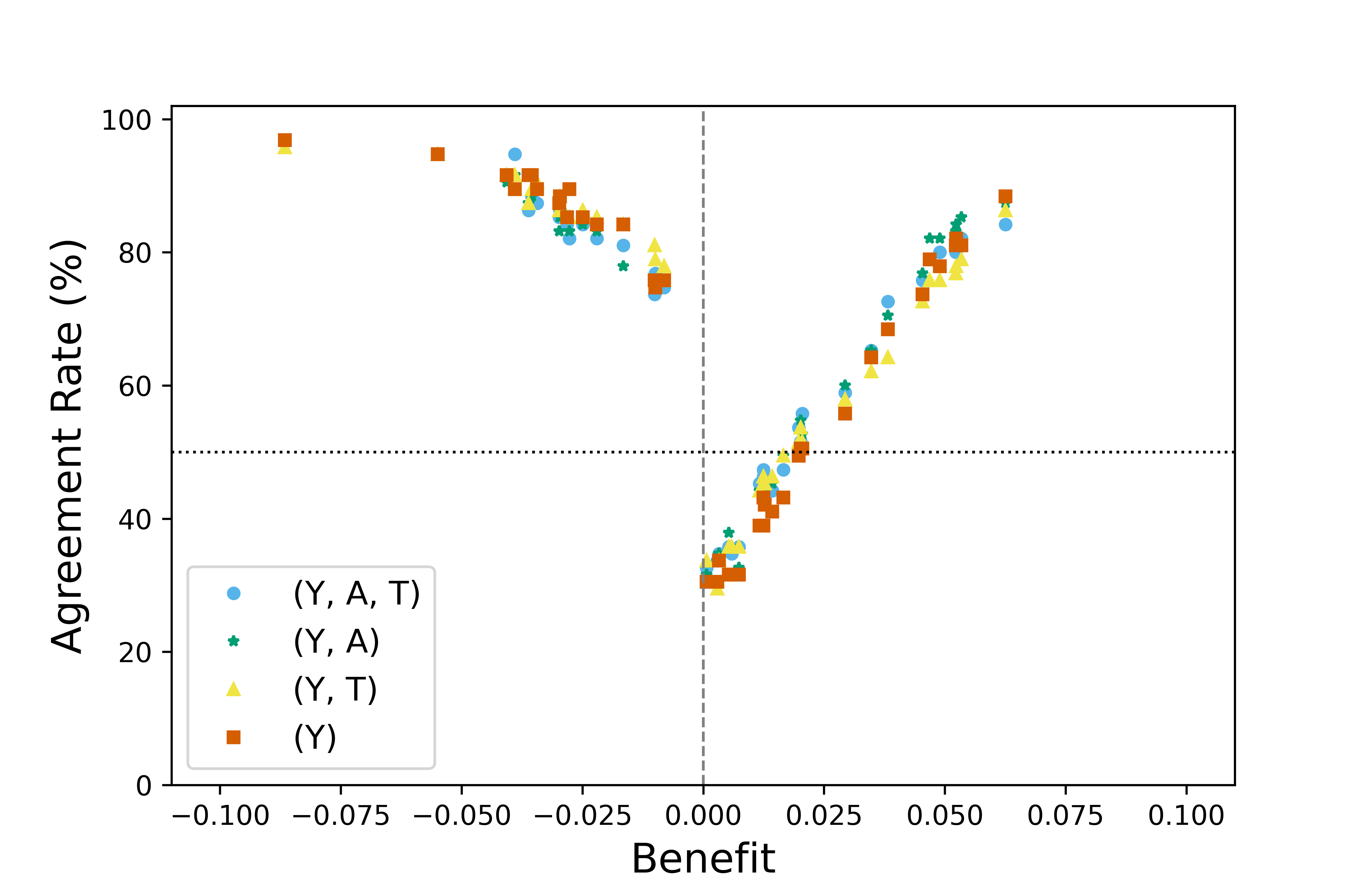}
    \end{minipage}
    \hfill
    \begin{minipage}[b]{0.49\textwidth}
        \includegraphics[width=1.1\textwidth]{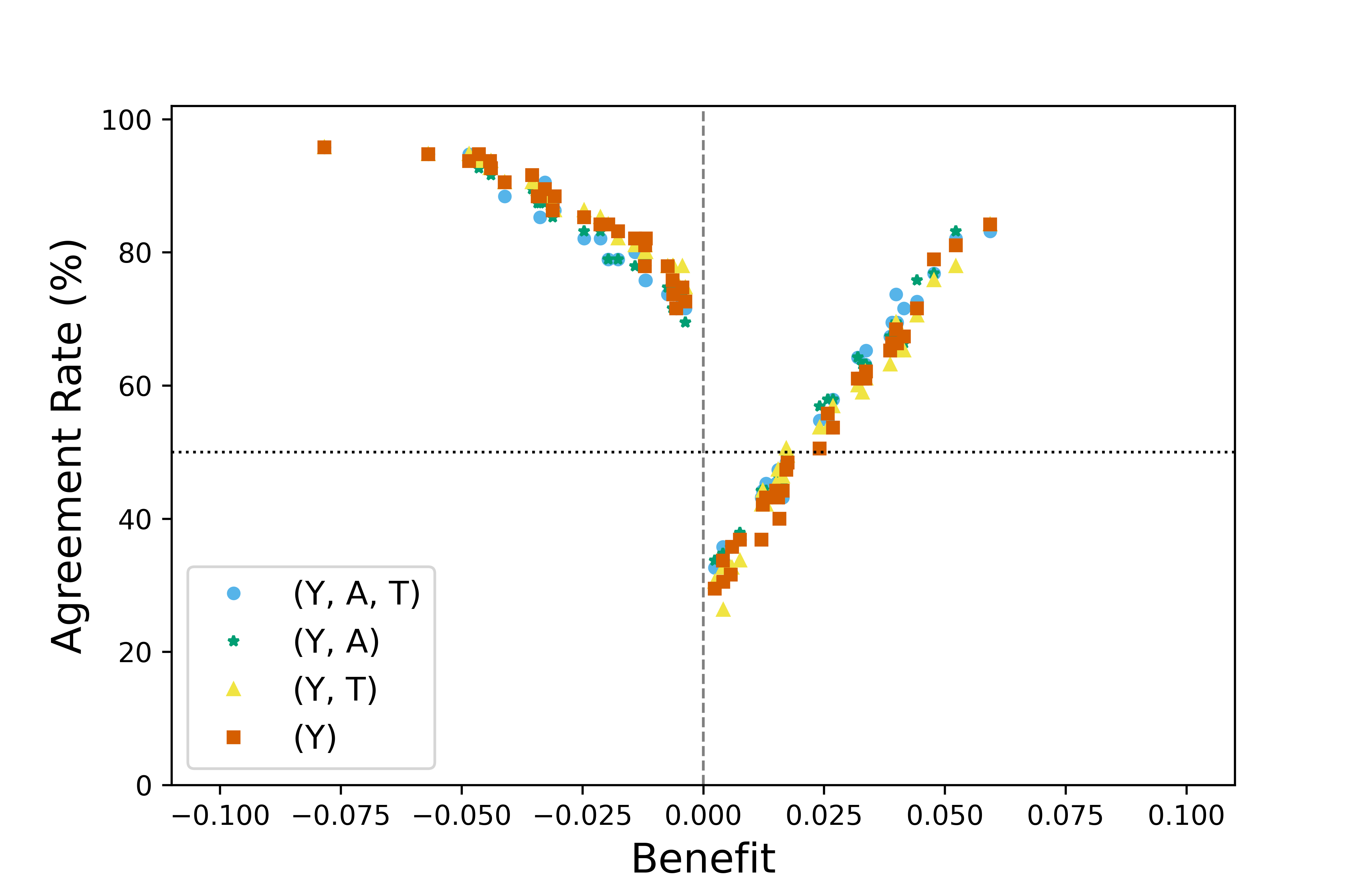}
    \end{minipage}
    \label{fig:benefit-vs-agreement-rate-independent}
    \caption{Benefit vs. agreement rate for all four model specifications given previous outcome -- $y_{i1} = 0$ (left) and $y_{i1} = 1$ (right) -- and data generated under $\uiw \indep \uia \indep \uit$.}
\end{figure}

\section{Details on Data Analysis}

\subsection{Protocols of Original INSPIRE Studies and Target Trial}
\label{appendix-1:inspire-protocol}

\begin{table}[!hb]
    \begin{center}
        \renewcommand{\arraystretch}{1.5}
        \begin{tabular}{|p{2.3cm}|p{4cm}|p{4cm}|p{4cm}|}
            \hline
            \textbf{Protocol Component} & \textbf{INSPIRE 2 \& 3 Protocols} & \textbf{Observational Setting} & \textbf{Target Trial Specification} \\ \hline
            Eligibility Criteria & HIV-infected adults on HAART & \textit{Idem} & \textit{Idem} \\ \hline
            Treatment strategy & Three injections of IL-7 administered at 7-day intervals & One, two or three injections of IL-7 administered at 7-day intervals & Three injections of IL-7 administered at 7-day intervals \\ \hline
            Treatment assignment & Provide IL-7 doses if CD4 blood concentration $\leq$ 550 cells/$\mu$L with a maximum of 4 injection cycles over 21 months of follow-up & 0, 1, 2 or 3 injections were provided; when CD4 $\leq$ 550 cells/$\mu$L, at least one injection was often provided, but not always. & 0 or 3 injections randomly provided, at 90-day intervals, with a maximum of three cycles across follow-up \\ \hline
            Follow-up schedule & Quarterly intervals between assessments at which injections are decided to be administered or not & Some departure in protocol, both in the number of injections and scheduled visit time, can be observed. & Exactly 90-day intervals \\ \hline
            Outcome & post-injection CD4 concentration & post-injection CD4 concentration & indicator of CD4 $\geq$ 500 cells/$\mu$L of blood \\ \hline
            Target of inference & Time spent with CD4 blood concentration $\geq 500$ cells/$\mu$L & \textit{Idem} & Conditional rewards under optimal regime \\ \hline
            % Analysis plan & Intention-to-treat & Use of mechanistic models to estimate the CD4 trajectory & Use of Bayesian joint modeling to estimate treatment effects with irregularly observed data
            % and adapted G-computation (c.f.~\ref{theorem:adapted-g-computation}) to estimate conditional rewards and optimal regime\\ \hline
        \end{tabular}
        \caption{Component-wise specifications of the INSPIRE 2 \& 3 studies and the completely randomized SMART as $\calE$ that we wish to emulate. Deviations from the INSPIRE protocol may result in confounding and informative visit time biases, rendering the data observational.}
    \end{center}
    \label{table:inspire-study-design}
\end{table}

\newpage

\subsection{Frequency of Individualized Treatment Frequency}
\label{appendix-individualized-treatment-frequency.}

\begin{table}[!ht]
    \centering
    \renewcommand{\arraystretch}{1.5}
    \begin{tabular}{|c|c|}
        \hline
        Treatment Course & Recommendation Frequency\\ \hline
        $(0, 1, 0, 0) $ & 67.9\% \\
        $(1, 0, 0, 0)$ & 29.1\%\\
        $(0, 0, 1, 0)$ & 3.0\%\\
        \hline
    \end{tabular}
    \caption{Patient A individualized treatment recommendation.}
    \label{tab:patient-a-individualized treatments}
\end{table}

\begin{table}[!ht]
    \centering
    \renewcommand{\arraystretch}{1.5}
    \begin{tabular}{|c|c|}
        \hline
        Treatment course & Recommendation frequency\\ \hline
        $(1, 1, 0, 0)$ & 53.5\% \\
        $(1, 0, 1, 0)$ & 33.8\% \\
        $(0, 1, 1, 0)$ & 11.3\% \\
        $(0, 1, 0, 1)$ & 0.6\%\\
        $(1, 0, 0, 1)$ & 0.4\%\\
        $(0, 0, 1, 1)$ & 0.3\%\\
        \hline
    \end{tabular}
    \caption{Patient B individualized treatment recommendation.}
    \label{tab:patient-b-individualized-treatments}
\end{table}

\clearpage

\subsection{MCMC Chains}
\label{appendix-1:mcmc-traceplots}

Traceplots and empirical density plots for all four chains for reward parameters $\phiy$, $\phiu$, $\mu$ and $\tau$ for the full joint model $(Y, A, T)$ are respectively displayed in Figures~\ref{fig:phi-y-traceplot},~\ref{fig:phi-u-traceplot} and~\ref{fig:mu-tau-traceplot} below.

\begin{figure}[!b]
    \centering
    \includegraphics[width=\linewidth]{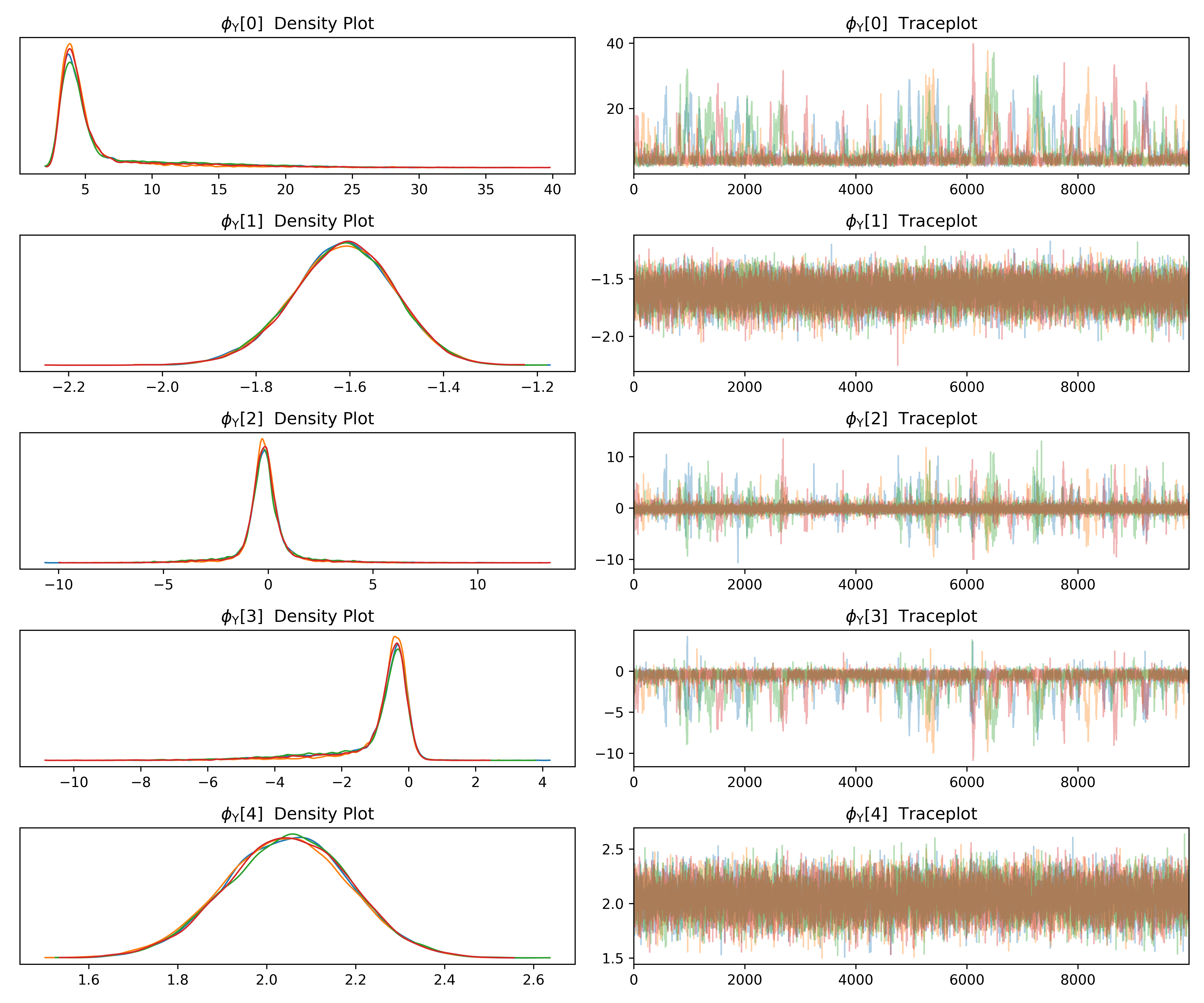}
    \caption{$\phiy$ traceplots}
    \label{fig:phi-y-traceplot}
\end{figure}

\begin{figure}
    \centering
    \includegraphics[width=0.8\linewidth]{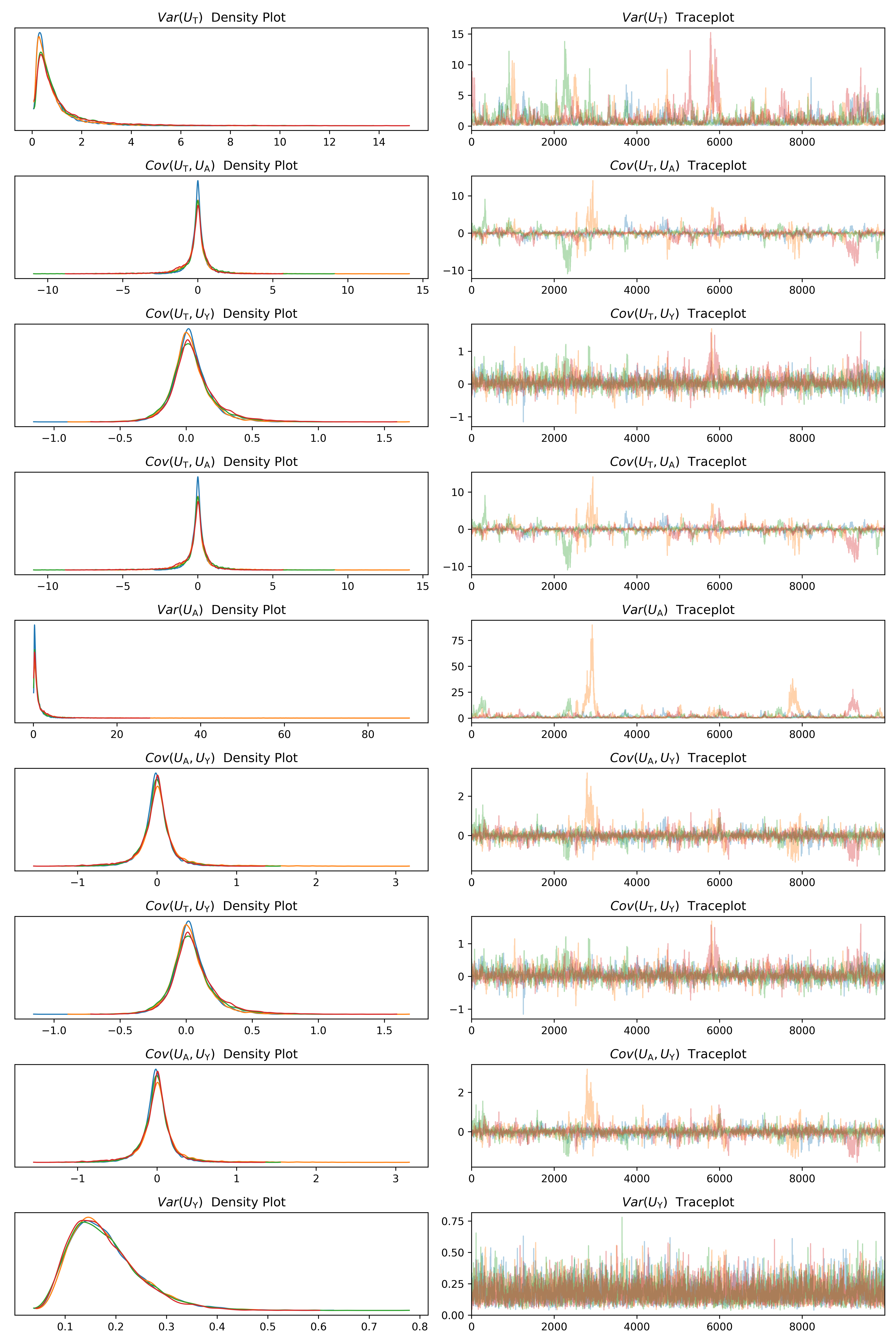}
    \caption{$\phiu$ traceplots}
    \label{fig:phi-u-traceplot}
\end{figure}

\begin{figure}
    \centering
    \includegraphics[width=0.8\linewidth]{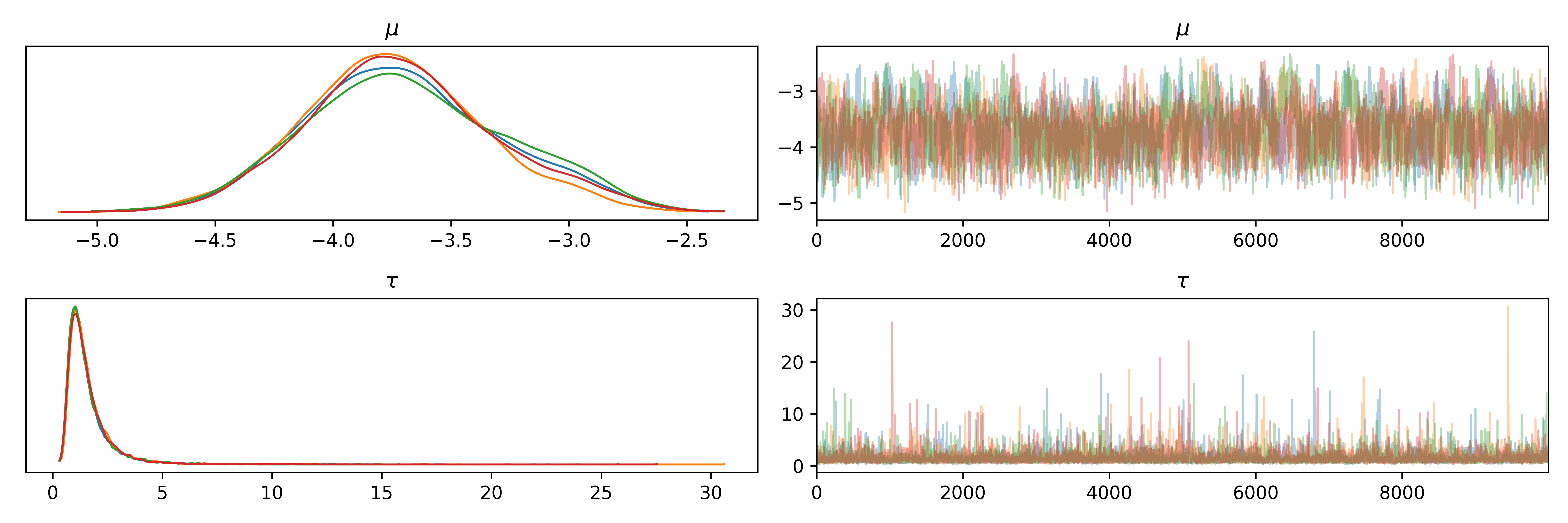}
    \caption{$\mu$ and $\tau^2$ traceplots}
    \label{fig:mu-tau-traceplot}
\end{figure}

\newpage

\subsection{Summary Table of Outcome Model Parameters}
\label{subsection:inspire-summary-table}

\begin{table}[!ht]
    \begin{center}
        \renewcommand{\arraystretch}{1.5}
        \begin{tabular}{|c|c|c|}
            \hline
            Coefficient & \multicolumn{1}{c|}{Estimate} & \multicolumn{1}{c|}{95\% Credible Interval}\\ \hline
            Intercept & --2.05 & (--2.48, --1.70) \\ 
            $A_{ij}^{\text{DR}}$ & \phantom{--}3.41 & (\phantom{--}2.83, \phantom{--}4.11) \\
            $A_{ij}^{\text{DR}} * \operatorname{Age}_i$ & --0.20 & (--0.60, \phantom{--}0.21) \\ 
            $A_{ij}^{\text{DR}} * \operatorname{BMI}_i$ & --0.18 & (--0.35, \phantom{--}0.16) \\ 
            $A_{ij}^{\text{DR}} * Y_{i(j-1)}$ & \phantom{--}1.48 & (\phantom{--}1.24, \phantom{--}1.73) \\
            $\mu$ & --5.24 & (--5.80, --4.68) \\
            $\tau^2$ & 0.82 & (\phantom{--}0.47, \phantom{--}1.64) \\
            \hline
        \end{tabular}
    \end{center}
    
    \caption{Summary of outcome model parameters from MCMC.}
    \label{table:inspire-summary-table}
\end{table}

\begin{table}[!ht]
    \begin{center}
        \renewcommand{\arraystretch}{1.5}
        \begin{tabular}{|c|c|c|c|}
            \hline
            & $\operatorname{Cov}(\cdot, \Ut)$ & $\operatorname{Cov}(\cdot, \Ua)$ & $\operatorname{Cov}(\cdot, \Uy)$\\
            \hline
            $\operatorname{Cov}(\Ut, \cdot)$ & \phantom{--}0.61\ (\phantom{--}0.14,\ \phantom{--}4.11) & --0.01\ (--2.43,\ \phantom{--}1.72) & \phantom{--}0.03\ (--0.25,\ \phantom{--}0.42) \\
            \hline
            $\operatorname{Cov}(\Ua, \cdot)$ & 
            --0.01\ (--2.43,\ \phantom{--}1.72) & \phantom{--}0.70\ (\phantom{--}0.14,\ \phantom{--}9.18) & --0.01\ (--0.45,\ \phantom{--}0.45) \\ \hline
            $\operatorname{Cov}(\Uy, \cdot)$ & \phantom{--}0.03\ (--0.25,\ \phantom{--}0.42) & --0.01\ (--0.45,\ \phantom{--}0.45) & \phantom{--}0.17\ (\phantom{--}0.08,\ \phantom{--}0.35) \\ \hline
        \end{tabular}
    \end{center}
    
    \caption{Summary of correlation matrix fit from MCMC.}
    \label{table:inspire-correlation-matrix}
\end{table}

\end{document}